\pgfplotsset{compat=newest}
\pgfplotsset{width = 5.5cm,
			 every axis plot post/.append style = {line width = 1pt,
		 		every mark/.append style = {solid}}}
\newtheorem{theorem}{Theorem}[section]
\newtheorem{lemma}[theorem]{Lemma}
\newtheorem{corollary}[theorem]{Corollary}
\newtheorem{proposition}[theorem]{Proposition}
\newtheorem{remark}[theorem]{Remark}
\newtheorem{assumption}[theorem]{Assumption}
\newtheorem{example}[theorem]{Example}
\crefname{assumption}{assumption}{assumptions}
\newcommand\@enclose[3]{#1#3#2}
\newcommand\@@enclose[3]{\left#1#3\right#2}
\newcommand\@big@enclose[3]{\big#1#3\big#2}
\newcommand\@bigg@enclose[3]{\bigg#1#3\bigg#2}
\newcommand\@Bigg@enclose[3]{\Bigg#1#3\Bigg#2}
\newcommand\@Big@enclose[3]{\Big#1#3\Big#2}
\DeclareRobustCommand\en{
	\@ifnextchar *
	{\@firstoftwo{\@@enclose}} 
	{\@ifnextchar 4
		{\@firstoftwo{\@Bigg@enclose}}
		{\@ifnextchar 3
			{\@firstoftwo{\@bigg@enclose}}
			{\@ifnextchar 2
				{\@firstoftwo{\@Big@enclose}}
				{\@ifnextchar 1
					{\@firstoftwo{\@big@enclose}}
					{\@enclose}
	}}}}
}
\def\@setdelimsize#1#2{
	\def\@argStar{#1}
	\IfBooleanTF{#1}
	{\def\delimsize{\middle}
		\IfValueT{#2}{\PackageError{MathDelimiters}{Cannot provide both star and size}{Please specify one or the other}}}   
	{\IfValueTF{#2}{\def\delimsize{#2}}{\def\delimsize{}}}}
\def\xParseDeclarePairedDelimiter#1#2#3#4#5#6#7{
	\NewDocumentCommand{#1}{
		#2}
	{
		\begingroup
		\expandafter\IfValueTF\expandafter{#3}{#3}{\@setdelimsize{##1}{##2}}
		\expandafter\IfBooleanTF\expandafter{\@argStar}
		{\mathopen{}\mathclose\bgroup\left#4 #7 \aftergroup\egroup\right#5}
		{\mathopen{\delimsize#4}#7\mathclose{\delimsize#5}}
		\endgroup
	}
}
\def\@space{\nonscript\,}
\providecommand\given{}
\newcommand\@given[1][]{
	\@space#1\vert \allowbreak \@space\mathopen{}}
\newcommand\@suchthat[1][]{
	\@space : \allowbreak \@space\mathopen{}}
\def\xParseDeclareExpectation#1#2#3#4{
	\xParseDeclarePairedDelimiter{#1}
	{e{^_} s o m}
	{\@setdelimsize{##3}{##4}
		\renewcommand\given[1][\delimsize]{\@given[##1]}
		#2\IfValueT{##1}{^{##1}}\IfValueT{##2}{_{##2}}}
	#3#4{}
	{\renewcommand\given{\@given[\delimsize]}\@space##5\@space}}
\xParseDeclareExpectation{\E}{\mathbb{E}}[]
\xParseDeclareExpectation{\prob}{\mathbb{P}}[]
\xParseDeclareExpectation{\var}{\mathrm{Var}}[]
\xParseDeclareExpectation{\cov}{\mathrm{Cov}}[]
\DeclarePairedDelimiterXPP\Order[1]{\mathcal{O}}(){}{#1}
\DeclarePairedDelimiterXPP\order[1]{o}(){}{#1}
\DeclarePairedDelimiter\abs{\lvert}{\rvert}
\DeclarePairedDelimiter\norm{\|}{\|}
\DeclarePairedDelimiter\p{(}{)}
\DeclarePairedDelimiterXPP\set[1]{}\{\}{}{
	\renewcommand\given{\@suchthat}#1}
\newcommand*{\@@eqdef}{\textnormal{\fontsize{4pt}{4pt}\selectfont def}}
\newcommand*{\eqlabelleddef}{\hskip 0.03cm\mathop{\overset{\@@eqdef}{
			\resizebox{\widthof{\ensuremath{\mathop{\overset{\@@eqdef}{=}}}}}{\heightof{=}}{=}}}\hskip
	0.03cm}
\def\defeq{\coloneqq}
\def\eqdef{\eqqcolon}
\def\H#1{\mathbb H\p{#1}}
\def\I#1{\mathbb I_{#1}}
\def\dom#1{d_\Omega\p{#1}}
\def\dbar#1{\bar d_{\Omega}\p*{#1}}
\def\ceil#1{\left \lceil #1 \right \rceil}
\def\tol{\varepsilon}
\def\iid{\overset{\textnormal{i.i.d.}}{\sim}}
\def\dh{\H{g_\ell} - \H{g_{\ell-1}}}
\def\dhm{\H{g_\ell^{(f,m)}} - \H{g_{\ell-1}^{(c,m)}}}
\def\dhl{\Delta\mathbb{H}_\ell}
\def\mlmc{\mathcal{M}^*}
\def\work{\textnormal{Work}\p*{\mlmc; \tol}}
\DeclareMathOperator{\R}{\mathbb{R}}
\begin{document}

\title{Adaptive Multilevel Monte Carlo for Probabilities\thanks{Submitted 15 July, 2021}}
\author{Abdul-Lateef Haji-Ali\thanks{Heriot-Watt University (\href{mailto::A.HajiAli@hw.ac.uk}{A.HajiAli@hw.ac.uk}, \href{mailto::jws5@hw.ac.uk}{jws5@hw.ac.uk}).}
		\and Jonathan Spence\footnotemark[2]
		\and Aretha Teckentrup\thanks{University of Edinburgh (\href{mailto::a.teckentrup@ed.ac.uk}{a.teckentrup@ed.ac.uk})}
	}
\date{}
\maketitle

\begin{abstract}
\textbf{AMS Subject Classication: } 65C05, 62P05

\textbf{Keywords: } Multilevel Monte Carlo, Nested simulation, Risk estimation

We consider the numerical approximation of $\prob{G\in \Omega}$ where the $d$-dimensional random variable $G$ cannot be sampled directly, but there is a hierarchy of increasingly accurate approximations $\{G_\ell\}_{\ell\in\mathbb{N}}$ which can be sampled. The cost of standard Monte Carlo estimation scales poorly with accuracy in this setup since it compounds the approximation and sampling cost. A direct application of Multilevel Monte Carlo improves this cost scaling slightly, but returns sub-optimal computational complexities since estimation of the probability involves a discontinuous functional of $G_\ell$. We propose a general adaptive framework which is able to return the MLMC complexities seen for smooth or Lipschitz functionals of $G_\ell$. Our assumptions and numerical analysis are kept general allowing the methods to be used for a wide class of problems. We present numerical experiments on nested simulation for risk estimation, where $G = \E*{X|Y}$ is approximated by an inner Monte Carlo estimate. Further experiments are given for digital option pricing, involving an approximation of a $d$-dimensional SDE.
\end{abstract}

\section{Introduction}
This paper proposes general, efficient numerical methods to compute 
\begin{equation}\label{eqn:multidim-prob}
\prob{G\in\Omega}=\E*{\I{G\in\Omega}},\qquad 
\I{G\in \Omega} \defeq 
\begin{cases}
1 & G\in \Omega \\
0 & G \not \in \Omega 
\end{cases},
\end{equation}
where $G$ is a $d$-dimensional random variable which cannot be sampled directly. In \Cref{sect:assumpt}, we relate \eqref{eqn:multidim-prob} to the one-dimensional problem
\begin{equation}
\label{eqn:heaviside-problem}
	\prob{g>0} = \E{\H{g}},
\end{equation}
where $\H{g}$ is the Heaviside function, equal to 1 when $g\ge 0$ and to 0 otherwise. In most problems of interest, $g$ requires approximate sampling. We assume access to a hierarchy of increasingly accurate approximations $\{g_\ell\}_{\ell \in \mathbb{N}}$ converging to $g$ almost surely as $\ell\to \infty$. Approximate simulation of $g$ induces a bias in typical Monte Carlo methods for \eqref{eqn:heaviside-problem}, increasing the cost of standard Monte Carlo averages. In such situations, Multilevel Monte Carlo (MLMC) \cite{giles:2015review, Giles2008MLMC, Cliffe:2011} is often able to reduce the cost, but is known to suffer when the observable is discontinuous as in \eqref{eqn:multidim-prob} or \eqref{eqn:heaviside-problem} \cite{GilesMikeIMMC:2006, GilesHajiAli:2018, Elfverson:2016selectiverefinement}. Adaptive sampling techniques \cite{GilesHajiAli:2018,Brodie:2011,Elfverson:2016selectiverefinement} have proven successful in reducing the cost of Monte Carlo and MLMC for specific instances of \eqref{eqn:heaviside-problem}. This paper builds upon such methods to establish a general framework for this problem with an emphasis on ensuring applicability to wide ranging problems. Examples are discussed below. 
\begin{example}[Nested Simulation]
	\label{ex:nest}
	Equation \eqref{eqn:heaviside-problem} often arises in financial risk estimation. For example, many risk measures involve nested expectations of the form $g = \E*{X|Y}$ for some random variables $X, Y$ \cite{Gordy:2010, GilesHajiAli:2018, GilesHajiAli:2019sampling, GregoryJon2012Ccra}. Approximation of $g$ by $g_\ell$ is possible using an inner Monte Carlo average with $N_\ell\in \mathbb{N}$ samples. 
\end{example}
\begin{example}[Digital Option Pricing]
	\label{ex:sde}
	Let $S$ be the solution to the d-dimensional SDE
	\[
	\text{d}S(t) = a(t, S(t))\text{d}t + b(t, S(t))\text{d}W(t),
	\]
	at maturity $T$. If $S$ denotes the price of certain assets at maturity $T$, we set $G\equiv S$ and consider a financial derivative with unit payoff if $G\in \Omega$ and no payoff otherwise. The (non-discounted) value at time 0 of this option is of the form \eqref{eqn:multidim-prob}, where $G$ can be approximately sampled using SDE discretisation methods \cite{Kloeden:1999}. 
\end{example}

A related setup is discussed in \cite{Elfverson:2016selectiverefinement} and applied in \cite{Dodwell:2021refinement} to compute failure properties of systems governed by PDEs. In \cite{Elfverson:2016selectiverefinement}, the idea of selective refinement is used to adaptively refine MLMC samples based on the uncertainty of $g>0$. Selective refinement aims to reduce the cost of sampling level $\ell$ without affecting the approximation error of $\H{g}$. There, it is assumed that the error $\abs{g-g_\ell}$ is bounded when $g_\ell$ is near zero, excluding applications like \Cref{ex:nest,ex:sde}.

There is extensive research into Monte Carlo approximation of nested simulation problems as in \Cref{ex:nest}. Analysis of standard Monte Carlo methods for nested simulation is discussed in \cite{Gordy:2010}. Adaptivity is then combined with standard Monte Carlo methods for this problem in \cite{Brodie:2011}. Moreover, in \cite{GilesHajiAli:2018, GilesHajiAli:2019sampling} adaptive MLMC methods for nested simulation are discussed. Contrary to the selective refinement algorithm in \cite{Elfverson:2016selectiverefinement}, these methods aim to improve the approximation error of $\H{g}$ at level $\ell$ while keeping the work of sampling at level $\ell$ unaffected. This approach forms the basis for the present work.

An alternative approach to compute \eqref{eqn:heaviside-problem} via MLMC is to approximate $\H{g}$ by a Lipschitz function. This smoothing procedure is discussed in \cite{Giles:2015Smoothing} and an alternative smoothing procedure for a class of SDEs is discussed in \cite{bayerchiheb:2020}. These approaches require an explicit smoothing step, which the work presented here removes by using adaptivity to implicitly smooth the problem within the MLMC computation itself.\\

The key contributions of this paper are as follows:
\begin{itemize}
	\item A generalisation of the adaptive MLMC sampling scheme for nested simulation \cite{GilesHajiAli:2018, GilesHajiAli:2019sampling} is presented in \Cref{alg:ad-samp}. The new procedure requires less restrictive moment bounds on $g$ and is formulated in a general framework allowing for applications beyond nested simulation. 
	\item Numerical experiments show the adaptive MLMC scheme introduced here remains effective for nested simulation, with a slight relaxation of the sampling process used in \cite{GilesHajiAli:2018,GilesHajiAli:2019sampling}. Additional results show the scheme has an equally strong impact when applied to digital option pricing as in \Cref{ex:sde}.
\end{itemize}

\Cref{sect:assumpt} outlines the problem setup and necessary assumptions for this analysis, before discussing the link between problems \eqref{eqn:multidim-prob} and \eqref{eqn:heaviside-problem}. We describe the MLMC approach to \eqref{eqn:heaviside-problem} in \Cref{sect:mlmc-probs} and show how the complexity of MLMC suffers because $\H{g}$ is discontinuous. In \Cref{sect:ad-mlmc}, we introduce the adaptive MLMC procedure and analyse its benefits to the MLMC complexity. Numerical results are then presented in \Cref{sect:num-ex}. 

 \subsection{Problem Setup}
\label{sect:assumpt}
For the majority of this paper, we focus on the problem \eqref{eqn:heaviside-problem}. At the end of this section, we discuss how to extend the methods to general problems of the form \eqref{eqn:multidim-prob}. As is typical for MLMC, we assume the expected sampling cost of $g_\ell$, denoted $W_\ell$, increases geometrically with $\ell$. In particular,
\begin{equation}
\label{eqn:Wl}
W_\ell \lesssim 2^{\gamma\ell},\quad \text{for some } \gamma >0.
\end{equation}
The following assumption controls the strong approximation error of $g_\ell$.
\begin{assumption}
	\label{assumpt:q-moments}
	For some $2<q$, $\beta>0$ and positive valued random variable $\sigma_\ell$, define
	\begin{equation}
	\label{eqn:Z_ell}
	Z_\ell \defeq \frac{g_\ell-g}{\sigma_\ell 2^{-\beta\ell/2}},
	\end{equation}
	and assume $\E*{\abs*{Z_\ell}^q}$ is uniformly bounded in $\ell\ge 0$.
\end{assumption}
In this context, $\sigma_\ell$ represents fluctuations in the approximation uncertainty for a given instance of $g_\ell$. To implement MLMC successfully, we control the probability of sampling $g_\ell$ close to 0. In doing so, we introduce the parameter
\begin{equation}\label{eqn:delta}
	\delta_\ell \defeq \frac{g_\ell}{\sigma_\ell},
\end{equation}
which models the sample specific uncertainty in the sign of $g_\ell$ and thus $\H{g_\ell}$.
\begin{assumption}
		\label{assumpt:delta-bound}
		There exists $\delta, \rho_0> 0$ such that for all $x \le \delta$ we have
		\[
			\prob{\abs{\delta_\ell}  < x} \le \rho_0 x
		\]
		for all $\ell\ge 0$.
\end{assumption}

\Cref{assumpt:q-moments,assumpt:delta-bound} are enough to bound the strong error of approximations $\H{g_\ell}$, which underpins the complexity theory for MLMC approximation of \eqref{eqn:heaviside-problem}. In certain cases, tight bounds on the cost of MLMC require tight bounds on $\abs*{\E*{\H{g} - \H{g_\ell}}}$, which requires further assumptions (see \Cref{sect:mlmc-probs}).\\

It is important to remark here that the assumptions above allow for the simple extension to the general problem \eqref{eqn:multidim-prob} under equivalent assumptions. To see this, assume that (for $\norm{\cdot}$ being the Euclidean norm)
\[
	\dom{G}\defeq \min_{\omega\in \partial\Omega}\{\left \|G -\omega \right \| \}
\]
exists. Here, we are assuming the minimum distance to the boundary of $\Omega$ is attained by a point on the boundary. Then, \eqref{eqn:multidim-prob} is equivalent to \eqref{eqn:heaviside-problem} when
\[
	g = \dbar{G} \defeq
	\begin{cases}
		\dom{G} & G \in \Omega \\
		-\dom{G} & G \not\in \Omega
	\end{cases}
\]
is a signed distance. If we denote approximations of $G$ at level $\ell\in \mathbb{N}$ by $G_\ell$ then we have approximations $g_\ell \defeq \dbar{G_\ell}$ of $g$. The following result shows \Cref{assumpt:q-moments} holds under a similar condition on $G$.
\begin{lemma}
	\label{lemma:multidim}
	Assume that for some $2<q<\infty$ and random variable $\sigma_\ell >0$
	\[
		\E*{\p*{\frac{\left\|G - G_\ell \right\|}{\sigma_\ell2^{-\beta\ell/2}}}^q}<\infty
	\]
	holds for all $\ell \ge 0$. Then, \Cref{assumpt:q-moments} holds for $g\defeq \dbar{G}, g_\ell \defeq \dbar{G_\ell}$.
\end{lemma}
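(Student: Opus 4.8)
The plan is to reduce everything to the elementary fact that the signed distance $\dbar{\cdot}$ is $1$-Lipschitz with respect to the Euclidean norm, so that the approximation error in $g$ is pointwise dominated by the one in $G$. First I would record the pointwise inequality $\abs{\dbar{G}-\dbar{G_\ell}} \le \norm{G-G_\ell}$, which by the definition \eqref{eqn:Z_ell} of $Z_\ell$ gives, almost surely,
\[
\abs{Z_\ell} \;=\; \frac{\abs{\dbar{G}-\dbar{G_\ell}}}{\sigma_\ell 2^{-\beta\ell/2}} \;\le\; \frac{\norm{G-G_\ell}}{\sigma_\ell 2^{-\beta\ell/2}}.
\]
Raising to the $q$-th power and taking expectations then bounds $\E*{\abs*{Z_\ell}^q}$ by the quantity assumed finite in the statement; since the domination is pointwise and uniform in $\ell$, any uniform-in-$\ell$ bound on the latter transfers directly, which is exactly \Cref{assumpt:q-moments}.

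The content of the proof is therefore the Lipschitz estimate, which I would split according to whether $G$ and $G_\ell$ lie on the same side of $\partial\Omega$. If they do (both in $\Omega$, or both outside), then $\dbar{G}-\dbar{G_\ell} = \pm(\dom{G}-\dom{G_\ell})$ and the claim is the standard $1$-Lipschitz property of $x\mapsto \dom{x}=\min_{\omega\in\partial\Omega}\norm{x-\omega}$: taking a minimiser $\omega$ for $G_\ell$ and applying the triangle inequality to $G$ gives $\dom{G}\le\norm{G-\omega}\le\norm{G-G_\ell}+\norm{G_\ell-\omega}=\norm{G-G_\ell}+\dom{G_\ell}$, and symmetrically with the roles reversed.

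The only genuinely new case is the opposite-sides one, say $G\in\Omega$ and $G_\ell\notin\Omega$, where $\dbar{G}\ge 0\ge\dbar{G_\ell}$ and hence $\abs{\dbar{G}-\dbar{G_\ell}}=\dom{G}+\dom{G_\ell}$. Here I would argue that the segment $[G,G_\ell]$, being connected, cannot be covered by the two disjoint open sets $\operatorname{int}\Omega$ and $\rset^d\setminus\overline{\Omega}$, so it meets $\partial\Omega$ at a point $\omega$ on the segment; then $\norm{G-\omega}+\norm{\omega-G_\ell}=\norm{G-G_\ell}$, and combining this with $\dom{G}\le\norm{G-\omega}$ and $\dom{G_\ell}\le\norm{G_\ell-\omega}$ yields the bound. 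I expect this topological step to be the main (though mild) obstacle: it implicitly requires $\partial\Omega$ to separate $\Omega$ from its complement, which is what the standing assumption that the minimum defining $\dom{\cdot}$ is attained on $\partial\Omega$ provides. Degenerate cases where $G$ or $G_\ell$ lies exactly on $\partial\Omega$ are handled by noting the corresponding signed distance is $0$, so they reduce to the same-side argument, completing the proof.
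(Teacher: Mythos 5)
Your proposal is correct and follows essentially the same route as the paper: reduce \Cref{assumpt:q-moments} to the pointwise bound $\abs{\dbar{G}-\dbar{G_\ell}}\le\norm{G-G_\ell}$, prove the same-side case via the triangle (reverse triangle) inequality applied to a minimising boundary point, and handle the opposite-sides case via a point of $\partial\Omega$ on the segment joining $G$ and $G_\ell$. Your added remarks on the connectedness justification for that intermediate boundary point and on the degenerate case $G\in\partial\Omega$ are minor elaborations the paper leaves implicit, not a different argument.
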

\begin{proof}
	It is enough to show $\abs{g-g_\ell} \le \left\|G- G_\ell\right\|$ almost surely. There are two possibilities:

	The first option is either $G, G_\ell\in \Omega$ or $G, G_\ell\in \Omega^c$.  By symmetry, we can assume $G, G_\ell \in \Omega$ and $g \ge g_\ell$. Then there is $\omega, \omega^\prime$ such that $g = \left\|G - \omega \right\|$ and $g_\ell = \left\|G_\ell - \omega^\prime \right\|$ and
	\begin{align*}
	\abs{g-g_\ell} &= \left\| G - \omega \right \| - \left\|G_\ell -\omega^\prime\right\| \\
	&\le \left\| G - \omega^\prime \right \| - \left\|G_\ell -\omega^\prime\right\|  \\
	&\le \left\|G - G_\ell \right\|,
	\end{align*}
	by the reverse triangle inequality.

	Alternatively, we may assume $G\in \Omega$ and $G_\ell\in \Omega^c$. Then, there is $\hat\omega\in \partial \Omega$ on the straight line between $G, G_\ell$. For some $\omega, \omega^\prime$, we have
	\begin{align*}
	\abs{g-g_\ell} &= \left\| G - \omega \right \| + \left\|G_\ell -\omega^\prime\right\| \\
	&\le \left\| G - \hat \omega \right \| + \left\|G_\ell -\hat \omega\right\|\\
	&= \left\|G - G_\ell \right\|,
	\end{align*}
	where we use the fact that $\hat\omega$ is on the line from $G$ to $G_\ell$.
\end{proof}
Moreover, \Cref{assumpt:delta-bound} becomes an equivalent condition on the distribution of $\abs{\delta_\ell} = \dom{G_\ell}/\sigma_\ell$, where $\sigma_\ell$ satisfies the assumption of \Cref{lemma:multidim}.
 
\section{Multilevel Monte Carlo for Probabilities}
\label{sect:mlmc-probs}
In this section, we outline the use of standard MLMC methods \cite{Giles2008MLMC,Cliffe:2011,giles:2015review} for approximating \eqref{eqn:heaviside-problem}. In particular, we show that the discontinuity at 0 in the Heaviside function limits the effectiveness of standard MLMC for this problem. We begin by approximating $\prob*{g>0}$ by $\prob{g_L >0}$, where $L$ should be chosen large enough to control the approximation bias. Sampling $g$ at large levels $L$ is typically expensive. The key idea of MLMC  is to split this computation over levels $0\le \ell\le L$ using a telescopic sum. Specifically, using $\H{g_{-1}}\defeq 0$ 
\begin{equation}
\label{eqn:mlmc-estimator}
\begin{aligned}
\E*{\H{g}} \approx \E*{\H{g_L}} &= \sum_{\ell = 0}^L \E*{\dh}\\
&\approx \sum_{\ell = 0}^L\p*{\frac{1}{M_\ell}\sum_{m=1}^{M_\ell} \p*{\dhm}},\\
\end{aligned}
\end{equation}
where we approximate each expectation in the telescopic sum by an independent Monte Carlo sum with samples $\dhm\overset{\text{i.i.d}}{\sim}\dh$. Samples $g_\ell^{(f,m)}$ and $g_{\ell-1}^{(c,m)}$ should be closely correlated to reduce $\var{\dh}$, lowering the number of samples, $M_\ell$ required at level $\ell$. The following result bounds the total work of sampling \eqref{eqn:mlmc-estimator} within a given error tolerance. In the statement of this result and throughout, we use the operator $f_0\lesssim f_1$ to denote $f_0\le C\times f_1$. Here, $f_0, f_1$ depend on the problem parameters, specifically $g, g_\ell,\ell$ and the error bound $\tol^2$ defined in \Cref{thm:complexity}, whereas $0<C<\infty$ is an absolute constant specific to $f_0$ and $f_1$. In particular, $C$ is independent of $\ell$ and the error bound $\tol^2$.    \\

\begin{theorem}[\cite{Cliffe:2011,giles:2015review}]
	\label{thm:complexity}
	Let $\{\dhl\}_{\ell=0}^\infty$ be a sequence of random variables with $\prob{g>0} = \sum_{\ell=0}^\infty \E{\dhl}$. Assume the following rates of convergence for some $\gamma, \beta_\textnormal{ind}>0,\  \alpha_\textnormal{ind} \ge \frac{\min \p{\gamma, \beta_\textnormal{ind}}}{2}$:
		\begin{itemize}
			\item The expected work of sampling $\dhl$ is $W_\ell \lesssim 2^{\gamma\ell}$.
			\item The mean and variance of $\dhl$ converge to 0 with the following rates
			\begin{align}
			E_\ell \defeq \abs*{\E*{\dhl}} &\lesssim 2^{-\alpha_\textnormal{ind}\ell}. \label{weak-err}\\
			V_\ell \defeq \var{\dhl} &\lesssim 2^{-\beta_\textnormal{ind}\ell}, \label{str-err}
			\end{align}
		\end{itemize}
	Then, there is optimal $L$ and $\{M_\ell \}_{0\le\ell\le L}$ such that the total work of computing the MLMC estimator
	\begin{equation}\label{eqn:mlmcl}
	\mathcal{M}_{M_0,\dots,M_L}^{L}\defeq \sum_{\ell=0}^{L}  \p*{\frac{1}{M_\ell}\sum_{m=1}^{M_\ell} \dhl^{(m)} }, \quad \dhl^{(m)}\iid \dhl
	\end{equation}
	with mean square error satisfying 
	\(
		\E{\p{\prob{g>0} - \mathcal{M}_{M_0,\dots,M_L}^{L}\defeq \sum_{\ell=0}^{L} }^2} \le \tol^2
	\)
	is
	\[
	\textnormal{Work}\p*{\mathcal{M}_{M_0,\dots,M_L}^{L}, \tol} \lesssim 
	\begin{cases}
	\tol^{-2} & \beta_\textnormal{ind}>\gamma \\
	\tol^{-2}\p*{\log\tol}^2 & \beta_\textnormal{ind} = \gamma \\
	\tol^{-2-\p*{\gamma - \beta_\textnormal{ind}}/\alpha_\textnormal{ind}} & \beta_{\textnormal{ind}} < \gamma 
	\end{cases}.
	\]
	We will denote the estimator \eqref{eqn:mlmcl} with optimal $L$ and $\{M_\ell \}_{0\le\ell\le L}$ by $\mlmc$.
\end{theorem}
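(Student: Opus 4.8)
This is the classical MLMC complexity theorem, so the plan is to follow the standard argument of \cite{Cliffe:2011,giles:2015review}: decompose the mean square error into a squared-bias term and a statistical-error term, choose $L$ to control the bias, and then choose the $M_\ell$ to minimise the total work subject to a bound on the statistical error.

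\textbf{Error decomposition and choice of $L$.} Since the Monte Carlo sums at different levels are independent and $\E*{\mathcal{M}_{M_0,\dots,M_L}^L} = \sum_{\ell=0}^L \E*{\dhl}$, the mean square error splits as
\[
\E*{\p*{\prob{g>0}-\mathcal{M}_{M_0,\dots,M_L}^L}^2}
= \p*{\sum_{\ell=L+1}^\infty\E*{\dhl}}^2 + \sum_{\ell=0}^L \frac{V_\ell}{M_\ell},
\]
and I would require each of the two terms to be at most $\tol^2/2$. For the squared-bias term, \eqref{weak-err} gives $\abs*{\sum_{\ell>L}\E*{\dhl}} \le \sum_{\ell>L} E_\ell \lesssim 2^{-\alpha_\textnormal{ind}L}$, so it suffices to take $L = \ceil{\alpha_\textnormal{ind}^{-1}\log_2\p*{\tol^{-1}} + c}$ for a suitable absolute constant $c$; this yields $2^L\lesssim\tol^{-1/\alpha_\textnormal{ind}}$, hence $2^{\gamma L}\lesssim\tol^{-\gamma/\alpha_\textnormal{ind}}$, a bound needed at the end.

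\textbf{Optimal sample sizes and total work.} With $L$ fixed, minimising $\sum_{\ell=0}^L M_\ell W_\ell$ over positive reals $M_\ell$ subject to $\sum_{\ell=0}^L V_\ell/M_\ell\le\tol^2/2$ is a Lagrange-multiplier (equivalently Cauchy--Schwarz) problem solved by $M_\ell\propto\sqrt{V_\ell/W_\ell}$; rounding up to integers, one takes $M_\ell = \ceil{2\tol^{-2}\sqrt{V_\ell/W_\ell}\sum_{k=0}^L\sqrt{V_kW_k}}$. The induced work then satisfies
\[
\textnormal{Work}\p*{\mathcal{M}_{M_0,\dots,M_L}^L,\tol} = \sum_{\ell=0}^L M_\ell W_\ell \lesssim \tol^{-2}\p*{\sum_{\ell=0}^L\sqrt{V_\ell W_\ell}}^2 + \sum_{\ell=0}^L W_\ell,
\]
the last term coming from the rounding. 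By \eqref{str-err} and \eqref{eqn:Wl}, $\sqrt{V_\ell W_\ell}\lesssim 2^{(\gamma-\beta_\textnormal{ind})\ell/2}$, so summing the geometric series gives $\sum_{\ell=0}^L\sqrt{V_\ell W_\ell}$ equal to $\Order{1}$ when $\beta_\textnormal{ind}>\gamma$, to $\Order{L}=\Order{\abs*{\log\tol}}$ when $\beta_\textnormal{ind}=\gamma$, and to $\Order{2^{(\gamma-\beta_\textnormal{ind})L/2}}=\Order{\tol^{-(\gamma-\beta_\textnormal{ind})/(2\alpha_\textnormal{ind})}}$ when $\beta_\textnormal{ind}<\gamma$; substituting these, together with $\sum_{\ell=0}^L W_\ell\lesssim 2^{\gamma L}\lesssim\tol^{-\gamma/\alpha_\textnormal{ind}}$, produces exactly the three stated bounds.

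\textbf{The main obstacle.} The one point requiring care is checking that the rounding term $\tol^{-\gamma/\alpha_\textnormal{ind}}$ never dominates the main term, and this is precisely what the hypothesis $\alpha_\textnormal{ind}\ge\min(\gamma,\beta_\textnormal{ind})/2$ buys: when $\beta_\textnormal{ind}\ge\gamma$ it forces $\gamma/\alpha_\textnormal{ind}\le 2$, while when $\beta_\textnormal{ind}<\gamma$ it forces $\beta_\textnormal{ind}/\alpha_\textnormal{ind}\le 2$, i.e.\ $\gamma/\alpha_\textnormal{ind}\le 2+(\gamma-\beta_\textnormal{ind})/\alpha_\textnormal{ind}$; in either regime $\tol^{-\gamma/\alpha_\textnormal{ind}}$ is absorbed into the leading term. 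The rest (that $L\ge 0$ for $\tol$ small enough, that the constant in the definition of $M_\ell$ can be chosen to meet the variance budget, and the geometric-series estimates themselves) is routine bookkeeping.
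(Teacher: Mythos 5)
Your proposal is correct and is precisely the standard complexity argument from \cite{Cliffe:2011,giles:2015review}, which is also what the paper relies on: it states \Cref{thm:complexity} as a cited result and gives no proof of its own. The bias/variance split, the choice $M_\ell\propto\sqrt{V_\ell/W_\ell}$, the geometric-series case analysis, and in particular your observation that the hypothesis $\alpha_\textnormal{ind}\ge\min(\gamma,\beta_\textnormal{ind})/2$ is exactly what absorbs the rounding term $\sum_\ell W_\ell\lesssim\tol^{-\gamma/\alpha_\textnormal{ind}}$, all match the standard treatment.
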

\begin{remark}
	\label{remark:complexity}
	\Cref{thm:complexity} can be applied to the MLMC estimator \eqref{eqn:mlmc-estimator} by taking $\dhl \defeq \dh$, in \Cref{sect:ad-mlmc} we see $\dhl$ take a slightly different form to accommodate adaptive approximation of $g$. $E_\ell$ \eqref{weak-err} and $V_\ell$ \eqref{str-err} are the  bias and variance of the multilevel correction, respectively. Rather than prove convergence rates for these terms directly, we provide stronger results on $\E{\p*{\H{g} - \H{g_\ell}}^2}, \abs{\E{\H{g}-\H{g_\ell}}}$. The bound on $\work$ is sometimes referred to as the complexity of $\mlmc$, since it describe how the total work scales as the error decreases. Replacing $\H{\cdot}$ with a smooth/Lipschitz functional, it follows from \Cref{assumpt:q-moments} and \cite{Cliffe:2011, giles:2015review} that \Cref{thm:complexity} holds for $\beta_{\textnormal{ind}}=\beta$ and we see $\tol^{-2}$ complexity for $\beta > \gamma$.  In this paper, we refer to $\tol^{-2}$ as the `canonical' complexity since it is the same as seen for standard Monte Carlo with exact sampling of $g$.
\end{remark}

The following result provides a bound on $\E{\p{\dh}^2}$ under the assumptions in \Cref{sect:assumpt}. The rate is worse than that of smooth/Lipschitz functionals mentioned in \Cref{remark:complexity}, since we make an $\Order{1}$ approximation error in $\H{g} - \H{g_\ell}$ whenever $g, g_\ell$ lie on opposite sides of 0. 
\begin{proposition}[Variance With General Assumptions]
	\label{thm:standard-sampling-var}
	Under \Cref{assumpt:q-moments,assumpt:delta-bound} we have
	\(
	\E{\p{\H{g} - \H{g_\ell}}^2} \lesssim 2^{-\p{\frac{q}{q+1}}\ell\beta/2}.
	\)
\end{proposition}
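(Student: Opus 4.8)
The plan is to recognise that $\p{\H{g}-\H{g_\ell}}^2$ is the indicator of the event that $g$ and $g_\ell$ lie on opposite sides of $0$. The first — and really the only conceptual — step is the elementary observation that on this event the distance from $g_\ell$ to the origin is controlled by the approximation error: if $g\ge 0>g_\ell$ then $\abs{g_\ell}=-g_\ell\le g-g_\ell=\abs{g-g_\ell}$, and the case $g<0\le g_\ell$ is symmetric. Dividing by $\sigma_\ell>0$ and using the definitions \eqref{eqn:Z_ell}, \eqref{eqn:delta}, this shows pointwise that $\p{\H{g}-\H{g_\ell}}^2 \le \I{\abs{\delta_\ell}\le \abs{Z_\ell}2^{-\beta\ell/2}}$, so that $\E{\p{\H{g}-\H{g_\ell}}^2}\le\prob{\abs{\delta_\ell}\le\abs{Z_\ell}2^{-\beta\ell/2}}$.

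The second step is a truncation at a free level $x>0$. Since $\abs{\delta_\ell}\le\abs{Z_\ell}2^{-\beta\ell/2}$ together with $\abs{\delta_\ell}>x$ forces $\abs{Z_\ell}\ge x2^{\beta\ell/2}$, a union bound gives
\[
\prob{\abs{\delta_\ell}\le\abs{Z_\ell}2^{-\beta\ell/2}} \le \prob{\abs{\delta_\ell}\le x} + \prob{\abs{Z_\ell}\ge x2^{\beta\ell/2}}.
\]
I would bound the first term by $\rho_0 x$ using \Cref{assumpt:delta-bound} (legitimate provided $x\le\delta$), and the second by $C\,x^{-q}2^{-q\beta\ell/2}$ via Markov's inequality applied to the $q$-th moment, which is uniformly bounded by \Cref{assumpt:q-moments}.

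The final step is to optimise $x + Cx^{-q}2^{-q\beta\ell/2}$ over $x$. The minimiser is $x\asymp 2^{-\frac{q}{q+1}\beta\ell/2}$, and with this choice both terms are of order $2^{-\frac{q}{q+1}\beta\ell/2}$, which is exactly the claimed rate $2^{-\p{\frac{q}{q+1}}\ell\beta/2}$. I do not anticipate a real obstacle: the whole argument is a standard one-parameter truncation, and the exponent $\tfrac{q}{q+1}$ is precisely what falls out of balancing the linear anti-concentration bound for $\delta_\ell$ against the $q$-th-moment control of the error. The only point requiring a word of care is that the optimal $x$ must satisfy $x\le\delta$; since it tends to $0$ as $\ell\to\infty$ this holds for all $\ell$ past some finite $\ell_0$, and the finitely many remaining levels are absorbed into the implied constant using the trivial bound $\E{\p{\H{g}-\H{g_\ell}}^2}\le 1$.
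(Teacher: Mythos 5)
Your proposal is correct and follows essentially the same route as the paper's proof: bound $\p{\H{g}-\H{g_\ell}}^2$ by $\I{\abs{Z_\ell}\ge 2^{\beta\ell/2}\abs{\delta_\ell}}$, split at a threshold, apply \Cref{assumpt:delta-bound} to one piece and Markov's inequality with the $q$-th moment from \Cref{assumpt:q-moments} to the other, then balance the two terms at $x\asymp 2^{-\frac{q}{q+1}\beta\ell/2}$. The paper handles the constraint $x\le\delta$ by directly choosing $\psi=\min(1,\delta)2^{-\frac{q}{q+1}\ell\beta/2}$ rather than absorbing finitely many levels into the constant, but this is an immaterial difference.
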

\begin{proof}
	We compute
	\[
	\begin{aligned}
	\E{\p{\H{g} - \H{g_\ell}}^2} &\leq \E{\I{\abs{g - g_\ell} \geq \abs{g_\ell}}} \\
	&= \E{\I{\abs{Z_\ell} \geq 2^{\ell\beta/2}\abs{\delta_\ell}}},
	\end{aligned}
	\]
	where $Z_\ell$ and $\delta_\ell$ are as in \eqref{eqn:Z_ell} and \eqref{eqn:delta}. It follows that
	\[
	\begin{aligned}
	\E{\I{\abs{Z_\ell} \geq 2^{\ell\beta/2}\abs{\delta_\ell}}} &= \E{\I{\abs{Z_\ell} \geq 2^{\ell\beta/2}\abs{\delta_\ell}}\I{\abs{\delta_\ell}\leq\psi}} + \E{\I{\abs{Z_\ell} \geq 2^{\ell\beta/2}\abs{\delta_\ell}}\I{\abs{\delta_\ell} \geq \psi}} \\
	&\leq \E{\I{\abs{\delta_\ell}\leq\psi}} + \E{\I{\abs{Z_\ell} \geq 2^{\ell\beta/2}\psi}} \\
	&\leq \rho_0\psi +(2^{\ell\beta/2}\psi)^{-q}\E{\abs{Z_\ell}^q},
	\end{aligned}
	\] 
	where we have used \Cref{assumpt:delta-bound}. Then we set  $\psi = \min(1, \delta)2^{-\p{\frac{q}{q+1}}\ell\beta/2}$ to get the previous two terms of equal rate, which is the variance convergence rate.
\end{proof}
\begin{remark}
	\label{remark:weakbound}
	\Cref{thm:standard-sampling-var} also proves an upper bound on $E_\ell$ for $\dhl=\dh$ \eqref{weak-err} since we have
	\(
		\abs*{\E*{\dh}} \le \E*{\p*{\dh}^2}.
	\)
\end{remark}
In the context of \Cref{thm:complexity}, \Cref{thm:standard-sampling-var} shows $\beta_{\textnormal{ind}} = \p{\frac{q}{q+1}}\frac{\beta}{2}$ and we only observe $\tol^{-2}$ complexity when $\beta>2\p{\frac{q+1}{q}}\gamma$. In many examples, including those discussed here, $\beta \le 2\gamma$ and we need tight bounds on $E_\ell$ \eqref{weak-err} to state accurate complexities. To derive tighter bounds than \Cref{remark:weakbound} we require further assumptions.
\begin{assumption}[\cite{Gordy:2010}]
	\label{assumpt:weak-err-st}
	Let $\rho_\ell (y, z)$ be the joint density of $\delta_\ell$ \eqref{eqn:delta} and $Z_\ell$ \eqref{eqn:Z_ell}, defined for some $\beta>0$. Assume that for all $\ell$, $\rho_\ell$ is twice differentiable in $y$ and
	\[
	\bigg|\frac{\partial^i}{\partial y^i}\rho_\ell(y, z) \bigg| \leq p_{i, \ell}(z), \quad \sup_\ell \int_{\mathbb{R}}|z|^{j} p_{i, \ell}(z)\text{d}z < \infty,
	\]
	for $i = 0, 1, 2$ and   $0\le j\le q+2$ for some $q > 2$.
\end{assumption}

\begin{assumption}
	\label{assumpt:weak-err-nondiscont}
	For $Z_\ell, \beta$ as in \Cref{assumpt:q-moments}, we have
	\(
	\left |\E*{Z_\ell} \right | \lesssim 2^{\ell\p*{\beta/2-\alpha}},
	\)
	for some $\frac{\beta}{2}\le \alpha \le \beta$.
\end{assumption}
From \eqref{eqn:Z_ell} we see that \Cref{assumpt:weak-err-nondiscont} bounds $\E*{\sigma_\ell^{-1}\p*{g-g_\ell}}\lesssim 2^{-\alpha\ell}$. \Cref{assumpt:weak-err-nondiscont} is instead expressed in terms of $Z_\ell$ to align with the analysis in \Cref{sect:ad-mlmc} (see \Cref{assumpt:weak-err-adapt}). We stress that these assumptions are required only to obtain better convergence rates of $E_\ell$. Reasonable results can still be obtained using \Cref{remark:weakbound} when they are false. Nonetheless, \Cref{assumpt:weak-err-st} also provides slightly better bounds for $\E{\p{\dh}^2}$. For completeness, we state this result below. 
\begin{proposition}[Variance With Strict Assumptions]
	\label{prop:str-err-strict-assumptions}
	Under \Cref{assumpt:weak-err-st} it follows that
	\(
	\E*{\p*{\H{g} - \H{g_\ell}}^2} \lesssim 2^{-\ell\beta/2}.
	\)
\end{proposition}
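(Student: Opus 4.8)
The plan is to follow the same reduction already used in the proof of \Cref{thm:standard-sampling-var}, and then, at the final step, to replace the appeal to \Cref{assumpt:delta-bound} by the stronger density bound in \Cref{assumpt:weak-err-st}; this is precisely what upgrades the exponent $\tfrac{q}{q+1}\tfrac{\beta}{2}$ to $\tfrac{\beta}{2}$. First I would recall that $\p{\H{g}-\H{g_\ell}}^2=\I{\H{g}\ne\H{g_\ell}}$ and that $\H{g}$ and $\H{g_\ell}$ can disagree only when $\abs{g-g_\ell}\ge\abs{g_\ell}$; dividing through by $\sigma_\ell>0$ and using the definitions \eqref{eqn:Z_ell}, \eqref{eqn:delta}, this event equals $\{\abs{Z_\ell}\ge 2^{\ell\beta/2}\abs{\delta_\ell}\}$, exactly as in \Cref{thm:standard-sampling-var}. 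Hence
\[
\E*{\p*{\H{g}-\H{g_\ell}}^2}\ \le\ \prob{\abs{Z_\ell}\ge 2^{\ell\beta/2}\abs{\delta_\ell}}\ =\ \prob{\abs{\delta_\ell}\le 2^{-\ell\beta/2}\abs{Z_\ell}}.
\]

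Next I would express this last probability as an integral against the joint density $\rho_\ell$ of $(\delta_\ell,Z_\ell)$ supplied by \Cref{assumpt:weak-err-st}, namely
\[
\prob{\abs{\delta_\ell}\le 2^{-\ell\beta/2}\abs{Z_\ell}}\ =\ \int_{\rset}\p*{\int_{\abs{y}\le 2^{-\ell\beta/2}\abs{z}}\rho_\ell(y,z)\D y}\D z .
\]
For each fixed $z$ the inner integral runs over a $y$-interval of length $2\cdot 2^{-\ell\beta/2}\abs{z}$, and $0\le\rho_\ell(y,z)\le p_{0,\ell}(z)$ pointwise by the $i=0$ case of \Cref{assumpt:weak-err-st}, so the inner integral is at most $2\cdot 2^{-\ell\beta/2}\abs{z}\,p_{0,\ell}(z)$. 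Integrating in $z$ and using $\sup_\ell\int_{\rset}\abs{z}\,p_{0,\ell}(z)\D z<\infty$ (the $i=0,\ j=1$ case of \Cref{assumpt:weak-err-st}) then yields $\E*{\p*{\H{g}-\H{g_\ell}}^2}\lesssim 2^{-\ell\beta/2}$, as claimed.

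I do not expect a genuine obstacle here; the argument is essentially a one-line density estimate bolted onto the reduction from \Cref{thm:standard-sampling-var}. The only points deserving a word of care are the measure-zero ambiguity in the convention for $\H{\cdot}$ at $0$ (harmless, since $(\delta_\ell,Z_\ell)$ is absolutely continuous) and the fact that only the crudest consequence of \Cref{assumpt:weak-err-st} is needed, namely the uniform $L^1$-type bound on $\abs{z}\,p_{0,\ell}(z)$; the twice-differentiability in $y$ and the higher-moment bounds are reserved for the sharper weak-error estimates elsewhere and play no role in this proposition.
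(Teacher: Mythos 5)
Your argument is correct, and it diverges from the paper's at the decisive step. Both proofs reduce $\E{\p{\H{g}-\H{g_\ell}}^2}$ to the probability of the event $\br{\abs{Z_\ell}\ge 2^{\ell\beta/2}\abs{\delta_\ell}}$ and write that probability as $\int_{\rset}\int_{\abs{y}\le b_\ell^{-1}\abs{z}}\rho_\ell(y,z)\D y\D z$ with $b_\ell=2^{\ell\beta/2}$. From there the paper Taylor-expands $\rho_\ell(\cdot,z)$ about $y=0$, argues the zeroth-order term dominates, and lands on the bound $b_\ell^{-1}\int_{\rset}\abs{z}\rho_\ell(0,z)\D z$, i.e.\ a constant proportional to $\E*{\abs{Z_\ell}\given\delta_\ell=0}$ (with a side remark handling the case $\int_{\rset}\rho_\ell(0,z)\D z=0$). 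You instead bound the inner integral by (length of the $y$-interval) $\times$ (the envelope $p_{0,\ell}(z)$) and invoke only the $i=0$, $j=1$ moment bound. Your route is strictly more elementary: it needs no differentiability of $\rho_\ell$ in $y$, no Taylor remainder control, and no case analysis on the marginal density at $0$ — only the uniform envelope and its first moment. What the paper's route buys is a sharper constant, namely one controlled by the density restricted to the slice $y=0$ (the only region that matters asymptotically as the strip shrinks), and it sets up the Taylor-expansion machinery that is reused essentially verbatim in \Cref{thm:standard-bias}, \Cref{thm:var-bound-strong} and \Cref{prop:weak-err-ad}, where the sign structure and higher-order terms genuinely matter. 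For the stated rate $2^{-\ell\beta/2}$ the two are equivalent, and your observation that only the crudest consequence of \Cref{assumpt:weak-err-st} is needed here is accurate.
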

\begin{proof}
	We have 
	\[
	\begin{aligned}
	\E*{\p*{\H{g} - \H{g_\ell}}^2} &\le \E*{\I{\abs{Z_\ell}\ge b_\ell\abs{\delta_\ell}} }\\
	& = \int_{\R}\int_{-b_\ell^{-1}\abs{z}}^{b_\ell^{-1}\abs{z}} \rho_\ell (y, z)\text{d}y\text{d}z \\
	& = \int_{\R}\int_{-b_\ell^{-1}\abs{z}}^{b_\ell^{-1}\abs{z}} \rho_\ell (0, z) + y\frac{\partial}{\partial y}\rho_\ell(0, z) + \frac{y^2}{2}\frac{\partial^2}{\partial y^2}\rho_\ell(\hat y, z)\text{d}y\text{d}z.
	\end{aligned}
	\]
	If $\int_{\R}\rho_\ell(0,z)\text{d}z = 0$, we can eliminate the contribution of $\rho_\ell(0, z)$, thus we may assume $\int_{\R}\rho_\ell(0,z)\text{d}z > 0$. \Cref{assumpt:weak-err-st} shows that the double integral over $\rho_\ell(0, z)$ gives the dominant term. Thus,
	\[
	\begin{aligned}
	\E*{\p*{\H{g} - \H{g_\ell}}^2} &\lesssim b_\ell^{-1} \int_{\R}\abs{z}\rho_\ell(0, z)\text{d}z \\
	&= 2^{-\ell\beta/2} \E*{\abs{Z_\ell}\ \vline \ \delta_\ell = 0}
	\end{aligned}
	\]
	proving the result for $\E*{\abs{Z_\ell}\ \vline \ \delta_\ell = 0} < \infty$.
\end{proof}

The stricter conditions also give a tighter bound on the $E_\ell$ than \Cref{remark:weakbound}, and hence better MLMC complexity when $\beta < 2\gamma$.
\begin{proposition}[\cite{Gordy:2010} Proposition 1]
	\label{thm:standard-bias}
	Let \Cref{assumpt:weak-err-st,assumpt:weak-err-nondiscont} hold for some $\beta>0,\ \frac{\beta}{2}\le\alpha\le \beta$. Then, 
	\(
	\left |\E{\H{g} - \H{g_\ell}}\right | \lesssim 2^{-\alpha\ell}.
	\)
\end{proposition}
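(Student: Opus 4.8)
The plan is to expand $\mathbb{H}(g) - \mathbb{H}(g_\ell)$ in terms of the joint density $\rho_\ell$ of $(\delta_\ell, Z_\ell)$, exactly as in the proof of \Cref{prop:str-err-strict-assumptions}, but now keeping track of the \emph{sign} rather than passing to an absolute value. First I would write $g = \sigma_\ell(\delta_\ell + 2^{-\beta\ell/2} Z_\ell)$ and $g_\ell = \sigma_\ell \delta_\ell$, so that, on the event $\sigma_\ell > 0$, the difference $\mathbb{H}(g) - \mathbb{H}(g_\ell)$ is $+1$ when $g_\ell < 0 \le g$, is $-1$ when $g \le 0 < g_\ell$, and is $0$ otherwise. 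Setting $b_\ell \defeq 2^{\ell\beta/2}$, the event $\{g_\ell < 0 \le g\}$ becomes $\{-b_\ell^{-1} Z_\ell \le \delta_\ell < 0\}$ and similarly for the other event, so that
\[
\E{\H{g} - \H{g_\ell}} = \int_{\R}\int_{-b_\ell^{-1}\abs{z}}^{0} \operatorname{sgn}(z)\,\rho_\ell(y,z)\,\text{d}y\,\text{d}z - \int_{\R}\int_{0}^{b_\ell^{-1}\abs{z}}\operatorname{sgn}(z)\,\rho_\ell(y,z)\,\text{d}y\,\text{d}z,
\]
up to a harmless sign bookkeeping (the two inner integrals can be combined into $-\int_{-b_\ell^{-1}\abs z}^{b_\ell^{-1}\abs z} y\,(\text{something})$-type expressions after Taylor expansion). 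The key point is that each inner $y$-interval has length $O(b_\ell^{-1}|z|)$, so a crude bound already gives $O(b_\ell^{-1})= O(2^{-\beta\ell/2})$; to get the sharper $2^{-\alpha\ell}$ I need the cancellation coming from \Cref{assumpt:weak-err-nondiscont}.

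Next I would Taylor-expand $\rho_\ell(y,z)$ in $y$ around $y=0$ to second order: $\rho_\ell(y,z) = \rho_\ell(0,z) + y\,\partial_y\rho_\ell(0,z) + \tfrac{y^2}{2}\partial_y^2\rho_\ell(\hat y, z)$. Substituting and integrating in $y$ over the two intervals $[-b_\ell^{-1}|z|,0]$ and $[0, b_\ell^{-1}|z|]$, the $\rho_\ell(0,z)$ terms from the two intervals cancel against each other (both intervals have the same length $b_\ell^{-1}|z|$ and the sign flips between the $g_\ell<0\le g$ and $g\le 0<g_\ell$ contributions). The $y\,\partial_y\rho_\ell(0,z)$ terms integrate to $\pm\tfrac12 b_\ell^{-2} z^2 \partial_y\rho_\ell(0,z)$ and are $O(b_\ell^{-2})\int|z|^2 p_{1,\ell}(z)\,\text{d}z = O(2^{-\beta\ell})$ by \Cref{assumpt:weak-err-st}, which is even smaller. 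The second-order remainder is likewise $O(b_\ell^{-3})\int|z|^3 p_{2,\ell}(z)\,\text{d}z = O(2^{-3\beta\ell/2})$. The one surviving $O(b_\ell^{-1})$ contribution is the piece that does \emph{not} cancel: this is where $\operatorname{sgn}(z)$ and the geometry of the two regions force a leftover linear-in-$z$ term, and it equals (up to constants) $b_\ell^{-1}\int_{\R} z\,\rho_\ell(0,z)\,\text{d}z = 2^{-\beta\ell/2}\,\E*{Z_\ell \mid \delta_\ell = 0}\cdot c$ for the density value $c$ of $\delta_\ell$ at $0$ — but more relevantly, tracking it carefully this term is proportional to $b_\ell^{-1}\,\E{Z_\ell \,\I{\delta_\ell \approx 0}}$, which by \Cref{assumpt:weak-err-nondiscont} (giving $\abs{\E{Z_\ell}}\lesssim 2^{\ell(\beta/2-\alpha)}$) combines to $2^{-\beta\ell/2}\cdot 2^{\ell(\beta/2-\alpha)} = 2^{-\alpha\ell}$.

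The main obstacle, and the step I would be most careful about, is precisely identifying \emph{which} term carries the $\E{Z_\ell}$ factor: the naive expansion produces a $\rho_\ell(0,z)$ term that looks like it should be $O(b_\ell^{-1})$ and independent of $\alpha$, and one must see that it cancels between the two orientation events, leaving behind a term whose $z$-moment is exactly $\E{Z_\ell\mid\delta_\ell=0}$ (or an analogous conditional-at-zero quantity) rather than $\E{\abs{Z_\ell}\mid \delta_\ell=0}$. Making this rigorous requires justifying the interchange of the Taylor expansion with the $y$-integration (fine, since the interval is finite), justifying the $z$-integrability of each remainder via the bounds in \Cref{assumpt:weak-err-st} (the hypotheses on $\int|z|^j p_{i,\ell}(z)\,\text{d}z$ for $j$ up to $q+2$ were stated exactly so that $|z|^2$ and $|z|^3$ weights are covered), and finally relating $\int z\,\rho_\ell(0,z)\,\text{d}z$ to $\E{Z_\ell}$ — this last identification is really a consequence of writing $\E{Z_\ell}=\int\int z\,\rho_\ell(y,z)\,\text{d}y\,\text{d}z$ and again Taylor-expanding in $y$, so that the dominant contribution to $\E{Z_\ell}$ is $\int z\,\rho_\ell(0,z)\,\text{d}z$ up to higher-order corrections. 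Since this is already Proposition 1 of \cite{Gordy:2010}, I would follow that argument's structure and mainly check that the present (slightly weaker) moment assumptions suffice, which they do because only finitely many derivatives and a bounded polynomial weight in $z$ enter.
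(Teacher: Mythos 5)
Your proposal is correct and follows essentially the same route as the paper: write the bias as the integral of $\rho_\ell(y,z)$ over the signed strip between $y=0$ and $y=2^{-\beta\ell/2}z$, Taylor-expand in $y$ at $0$, identify the leading term as $2^{-\beta\ell/2}\int_{\R} z\,\rho_\ell(0,z)\,\mathrm{d}z \lesssim 2^{-\beta\ell/2}\cdot 2^{\ell(\beta/2-\alpha)}$ via \Cref{assumpt:weak-err-nondiscont}, and absorb the first- and second-order remainders as $\Order{2^{-\beta\ell}}$ using \Cref{assumpt:weak-err-st}. The only quibbles are cosmetic: your $Z_\ell$ has the opposite sign to \eqref{eqn:Z_ell} (harmless, since the assumptions are sign-symmetric) and the intermediate $\operatorname{sgn}(z)$ bookkeeping is cleaner if written as the single signed integral $\int_{\R}\int_0^{2^{-\beta\ell/2}z}\rho_\ell(y,z)\,\mathrm{d}y\,\mathrm{d}z$, which is exactly what the paper does.
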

\begin{proof}
	For $\rho_\ell(y,z)$ given by \Cref{assumpt:weak-err-st} we have
	\[
	\E{\H{g}} =  \int_\mathbb{R} \int_{2^{-\beta\ell/2}z}^\infty \rho_\ell(y, z)\text{d}y\text{d}z.
	\]
	Thus
	\[
	\begin{aligned}
	\left |\E{\H{g} - \H{g_\ell}}\right | &= \left|\E{\H{g_\ell}} - \E{\H{g}}\right| \\
	&= \left|\int_\mathbb{R} \int_{0}^{2^{-\beta\ell/2}} \rho_\ell(y, z)\text{d}y\text{d}z\right|.
	\end{aligned}
	\]
	A Taylor expansion gives
	\begin{equation}
	\label{eqn:taylor-rho}
	\rho_\ell(y, z) = \rho_\ell(0, z) + y\frac{\partial}{\partial y}\rho_\ell(0, z) + \frac{y^2}{2}\frac{\partial^2}{\partial y^2}\rho_\ell(\hat y, z),
	\end{equation}
	for some $\hat y\in [0, y]$. Inserting this into the double integral above and using \Cref{assumpt:weak-err-st,assumpt:weak-err-nondiscont} gives 
	\[
	\begin{aligned}
	E_\ell &\leq \left|2^{-\beta\ell/2}\int_\mathbb{R} z\rho_{\ell}(0, z)\text{d}z\right| + 2^{-\beta\ell}\int_\mathbb{R} |z|^2 p_{1, \ell}(z)\text{d}z \\
	&\qquad + 2^{-3\beta\ell/2}\int_\mathbb{R}|z|^3p_{2, \ell}(z)\text{d}z\\
	&\lesssim 2^{-\beta\ell/2}\left|\E{Z_\ell\ \vline\ \delta_\ell = 0}\right| + \Order*{2^{-\beta\ell}} \\
	&\lesssim 2^{-\alpha\ell},
	\end{aligned}
	\]
	where we used \Cref{assumpt:weak-err-nondiscont} and the definition of $Z_\ell$ to bound $\E{Z_\ell\ \vline\ \delta_\ell = 0} \lesssim 2^{\ell\p*{\beta/2-\alpha}}$ and assume $\int_{\R}\rho_\ell(0,z)\text{d}z > 0$ as in the proof of \Cref{prop:str-err-strict-assumptions}.
\end{proof}

The discussion above proves the following complexity results.
\begin{corollary}
	\label{cor:st-comp-weak}
	Under \Cref{assumpt:q-moments,assumpt:delta-bound}, the total work required for the MLMC estimator \eqref{eqn:mlmc-estimator} with mean square error $\tol^2$ can be bounded by
	\[\work\lesssim
	\begin{cases}
	\tol^{-2} & \beta > 2\p{\frac{q+1}{q}}\gamma \\
	\tol^{-2}\p*{\log\tol}^2& \beta = 2\p{\frac{q+1}{q}}\gamma \\
	\tol^{-1-2\p{\frac{q+1}{q}}\gamma/\beta} & \beta < 2\p{\frac{q+1}{q}}\gamma 
	\end{cases}.
	\] 
\end{corollary}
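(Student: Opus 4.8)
The plan is to derive \Cref{cor:st-comp-weak} by instantiating the abstract complexity bound of \Cref{thm:complexity} with the level correction $\dhl \defeq \dh$, as indicated in \Cref{remark:complexity}. Three ingredients are required: the geometric cost bound, a rate $\beta_{\textnormal{ind}}$ for the variance $V_\ell = \var{\dhl}$, and a rate $\alpha_{\textnormal{ind}}$ for the bias $E_\ell = \abs*{\E*{\dhl}}$. Once these are in hand the statement follows by unwinding the three cases of \Cref{thm:complexity}.

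The cost bound $W_\ell \lesssim 2^{\gamma\ell}$ is immediate from \eqref{eqn:Wl}, since drawing the coupled pair $\p*{g_\ell^{(f,m)}, g_{\ell-1}^{(c,m)}}$ costs at most a fixed multiple of the cost of sampling $g_\ell$ alone. For the variance, I would bridge from \Cref{thm:standard-sampling-var}, which controls the strong error $\E{\p*{\H{g}-\H{g_\ell}}^2}$, to $V_\ell$ via the triangle inequality in $L^2$: writing $\dh = \p*{\H{g_\ell}-\H{g}} - \p*{\H{g_{\ell-1}}-\H{g}}$ and applying \Cref{thm:standard-sampling-var} to each summand (both decaying at the same rate) gives $V_\ell \le \E{\p*{\dh}^2} \lesssim 2^{-\p{\frac{q}{q+1}}\ell\beta/2}$, hence $\beta_{\textnormal{ind}} = \p{\frac{q}{q+1}}\tfrac{\beta}{2}$. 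For the bias, \Cref{remark:weakbound} records that $\abs*{\E*{\dh}} \le \E{\p*{\dh}^2}$ since $\dh$ is $\{-1,0,1\}$-valued, so the same bound yields $\alpha_{\textnormal{ind}} = \p{\frac{q}{q+1}}\tfrac{\beta}{2} = \beta_{\textnormal{ind}}$. Before invoking \Cref{thm:complexity} one checks its admissibility hypothesis $\alpha_{\textnormal{ind}} \ge \tfrac12\min\p{\gamma,\beta_{\textnormal{ind}}}$; since $\alpha_{\textnormal{ind}} = \beta_{\textnormal{ind}}$ this reduces to $\beta_{\textnormal{ind}} \ge \tfrac12\min\p{\gamma,\beta_{\textnormal{ind}}}$, which holds in every case.

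It then remains to substitute $\alpha_{\textnormal{ind}} = \beta_{\textnormal{ind}} = \p{\frac{q}{q+1}}\tfrac\beta2$ into the three regimes of \Cref{thm:complexity}. The condition $\beta_{\textnormal{ind}}>\gamma$ is equivalent to $\beta > 2\p{\frac{q+1}{q}}\gamma$, and likewise for the equality case, reproducing the first two lines. In the remaining regime the exponent $-2-\p*{\gamma-\beta_{\textnormal{ind}}}/\alpha_{\textnormal{ind}}$ collapses, using $\alpha_{\textnormal{ind}} = \beta_{\textnormal{ind}}$, to $-1-\gamma/\beta_{\textnormal{ind}} = -1-2\p{\frac{q+1}{q}}\gamma/\beta$, giving the claimed bound on $\work$. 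No genuine difficulty arises here; the only point that warrants a little care is that \Cref{thm:standard-sampling-var} controls the strong error of the Heaviside approximation rather than $V_\ell$ or $E_\ell$ directly, so the elementary $L^2$ triangle inequality (and its analogue for $E_\ell$) is the link that must be spelled out.
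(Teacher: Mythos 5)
Your proposal is correct and follows the same route as the paper: instantiate \Cref{thm:complexity} with $\dhl = \dh$, take the variance rate from \Cref{thm:standard-sampling-var} and the bias rate from \Cref{remark:weakbound}, and work out the exponents. The paper compresses this into one sentence; your only additions are the (correct and worth spelling out) $L^2$ triangle-inequality bridge from the strong error $\E{\p*{\H{g}-\H{g_\ell}}^2}$ to $V_\ell$ and $E_\ell$, the check of the hypothesis $\alpha_\textnormal{ind}\ge\tfrac12\min\p{\gamma,\beta_\textnormal{ind}}$, and the exponent algebra.
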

\begin{proof}
	The result follows by combining \Cref{thm:standard-sampling-var} and \Cref{remark:weakbound} with \Cref{thm:complexity} for $\dhl=\dh$.
\end{proof}

\begin{corollary}
	\label{cor:st-comp-str}
	Under \Cref{assumpt:weak-err-st} and, when $\beta<2\gamma$, also under \Cref{assumpt:weak-err-nondiscont} the total work required for the MLMC estimator \eqref{eqn:mlmc-estimator} with mean square error $\tol^2$ can be bounded by
	\[
	\work \lesssim
	\begin{cases}
	\varepsilon^{-2} & \beta > 2\gamma\\
	\varepsilon^{-2}\p*{\log\varepsilon}^2 & \beta = 2\gamma\\
	\varepsilon^{-2 - \p*{\gamma-\beta/2}/\alpha} & \beta < 2\gamma 
	\end{cases}
	\]
\end{corollary}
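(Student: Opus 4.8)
The plan is to mirror the proof of \Cref{cor:st-comp-weak}: apply \Cref{thm:complexity} to the increments $\dhl \defeq \dh$ of the estimator \eqref{eqn:mlmc-estimator}, read off admissible values of $\beta_{\textnormal{ind}}$ and $\alpha_{\textnormal{ind}}$ from \Cref{prop:str-err-strict-assumptions,thm:standard-bias} together with the Remark~\ref{remark:weakbound} observation, and then substitute into the three regimes of \Cref{thm:complexity}. The work rate $W_\ell \lesssim 2^{\gamma\ell}$ is \eqref{eqn:Wl}, so only the mean and variance rates \eqref{weak-err}--\eqref{str-err} and the admissibility constraint $\alpha_{\textnormal{ind}} \ge \min(\gamma,\beta_{\textnormal{ind}})/2$ need checking.

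For the variance $V_\ell = \var{\dh}$, I would use $\dh = \p*{\H{g_\ell} - \H{g}} - \p*{\H{g_{\ell-1}} - \H{g}}$ (with the convention $\H{g_{-1}}\defeq 0$, so the $\ell=0$ term is trivially bounded) and the elementary inequality $\var{\dh} \le \E*{\p*{\dh}^2} \le 2\E*{\p*{\H{g}-\H{g_\ell}}^2} + 2\E*{\p*{\H{g}-\H{g_{\ell-1}}}^2}$. \Cref{prop:str-err-strict-assumptions} bounds each term by a constant times $2^{-\ell\beta/2}$ (absorbing the harmless factor $2^{\beta/2}$ in the shifted term), so $V_\ell \lesssim 2^{-\ell\beta/2}$ and we may take $\beta_{\textnormal{ind}} = \beta/2$.

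For the bias $E_\ell = \abs*{\E*{\dh}}$, I would split on the regime. Since $\dh$ takes values in $\{-1,0,1\}$ we have $\p*{\dh}^2 = \abs{\dh}$, hence $E_\ell \le \E*{\abs{\dh}} = \E*{\p*{\dh}^2} \lesssim 2^{-\ell\beta/2}$ already from the variance step (the analogue of \Cref{remark:weakbound}). When $\beta \ge 2\gamma$ this suffices: $\alpha_{\textnormal{ind}} = \beta/2 \ge \gamma \ge \min(\gamma,\beta_{\textnormal{ind}})/2$, and the telescoping identity $\prob{g>0} = \sum_{\ell\ge 0}\E*{\dhl}$ holds since $\abs*{\E*{\H{g}-\H{g_L}}}\to 0$. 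When $\beta < 2\gamma$, I would instead invoke the triangle inequality $\abs*{\E*{\dh}} \le \abs*{\E*{\H{g}-\H{g_\ell}}} + \abs*{\E*{\H{g}-\H{g_{\ell-1}}}}$ together with \Cref{thm:standard-bias} to get the sharper $E_\ell \lesssim 2^{-\alpha\ell}$, i.e.\ $\alpha_{\textnormal{ind}} = \alpha$; \Cref{assumpt:weak-err-nondiscont} guarantees $\alpha \ge \beta/2 > \beta/4 = \min(\gamma,\beta_{\textnormal{ind}})/2$, so the eligibility hypothesis of \Cref{thm:complexity} is satisfied, and again $\abs*{\E*{\H{g}-\H{g_L}}}\to 0$ validates the telescopic sum.

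Substituting $(\beta_{\textnormal{ind}},\alpha_{\textnormal{ind}})$ into \Cref{thm:complexity} then yields the three cases: $\beta > 2\gamma \Leftrightarrow \beta_{\textnormal{ind}} > \gamma$ gives $\tol^{-2}$; $\beta = 2\gamma \Leftrightarrow \beta_{\textnormal{ind}} = \gamma$ gives $\tol^{-2}\p*{\log\tol}^2$; and $\beta < 2\gamma \Leftrightarrow \beta_{\textnormal{ind}} < \gamma$ with $\alpha_{\textnormal{ind}} = \alpha$ gives $\tol^{-2-\p*{\gamma-\beta/2}/\alpha}$. There is no real analytic difficulty here beyond \Cref{prop:str-err-strict-assumptions,thm:standard-bias}, which are already established; the only point requiring care is the bookkeeping of the case split, so that \Cref{assumpt:weak-err-nondiscont} is used exactly when $\beta < 2\gamma$ and the constraint $\alpha_{\textnormal{ind}} \ge \min(\gamma,\beta_{\textnormal{ind}})/2$ is verified in each regime.
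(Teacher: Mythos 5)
Your proposal is correct and follows essentially the same route as the paper, which simply combines the established variance and bias rates with \Cref{thm:complexity} for $\dhl=\dh$. If anything your bookkeeping is more careful than the paper's one-line proof, which cites \Cref{thm:standard-sampling-var} for the variance where \Cref{prop:str-err-strict-assumptions} (giving $\beta_{\textnormal{ind}}=\beta/2$ rather than $\frac{q}{q+1}\cdot\beta/2$) is what is actually needed to place the case boundaries at $\beta=2\gamma$ — exactly the proposition you invoke.
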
 
\begin{proof}
	The result follows by combining  \Cref{thm:standard-sampling-var} and \Cref{thm:standard-bias} with \Cref{thm:complexity} for $\dhl=\dh$.
\end{proof}
In some applications, \Cref{assumpt:q-moments} holds for all $q<\infty$. In this case, \Cref{cor:st-comp-weak} holds by taking $q\to \infty$, where one must multiply the complexity by a factor $\tol^{-\nu}$ for any $\nu>0$ when $\beta\le 2\gamma$ as a technical restraint. For \Cref{ex:nest,ex:sde} with Euler-Maruyama simulation of the SDE, we can show (under certain assumptions on the underlying SDE \cite{Kloeden:1999}) that $\alpha=\beta=\gamma$  the complexity is at best $\tol^{-2.5}$, a significant increase over the canonical $\tol^{-2}$ complexity. For SDE simulation we can replace the $\tol^{-\nu}$ term appearing in the complexity  in the limit $q\to\infty$ with a logarithmic factor using the analysis in \cite{Avikainen:irregularSdes}.
 
\section{Adaptive Multilevel Monte Carlo}
\label{sect:ad-mlmc}
In the previous section, we described how the complexity of MLMC calculations for the problem \eqref{eqn:heaviside-problem} is affected by the discontinuous observable $\H{g}$. To improve the performance of MLMC we replace the approximation $g_\ell$ at level $\ell$ with $g_{\ell+\eta_\ell}$. Where we introduce the random, non-negative, integer $\eta_\ell$ which should reflect the uncertainty in the sign of $g_{\ell+\eta_\ell}$. The MLMC estimator \eqref{eqn:mlmcl} then uses the multilevel correction term $\dhl$ given by
\begin{equation}
\label{eqn:dh-ad}
\dhl \defeq 
\begin{cases}
\H{g_{\ell+\eta_\ell}} - \H{g_{\ell-1+\eta_{\ell-1}}} & \ell > 0\\
\H{g_{\eta_0}} & \ell = 0
\end{cases}.
\end{equation}
Heuristically, approximations which are close to zero with high variability should be refined further (have larger values of $\eta_\ell$) than approximations which lie far away from zero with low variability. The chosen approach for sampling $g_{\ell+\eta_\ell}$ is detailed in \Cref{alg:ad-samp}. We refine between levels $\ell\le \ell+\eta_\ell \le \ell + \ceil{\theta\ell}$, for a supplied parameter $\theta$, based on the value of $\abs{\delta_{\ell+\eta_\ell}}$ \eqref{eqn:delta}. \Cref{alg:ad-samp} also has the parameter $r$, determining how strict we are with the refinement, and a confidence constant $c>0$. Explicitly, we refine by $\eta_\ell$ levels, where 
\begin{equation}
\label{eqn:xi-ell-cond}
\eta_\ell =  k \iff
\begin{cases}
\abs{\delta_{\ell + m}} < c2^{\gamma(\theta\ell(1-r)-m)/r} & \forall m\leq k-1\\
\abs{\delta_{\ell + k}} \geq c2^{\gamma(\theta\ell(1-r) - k)/r} & \text{if } k<\theta\ell
\end{cases},
\end{equation}
for $0\le k\le \ceil{\theta\ell}$. For small values of $r$ we refine samples to higher levels than for large $r$. Ideally, we want to allow the refinement procedure to take $r$ as large as possible while observing maximum benefit to the MLMC complexity. Within \Cref{alg:ad-samp}, it is important that the method of refining $g_{\ell+\eta_\ell}$ to $g_{\ell+\eta_\ell+1}$ does not affect the almost sure convergence of $g_{\ell+\eta_\ell}$ to $g$.\\

\begin{center}
	\begin{algorithm}
		\KwIn{$\ell, r, \theta, c>0, \gamma, \beta$}
		\KwResult{Adaptively refined sample $g_{\ell+\eta_\ell}$}
		Set $\eta_\ell = 0$\;
		Sample $g_\ell$\;
		Compute $\delta_\ell$ given $g_\ell$\;
		\While{$\abs{\delta_{\ell+\eta_\ell}} < c2^{\gamma\p{\theta\ell\p{1-r}-\eta_\ell}/r}$ and $\eta_\ell < \ceil{\theta\ell}$}{
			Refine $g_{\ell+\eta_\ell}$ to $g_{\ell+\eta_\ell + 1}$\;
			Compute $\delta_{\ell+\eta_{\ell}+1}$ given $g_{\ell+\eta_{\ell}+1}$\;
			Set $\eta_\ell = \eta_\ell+1$\;
		}
		\KwOut{$g_{\ell+\eta_\ell}$}
		\caption{Adaptive sampling at level $\ell$}
		\label{alg:ad-samp}
	\end{algorithm}
\end{center}

\Cref{alg:ad-samp} has many similarities to the adaptive nested simulation algorithm in \cite{GilesHajiAli:2018, GilesHajiAli:2019sampling}, which considers the specific case $g = \E*{X|Y}$ approximated by an inner Monte Carlo sampler.  However, besides being applicable to a wider class of problems, the present algorithm has some key differences: The nested simulation algorithm in \cite{GilesHajiAli:2018,GilesHajiAli:2019sampling} requires that each refined value $g_{\ell+\eta_\ell+1}$ is independent of the previous term $g_{\ell+\eta_\ell}$ conditioned on $Y$, which is not required here. This accelerates the refinement procedure since one can reuse all terms from the computation of $g_{\ell+\eta_\ell}$ in the refinement to $g_{\ell+\eta_\ell+1}$. Moreover, in \cite{GilesHajiAli:2018} the adaptive algorithm returns only the number of inner samples one should use to approximate $\E{X|Y}$, given $Y$, and the estimate of $g$ should then be computed independently. In contrast, our algorithm requires that the estimate of $g$ matches the output of the refinement process.  The parameter $\theta$ is also a novel introduction to \Cref{alg:ad-samp}. In \cite{GilesHajiAli:2018}, the work has $\beta=\gamma$ and under the assumptions there it makes sense to choose $\theta = 1$ (see \Cref{thm:var-bound-strong}). For $\beta \neq \gamma$ it can be optimal to refine over a wider or narrower range of levels, see \Cref{thm:var-bound} and \Cref{remark:large-theta}.\\

In certain applications, it is possible that samples at the fine and coarse levels within MLMC are correlated to such extent that when $\eta_{\ell-1} = \eta_{\ell}+1$ we have $g_{\ell + \eta_\ell} = g_{\ell-1 + \eta_{\ell-1}}$. In this case, when $r\le2$ it follows from \eqref{eqn:xi-ell-cond} that $\ell - 1 + \eta_{\ell-1} \le \ell + \eta_\ell$. However, when $r>2$ there is a small chance that this is false and the `coarse' sample $g_{\ell-1 + \eta_{\ell-1}}$ is actually refined to greater accuracy than the `fine' estimator $g_{\ell + \eta_{\ell}}$. Here, we can resort to \Cref{prop:work-rate} below which assures that on average $g_{\ell + \eta_{\ell}}$ has greater accuracy than $g_{\ell-1 + \eta_{\ell-1}}$.

\subsection{Work Analysis}
\label{sect:work-analysis}
In the context of \Cref{thm:complexity}, using $\dhl$ as in \eqref{eqn:dh-ad} we wish to improve upon the convergence rate of $V_\ell$ seen for the estimator \eqref{eqn:mlmc-estimator} in \Cref{thm:standard-sampling-var}.  \Cref{thm:complexity} implies that for this to be effective the expected cost of computing $g_{\ell+\eta_\ell}$ and $g_\ell$ must be similar. The following result ensures the expected cost of sampling $g_{\ell + \eta_\ell}$ is also $\Order{2^{\gamma\ell}}$.

\begin{proposition}[\cite{GilesHajiAli:2018} Theorem 2.7]
	\label{prop:work-rate}
	Define $\eta_\ell$ as in \eqref{eqn:xi-ell-cond} and assume \Cref{assumpt:delta-bound} holds for fixed $\rho_0, \delta>0$. Provided $r>1$, we have
	\[
		\E*{2^{\gamma\p*{\ell + \eta_\ell}}} \lesssim 2^{\gamma\ell}.
	\]
\end{proposition}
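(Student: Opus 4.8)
The plan is to bound the expected work $\E*{2^{\gamma(\ell+\eta_\ell)}}$ by splitting over the possible values $k$ of $\eta_\ell$ and exploiting the fact that the event $\{\eta_\ell \ge k\}$ forces $g_{\ell+m}$ to be close to zero for every $m \le k-1$; in particular it forces $\abs{\delta_{\ell+k-1}} < c2^{\gamma(\theta\ell(1-r)-(k-1))/r}$, which by \Cref{assumpt:delta-bound} is a low-probability event once the right-hand side is below $\delta$. First I would write
\[
\E*{2^{\gamma(\ell+\eta_\ell)}} = 2^{\gamma\ell}\sum_{k=0}^{\ceil{\theta\ell}} 2^{\gamma k}\,\prob{\eta_\ell = k}
\le 2^{\gamma\ell}\p*{1 + \sum_{k=1}^{\ceil{\theta\ell}} 2^{\gamma k}\,\prob{\eta_\ell \ge k}},
\]
using that $\prob{\eta_\ell=k}\le\prob{\eta_\ell\ge k}$ (one can also sum by parts; either works). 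From the refinement rule \eqref{eqn:xi-ell-cond}, $\{\eta_\ell \ge k\}$ implies $\abs{\delta_{\ell+k-1}} < c2^{\gamma(\theta\ell(1-r)-(k-1))/r}$, so $\prob{\eta_\ell\ge k}\le \prob{\abs{\delta_{\ell+k-1}} < c2^{\gamma(\theta\ell(1-r)-(k-1))/r}}$.

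Next I would apply \Cref{assumpt:delta-bound} to this last probability. The assumption gives the linear bound $\rho_0 x$ only for $x \le \delta$; for $x > \delta$ the probability is trivially at most $1$. So I would split the sum over $k$ at the threshold where $c2^{\gamma(\theta\ell(1-r)-(k-1))/r}$ crosses $\delta$. For the small-$k$ range (where the bound exceeds $\delta$) the term $\prob{\eta_\ell\ge k}$ is bounded by $1$, and since there are only $\Order{\ell}$ such terms each multiplied by $2^{\gamma k}$, one checks this contributes a subdominant geometric-type sum; the key observation is that the threshold value of $k$ is proportional to $\theta\ell$ with a factor strictly less than... — more carefully, one must verify the largest such $k$ still leaves $2^{\gamma k}$ comfortably below the $2^{\gamma\theta\ell\cdot(\text{something}<1)}$ scale, or absorb it into the main estimate. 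For the large-$k$ range, \Cref{assumpt:delta-bound} yields
\[
\prob{\eta_\ell\ge k}\le \rho_0\, c\, 2^{\gamma(\theta\ell(1-r)-(k-1))/r},
\]
so the corresponding terms of the sum are bounded by a constant times $2^{\gamma k}\cdot 2^{-\gamma k/r} = 2^{\gamma k(1-1/r)}$, together with the $k$-independent factor $2^{\gamma\theta\ell(1-r)/r}$. Since $r>1$, the exponent $\gamma(1-1/r)>0$, so this is a growing geometric sum in $k$, dominated by its top term $k\approx\theta\ell$. Substituting $k=\theta\ell$ into $2^{\gamma k(1-1/r)}\cdot 2^{\gamma\theta\ell(1-r)/r}$ gives exponent $\gamma\theta\ell(1-1/r) + \gamma\theta\ell(1-r)/r = \gamma\theta\ell\cdot\big[(1-1/r)+(1/r-1)\big] = 0$, i.e. the top term is $\Order{1}$. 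Hence the whole bracket is $\Order{1}$ and $\E*{2^{\gamma(\ell+\eta_\ell)}}\lesssim 2^{\gamma\ell}$.

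The main obstacle is handling the small-$k$ regime cleanly, i.e. the range of $k$ for which the bound $c2^{\gamma(\theta\ell(1-r)-(k-1))/r}$ is not below $\delta$ and \Cref{assumpt:delta-bound} cannot be invoked. One needs to confirm that in this regime $2^{\gamma k}$ has not yet grown past the $\Order{2^{\gamma\ell}}$ budget: from $c2^{\gamma(\theta\ell(1-r)-(k-1))/r}\ge\delta$ one gets $k \le 1 + \theta\ell(1-r) + \tfrac{r}{\gamma}\log_2(c/\delta)$, and since $r>1$ makes $1-r<0$, this forces $k$ to be bounded by a constant (not growing with $\ell$) — so in fact that regime is empty for large $\ell$, and the split is harmless. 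I would make this precise and remark that the constant in $\lesssim$ depends on $\gamma, \theta, r, c, \rho_0, \delta$ but not on $\ell$, and note the proof is essentially that of \cite[Theorem~2.7]{GilesHajiAli:2018} adapted to the present notation and to the extra parameter $\theta$.
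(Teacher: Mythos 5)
Your proposal is correct and follows essentially the same route as the paper's proof: split the expectation over the values of $\eta_\ell$, bound $\prob{\eta_\ell = k}$ by the probability that $\abs{\delta_{\ell+k-1}}$ falls below the refinement threshold, apply \Cref{assumpt:delta-bound}, and observe that the resulting geometric sum in $k$ is dominated by its top term, whose exponent cancels exactly when $r>1$. Your extra care about the regime where the threshold exceeds $\delta$ is resolved the same way the paper does it (for $r>1$ that regime is empty for all sufficiently large $\ell$), so there is no substantive difference.
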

\begin{proof}
We start with 
\[
\begin{aligned}
\E{2^{\gamma\p{\ell + \eta_\ell}}} &= \sum_{k=0}^{\left\lceil \theta \ell \right \rceil} 2^{\gamma(\ell + k)}\prob{\eta_\ell = k}\\
&\leq 2^{\gamma\ell}  + \sum_{k=1}^{\left\lceil \theta \ell \right \rceil} 2^{\gamma(\ell+ k)}\prob{\abs{\delta_{\ell+k-1}} < c2^{\gamma(\theta\ell(1-r) - k + 1)/r}},
\end{aligned}
\]
where we have used \eqref{eqn:xi-ell-cond} to bound the probabilities. Provided $r>1$, for large enough $\ell$ we have $ c2^{\gamma(\theta\ell(1-r) - k + 1)/r} < \delta$ 
for all $k \geq 0$. Using \Cref{assumpt:delta-bound}
\[
\begin{aligned}
\E{2^{\gamma\p{\ell + \eta_\ell}}} &\leq 2^{\gamma\ell} + \rho_0c2^{\gamma/r} 2^{\gamma\ell\p*{1 + \theta\p*{1-r }/r }}\sum_{k=1}^{\left\lceil \theta \ell \right \rceil} 2^{\gamma k(r-1)/r} \\
&\leq 2^{\gamma\ell} + c_02^{\gamma\ell} 2^{\gamma\ell\theta\p*{1-r }/r } 2^{\gamma \left\lceil \theta \ell \right \rceil(r-1)/r} \\
&\lesssim2^{\gamma\ell},
\end{aligned}
\]
since $r>1$.
\end{proof}
Note that the above proof emphasises that taking $r$ larger results in a sampling cost that is lower by a constant factor.

\subsection{Analysis of the Variance}

The following results highlight improvements to the convergence of $\E{\p{\dh}^2}$ to 0 under the adaptive sampling procedure in \Cref{alg:ad-samp}. As with the non-adaptive case, we obtain slightly better results using the stronger \Cref{assumpt:weak-err-st}. However, this condition is not essential and we still see an improvement under the general \Cref{assumpt:q-moments,assumpt:delta-bound}, as seen below.
\begin{lemma}\label{thm:var-bound}
   Let \Cref{assumpt:q-moments,assumpt:delta-bound} hold for some $\beta>0$ and $q>2$. Assume:
   \begin{itemize}
   	\item For $\beta \le \p{\frac{q+1}{q}}\gamma$ we take $r < 2\frac{\gamma}{\beta}$ and
   	\begin{equation}
   	\theta = \p*{2\p*{\frac{q+1}{q}}\frac{\gamma}{\beta} - 1}^{-1}.\label{eqn:theta-opt}
   	\end{equation}
   	\item For $\beta> \p{\frac{q+1}{q}}\gamma$ we take $\theta = 1$ and
   	\begin{equation}\label{eqn:rvals-weak}
   	\begin{cases}
   	r \leq \p*{1-\p*{\frac{q-1}{2(q+1)}}\frac{\beta}{\gamma}}^{-1} & \beta < 2\p*{\frac{q+1}{q-1}}\gamma \\
   	r < \infty & \beta \ge 2\p*{\frac{q+1}{q-1}}\gamma
   	\end{cases}.
   	\end{equation}
   \end{itemize}
	Then, for $g_{\ell+\eta_\ell}$ given by \Cref{alg:ad-samp},
	\begin{equation}
		\label{eqn:adaptive-var}
		\E{\p{\H{g}-\H{g_{\ell+\eta_\ell}}}^2} \lesssim 2^{-\p{\frac{q}{q+1}}\ell\beta(1+\theta)/2}.
	\end{equation}
	
\end{lemma}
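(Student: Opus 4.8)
The plan is to mimic the proof of \Cref{thm:standard-sampling-var}, but conditioning on the random refinement level $\eta_\ell$ produced by \Cref{alg:ad-samp}. As in that proof, a sign mismatch $\H{g}\neq\H{g_{\ell+\eta_\ell}}$ can only occur when $\abs{g-g_{\ell+\eta_\ell}}\ge\abs{g_{\ell+\eta_\ell}}$, which by \eqref{eqn:Z_ell}--\eqref{eqn:delta} rewrites as $\abs{Z_{\ell+\eta_\ell}}\ge 2^{(\ell+\eta_\ell)\beta/2}\abs{\delta_{\ell+\eta_\ell}}$. Hence
\[
\E{\p{\H{g}-\H{g_{\ell+\eta_\ell}}}^2}\le\sum_{k=0}^{\ceil{\theta\ell}}\prob{\eta_\ell=k,\ \abs{Z_{\ell+k}}\ge 2^{(\ell+k)\beta/2}\abs{\delta_{\ell+k}}},
\]
and I would treat the terms $k<\ceil{\theta\ell}$ and the single term $k=\ceil{\theta\ell}$ separately, since only in the latter does the algorithm fail to supply a lower bound on $\abs{\delta_{\ell+k}}$.

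For $k<\ceil{\theta\ell}$, the stopping clause of \eqref{eqn:xi-ell-cond} forces $\abs{\delta_{\ell+k}}\ge c\,2^{\gamma(\theta\ell(1-r)-k)/r}$ on $\{\eta_\ell=k\}$. Discarding the constraints on $\delta_{\ell+j}$ for $j<k$, the $k$-th term is then at most $\prob{\abs{Z_{\ell+k}}\ge c\,2^{(\ell+k)\beta/2+\gamma(\theta\ell(1-r)-k)/r}}$, to which I apply Markov's inequality with the $q$-th moment bound of \Cref{assumpt:q-moments} (uniform over the level $\ell+k$). The exponents in the resulting bounds are affine in $k$, so the terms form a geometric progression of ratio $2^{-q(\beta/2-\gamma/r)}$; after the routine evaluation $(\ell+k)\beta/2+\gamma(\theta\ell(1-r)-k)/r\big|_{k=\theta\ell}=\ell\beta(1+\theta)/2-\gamma\theta\ell$ at the top end and $\ell\beta/2+\gamma\theta\ell(1-r)/r$ at $k=0$, the sum is controlled by its dominant term: $\lesssim 2^{-q\ell[\beta(1+\theta)/2-\gamma\theta]}$ when $r<2\gamma/\beta$ (back-loaded series) and $\lesssim 2^{-q\ell[\beta/2+\gamma\theta(1-r)/r]}$ when $r>2\gamma/\beta$ (front-loaded series).

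For $k=\ceil{\theta\ell}$ there is no lower bound on $\abs{\delta_{\ell+\ceil{\theta\ell}}}$, so I split on whether it is below or above a threshold $\psi$, exactly as in \Cref{thm:standard-sampling-var}: the first piece is $\le\rho_0\psi$ by \Cref{assumpt:delta-bound}, the second $\le(2^{(\ell+\ceil{\theta\ell})\beta/2}\psi)^{-q}\sup_\ell\E{\abs{Z_\ell}^q}$ by Markov, and choosing $\psi=\min(1,\delta)\,2^{-(\ell+\ceil{\theta\ell})\beta q/(2(q+1))}$ balances them to give $\lesssim 2^{-(q/(q+1))\ell\beta(1+\theta)/2}$, the claimed rate. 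It then remains to check this dominates the $k<\ceil{\theta\ell}$ contribution: a short computation shows the required inequality is $\tfrac1\theta\ge 2\big(\tfrac{q+1}{q}\big)\tfrac\gamma\beta-1$ in the regime $r<2\gamma/\beta$, and $r\le\big(1-\tfrac{q-1}{2(q+1)}\tfrac\beta\gamma\big)^{-1}$ (vacuous when the bracket is $\le 0$) in the regime $r>2\gamma/\beta$ with $\theta=1$ — which is precisely how $\theta$ in \eqref{eqn:theta-opt} and $r$ in \eqref{eqn:rvals-weak} are chosen, and one verifies these choices are mutually consistent with $r>1$ and with the split $\beta\lessgtr\big(\tfrac{q+1}{q}\big)\gamma$.

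I expect the main obstacle to be bookkeeping rather than any single hard idea: one must handle the random level $\eta_\ell$ carefully (the $k$-dependent lower bound on $\abs{\delta_{\ell+k}}$ coming from the stopping rule is exactly what makes the geometric sum summable), distinguish the sign of $\beta/2-\gamma/r$ to know whether the series is front- or back-loaded, and confirm that the somewhat intricate parameter windows \eqref{eqn:theta-opt}--\eqref{eqn:rvals-weak} are precisely those equating the two contributions. The degenerate choice $r=2\gamma/\beta$ introduces only a harmless $\ell$-factor, absorbed by the strict inequality $\beta>\big(\tfrac{q+1}{q}\big)\gamma$ in that branch.
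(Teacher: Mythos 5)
Your proposal is correct and follows essentially the same route as the paper's proof: the same decomposition over the realized refinement level $\eta_\ell=k$, the same use of the stopping rule to extract the lower bound $\abs{\delta_{\ell+k}}\ge a_k$ for $k<\ceil{\theta\ell}$ followed by a $q$-th moment Markov bound giving the geometric sum $\sum_k (a_kb_k)^{-q}$, the same treatment of the maximal term $k=\ceil{\theta\ell}$ via the argument of \Cref{thm:standard-sampling-var}, and the same balancing conditions yielding \eqref{eqn:theta-opt} and \eqref{eqn:rvals-weak}. The only cosmetic difference is that you apply Markov's inequality to $\abs{Z_{\ell+k}}$ directly after eliminating $\delta_{\ell+k}$, whereas the paper bounds the indicator through $\min(1,a_k^{-q}\abs{\delta_{\ell+k}}^q)$; both yield the identical bound $(a_kb_k)^{-q}\E{\abs{Z_{\ell+k}}^q}$.
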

\begin{proof}

	As with the work analysis, we split the calculation among each value of $\eta_\ell$
	\begin{equation}
	\label{eqn:var-split}
	\begin{aligned}
	\E{\p{\H{g} -\H{g_{\ell+\eta_\ell}}}^{2}} =&  \sum_{k=0}^{\left\lceil \theta\ell \right\rceil} \E{
		\p{\H{g} -\H{g_{\ell + \eta_\ell}}}^{2} \I{{\eta_{\ell} = k}}}\\
	&\leq \underbrace{\sum_{k=0}^{\left\lceil \theta\ell \right\rceil-1} \E{\p{\H{g} -\H{g_{\ell+k}}}^{2} \I{{\eta_{\ell} = k}}}}_{\eqdef\Sigma_0}+ \underbrace{\E{\p{\H{g} -\H{g_{\ell+\left\lceil \theta\ell \right\rceil}}}^{2}}}_{\eqdef\Sigma_1}
	\end{aligned}
	\end{equation}
	 By \Cref{thm:standard-sampling-var} we
        have
	\[
	\Sigma_1 \lesssim 2^{-\p{\frac{q}{q+1}}\ell\beta\p*{1 + \theta}/2 }.
	\]

	We now turn our attention to terms for which $k<\left\lceil \theta\ell \right\rceil$. Using \eqref{eqn:xi-ell-cond} to relate the condition $\eta_\ell = k$ to the value of $\delta_{\ell + k}$  we have
	\[
	\begin{aligned}
		\Sigma_0 &\leq \sum_{k=0}^{\left\lceil \theta\ell \right\rceil-1}\E{\I{\abs{g-g_{\ell+k}}\ge \abs{g_{\ell+k}}} \I{{ \eta_\ell = k}}} \\
	&\leq
	\sum_{k=0}^{\left\lceil \theta\ell \right\rceil-1}\E*{\I{\abs{g-g_{\ell+k}}\ge \abs{g_{\ell+k}}} \ \ \I{\abs{\delta_{\ell+k}} \geq c2^{\p{\theta\ell\p{1-r}-k}\gamma/r}}}\\
	&= \sum_{k=0}^{\left\lceil \theta\ell \right\rceil-1}\E*{\I{
			c2^{\p{\ell+k}\beta/2 + \p{\theta\ell\p{1-r}-k}\gamma/r} \leq \abs{\delta_{\ell+k}} 2^{\p{\ell+k}\beta/2} \leq \frac{\abs{g-g_{\ell+k}}}{\sigma_{\ell+k} 2^{-\p{\ell+k}\beta/2}}}} \\
	&= \sum_{k=0}^{\left\lceil \theta\ell \right\rceil-1}\E*{\I{ a_k \leq\,  \abs{\delta_{\ell+k}} \leq\, \abs{Z_{\ell+k}}b_k^{-1}}}
	\end{aligned}
	\]
	where $Z_{\ell+k}$ is as in \eqref{eqn:Z_ell} and we introduce the terms
	\[
	\begin{aligned}
	a_k &\defeq c\, 2^{\p{\theta\ell\p{1-r}-k}\gamma/r} \\
	b_k &\defeq 2^{\p{\ell+k}\beta/2}
	\end{aligned}
	\]
	Note that,
	\[
	\begin{aligned}
	\I{ a_k \leq\, \abs{\delta_{\ell+k}} \leq\, \abs{Z_{\ell+k}}b_k^{-1}}&= \I{1 \le a_k^{-q}\abs{\delta_{\ell + k}}^q}\I{\abs{\delta_{\ell + k}}\le \abs{Z_{\ell+k}}b_k^{-1}} \\
	&\le \min\p*{1, a_k^{-q}\abs{\delta_{\ell + k}}^q}\I{\abs{\delta_{\ell + k}}\le \abs{Z_{\ell+k}}b_k^{-1}} \\
	&\le a_k^{-q}b_k^{-q}\abs{Z_{\ell+k}}^q.
	\end{aligned}
	\]
	Therefore, using \Cref{assumpt:q-moments} to bound $\E{\abs{Z_{\ell+k}}^q}$ we obtain
	\[
	\begin{aligned}
	\Sigma_0 &\le \sum_{k=0}^{\left\lceil \theta\ell \right\rceil-1} a_k^{-q}b_k^{-q}\E{\abs{Z_{\ell+k}}^q}.
	\end{aligned}
	\]

	Thus we restrict our attention to the term
	\begin{equation}
	\label{eqn:bk-ak}
	a_k^{-q}b_k^{-q} = 2^{-q(\beta/2 - \gamma/r)(k + \ell)}2^{-q\gamma\ell(1 + \theta -\theta r)/r}.
	\end{equation}
	
	Suppose first that $\beta \le \p{\frac{q+1}{q}}\gamma$. Assume that $r < 2\frac{\gamma}{\beta}$ so that \eqref{eqn:bk-ak} is an increasing function of $k$. It follows that 
	\begin{equation}
		\label{eqn:sum-bk-ak}
	\begin{aligned}
		\sum_{k=0}^{\left\lceil \theta\ell \right\rceil-1} a_{k}^{-q}b_k^{-q}\E{\abs{Z_{\ell+k}}^q} &\lesssim a_{\theta\ell}^{-q}b_{\theta\ell}^{-q} \\
		&\lesssim 2^{q\ell\p*{\gamma\theta - \beta\p*{1+\theta}/2 }}. 
	\end{aligned}
	\end{equation}
	In order to ensure the above term is of the same order as $\Sigma_1$ we take $\theta$ as in \eqref{eqn:theta-opt}.
	
	Now suppose $\beta > \p{\frac{q+1}{q}}\gamma$ and consider first $r < 2\frac{\gamma}{\beta}$ so that \eqref{eqn:sum-bk-ak} holds.  Note that taking $\theta = 1$ is enough to guarantee $\Sigma_1  \lesssim2^{-\p{\frac{q}{q+1}}\beta\ell}$ and, by \eqref{eqn:sum-bk-ak}, $\Sigma_0 \lesssim 2^{q\ell(\gamma - \beta)} \leq2^{-\p{\frac{q}{q+1}}\beta\ell}$ since $\beta \ge \p{\frac{q+1}{q}}\gamma$.  Since $\beta > \p{\frac{q+1}{q}}\gamma$ this is enough to guarantee $\tol^{-2}$ complexity. If $r = 2\frac{\gamma}{\beta}$ the bound \eqref{eqn:sum-bk-ak} becomes (again taking $\theta = 1$)
	\[
	\begin{aligned}
		\Sigma_0\le\sum_{k=0}^{\ell-1} a_{k}^{-q}b_k^{-q}\E{\abs{Z_{\ell+k}}^q} &= \sum_{k=0}^{\ell-1}2^{q\ell(\gamma - \beta)}  \\
		&\lesssim \ell2^{q\ell(\gamma - \beta)}  \lesssim2^{-\p{\frac{q}{q+1}}\beta\ell }.
	\end{aligned}
	\]
	On the other hand, for $r > \frac{2\gamma}{\beta}$, \eqref{eqn:bk-ak} is a decreasing function of $k$ and we have 
	\[
	\begin{aligned}
	\Sigma_0 \le \sum_{k=0}^{\ell - 1} a_{k}^{-q}b_k^{-q}\E{\abs{Z_{\ell+k}}^q} &\lesssim a_{0}^{-q}b_{0}^{-q} \\
		&\lesssim 2^{q\p*{\gamma\p{r-1}/r - \beta/2 }\ell} \lesssim 2^{-\p{\frac{q}{q+1}}\beta\ell },
	\end{aligned}
	\]
	provided we take $r$ as in \eqref{eqn:rvals-weak}, completing the proof.
\end{proof}
\begin{remark}
	\label{remark:large-theta}
		The proof of \Cref{thm:var-bound} allows \eqref{eqn:adaptive-var} to hold for certain values $\theta>1$ provided $\beta> \p{\frac{q+1}{q}}\gamma$ and under tighter upper bounds for $r$. However, for such values of $\beta$ we are already in the $\tol^{-2}$ complexity regime of MLMC at $\theta = 1$, thus any increase in $\theta$ can improve the MLMC cost by a constant at best. Moreover,  tighter bounds on $r$ will increase the expected cost of sampling $g_{\ell+\eta_\ell}$, limiting the value of any constant reduction in the MLMC cost.
\end{remark}
Below, we state an extension to \Cref{thm:var-bound} under the stricter assumptions required for the bias analysis.
\begin{lemma}
	\label{thm:var-bound-strong}
	 Let \Cref{assumpt:weak-err-st} hold for some $\beta>0$ and $q>2$. Assume:
	 \begin{itemize}
	 	\item For $\beta \le \gamma$ we take 
	 	\begin{align}
	 	r &< \frac{2\gamma}{\beta}\p*{1-\frac{1}{q}},\nonumber\\
	 	\theta &= \p*{2\frac{\gamma}{\beta} - 1}^{-1}.\label{eqn:theta-strong}
	 	\end{align}
	 	\item For $\beta>\gamma$ we take $\theta = 1$ and
	 	\begin{equation}\label{eqn:rvals-strict}
	 	\begin{cases}
	 	r \leq \p*{1-\frac{\p*{q-2}\beta}{2(q-1)\gamma}}^{-1} & \beta < 2\frac{q-1}{q-2}\gamma \\
	 	r < \infty & \beta \ge 2\frac{q-1}{q-2}\gamma
	 	\end{cases}.
	 	\end{equation}
	 \end{itemize}
 	
 	Then, for $g_{\ell+\eta_\ell}$ as in \Cref{alg:ad-samp}
	\begin{equation}
	\E{\p{\H{g}-\H{g_{\ell+\eta_\ell}}}^2} \lesssim 2^{-\beta\ell\cdot (1+\theta)/2}.
	\end{equation}
\end{lemma}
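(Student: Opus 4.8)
The plan is to follow the proof of \Cref{thm:var-bound} line for line, changing only the single step where Markov's inequality was applied to $Z_{\ell+k}$ and replacing it by a density-and-Taylor estimate of the type used in \Cref{prop:str-err-strict-assumptions} and \Cref{thm:standard-bias}. Begin with the same decomposition \eqref{eqn:var-split}, $\E{\p{\H g-\H{g_{\ell+\eta_\ell}}}^2}\le\Sigma_0+\Sigma_1$, where $\Sigma_1=\E{\p{\H g-\H{g_{\ell+\lceil\theta\ell\rceil}}}^2}$ and $\Sigma_0=\sum_{k=0}^{\lceil\theta\ell\rceil-1}\E{\p{\H g-\H{g_{\ell+k}}}^2\I{\eta_\ell=k}}$. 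For $\Sigma_1$, apply \Cref{prop:str-err-strict-assumptions} at level $\ell+\lceil\theta\ell\rceil$ to obtain $\Sigma_1\lesssim 2^{-\beta(\ell+\lceil\theta\ell\rceil)/2}\lesssim 2^{-\beta\ell(1+\theta)/2}$, which is already the target rate.

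For $\Sigma_0$, reuse verbatim the reduction in \Cref{thm:var-bound}: using \eqref{eqn:xi-ell-cond} to turn $\{\eta_\ell=k\}$ into $\{\abs{\delta_{\ell+k}}\ge a_k\}$, bound each summand by $\prob{a_k\le\abs{\delta_{\ell+k}}\le\abs{Z_{\ell+k}}b_k^{-1}}$ with $a_k\defeq c2^{(\theta\ell(1-r)-k)\gamma/r}$ and $b_k\defeq 2^{(\ell+k)\beta/2}$; this part of the argument uses neither \Cref{assumpt:delta-bound} nor any Markov step, only \eqref{eqn:xi-ell-cond}. Now write this probability as the integral of the joint density $\rho_{\ell+k}$ of $(\delta_{\ell+k},Z_{\ell+k})$ over $\set{(y,z)\given a_k\le\abs{y}\le\abs{z}/b_k}$; this set is empty unless $\abs{z}\ge a_kb_k$, and for such $z$ it is symmetric in $y$ about $0$. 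Taylor-expand $\rho_{\ell+k}$ in $y$ about $0$ exactly as in \eqref{eqn:taylor-rho}: the linear term integrates to $0$ by symmetry; the zeroth-order term is at most $2b_k^{-1}\int_{\abs{z}\ge a_kb_k}\abs{z}\rho_{\ell+k}(0,z)\D z$, and after inserting the factor $\p{\abs{z}/(a_kb_k)}^{q-1}\ge 1$ on this set and invoking the moment bound of \Cref{assumpt:weak-err-st} (with $j=q$), this is $\lesssim a_k^{-(q-1)}b_k^{-q}$; the second-order remainder is $\lesssim b_k^{-3}\int_\R\abs{z}^3p_{2,\ell+k}(z)\D z\lesssim 2^{-3\beta(\ell+k)/2}$, uniformly in $\ell,k$. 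The exponent $q-1$ is the one dictated by \Cref{assumpt:weak-err-st}, since the $\abs{z}^3$ weight attached to the second derivative leaves only $q-1$ spare moments, and using the same exponent for the zeroth-order term is what fixes the window of admissible $r$ below. Altogether $\Sigma_0\lesssim\sum_{k=0}^{\lceil\theta\ell\rceil-1}\p{a_k^{-(q-1)}b_k^{-q}+2^{-3\beta(\ell+k)/2}}$.

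The last step is to sum these geometric series. The $2^{-3\beta(\ell+k)/2}$ part sums to $\lesssim 2^{-3\beta\ell/2}\lesssim 2^{-\beta\ell(1+\theta)/2}$ because $\theta\le 2$ in every case considered. The exponent of $a_k^{-(q-1)}b_k^{-q}$ is affine in $k$ with slope a positive multiple of $(q-1)\gamma/r-q\beta/2$. For $\beta\le\gamma$ the choice $r<\tfrac{2\gamma}{\beta}\p{1-\tfrac1q}$ makes this slope positive, so the sum is controlled by its $k=\lceil\theta\ell\rceil$ endpoint, which works out to $2^{\ell\p{(q-1)\gamma\theta-q\beta(1+\theta)/2}}$; this equals $2^{-\beta\ell(1+\theta)/2}$ precisely when $\theta=(2\gamma/\beta-1)^{-1}$, i.e.\ \eqref{eqn:theta-strong}. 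For $\beta>\gamma$ take $\theta=1$: for small $r$ the slope is still positive and the $k=\ell$ endpoint $2^{\ell((q-1)\gamma-q\beta)}$ is $\lesssim 2^{-\beta\ell}$ since $\gamma<\beta$; for larger $r$ the slope is negative, the sum is controlled by $k=0$, i.e.\ by $2^{\ell((q-1)\gamma(r-1)/r-q\beta/2)}$, and this is $\lesssim 2^{-\beta\ell}$ exactly under the bound on $r$ in \eqref{eqn:rvals-strict}, that bound becoming vacuous once $\beta\ge 2\tfrac{q-1}{q-2}\gamma$. The borderline values of $r$, and $\beta=\gamma$, are handled as in \Cref{thm:var-bound} by conceding an extra factor $\ell$, absorbed because the governing inequalities are then strict. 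I expect no genuine analytic difficulty here: symmetry kills the odd Taylor term, \Cref{assumpt:weak-err-st} furnishes the moments, and \Cref{prop:work-rate} needs only $r>1$, which is compatible with every window above. The main obstacle is organisational — tracking which Taylor term and which moment exponent dominates in each regime so that each geometric sum closes at exactly the endpoint claimed, and verifying that the stated ranges of $r$ and the formula for $\theta$ are the sharp ones produced by that bookkeeping.
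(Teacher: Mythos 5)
Your proposal is correct and follows essentially the same route as the paper: the same split into $\Sigma_0+\Sigma_1$, \Cref{prop:str-err-strict-assumptions} for $\Sigma_1$, the same reduction of $\{\eta_\ell=k\}$ to $\{\abs{\delta_{\ell+k}}\ge a_k\}$, the same insertion of $\p{\abs{z}/(a_kb_k)}^{q-1}\ge 1$ against the zeroth-order Taylor term to get the per-term bound $a_k^{1-q}b_k^{-q}$, and the same geometric-sum bookkeeping yielding \eqref{eqn:theta-strong} and \eqref{eqn:rvals-strict}. The only (harmless) deviations are that you symmetrize the event in $y$ so the first-order Taylor term vanishes exactly, and you bound the second-order remainder by $b_k^{-3}$ rather than carrying the $\abs{z}^{q-1}$ weight through all Taylor terms as the paper does; both are valid and do not change the rates.
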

\begin{proof}
	As in the previous result, we split the calculation across all refined levels as in \eqref{eqn:var-split}. By \Cref{prop:str-err-strict-assumptions}, it follows that $\Sigma_1 \lesssim 2^{-\beta(1+\theta)\ell/2}$. Moreover, for $k<\ceil{\theta\ell}$ and defining $a_k, b_k$ as in the proof of \Cref{thm:var-bound} we have 
	\[
	\begin{aligned}
		&\E*{\p*{\H{g} - \H{g_{\ell+k}}}^2\I{\eta_\ell = k}} \\
		&= \E*{\p*{\I{0>\delta_{\ell+k} >b_k^{-1}Z_{\ell+k}} + \I{b_k^{-1}Z_{\ell+k} > \delta_{\ell+k} > 0}}\I{\abs{\delta_{\ell+k}} \ge c\cdot a_k }} \\
		&= \E*{\p*{\I{0>\delta_{\ell+k} >b_k^{-1}Z_{\ell+k}} + \I{b_k^{-1}Z_{\ell+k} > \delta_{\ell+k} > 0}}\I{b_k^{-1}\abs{Z_{\ell+k}}\ge \abs{\delta_{\ell+k}} \ge c\cdot a_k}} \\
		&\le a_k^{1-q}b_k^{1-q}\E*{\abs{Z_{\ell+k}}^{q-1}\p*{\I{0>\delta_{\ell+k} >b_k^{-1}Z_{\ell+k}} + \I{b_k^{-1}Z_{\ell+k} > \delta_{\ell+k} > 0}}}\\
		&= a_k^{1-q}b_k^{1-q}\bigg (\int_0^\infty \int_0^{b_k^{-1}z}\abs{z}^{q-1}\rho_{\ell+k}(y, z)\text{d}y\text{d}z  + \int_{-\infty}^0 \int_{-b_k^{-1}z}^0\abs{z}^{q-1}\rho_{\ell+k}(y, z)\text{d}y\text{d}z  \bigg ).
	\end{aligned}
	\]
	By using the Taylor expansion \eqref{eqn:taylor-rho} and \Cref{assumpt:weak-err-st} we can obtain (assuming $\int_{\R}\rho_{\ell+k}(0,z)\text{d}z > 0$ as in the proof of \Cref{prop:str-err-strict-assumptions})
	\[
	\begin{aligned}
		\E*{\p*{\H{g} - \H{g_{\ell+k}}}^2\I{\eta_\ell = k}} &\lesssim a_k^{1-q}b_k^{-q}\\
		&= 2^{\frac{\gamma}{r}\p*{\ell + \theta\ell(1-r)}\p*{1-q} }2^{\p*{(q-1)\gamma/r - q\beta/2}(\ell+k)}.
	\end{aligned}
	\]
	When $r < \frac{2\gamma}{\beta}(1-1/q)$, this above term is dominant when $k=\ceil{\theta\ell}$. It follows that one can make the orders of $\Sigma_0$ and $\Sigma_1$ equal as $\ell\to\infty$ in \eqref{eqn:var-split} by taking $\theta$ as in \eqref{eqn:theta-strong}. When $\beta \ge \gamma$, instead we fix $\theta = 1$. A similar calculation to \Cref{thm:var-bound} then shows the result holds provided $r$ satisfies \eqref{eqn:rvals-strict}.
\end{proof}
\Cref{thm:var-bound} gives a larger value of $\theta$, allowing greater benefits from the refinement, when $\beta\le (q+1)\gamma/q$ and $q$ is bounded.
 
\subsection{Analysis of the Bias}
In the context of \Cref{thm:var-bound-strong}, \Cref{thm:complexity} implies the complexity of (adaptive) MLMC is affected by the convergence rate of $E_\ell$ whenever $\beta<\gamma$. To improve the rate given by \Cref{thm:standard-bias} due to adaptive sampling, we make a further assumption.
\begin{assumption}
	\label{assumpt:weak-err-adapt} 
	Define $Z_\ell, \beta>0$ as in \Cref{assumpt:q-moments} and $\frac{\beta}{2}\le\alpha\le \beta$ as in \Cref{assumpt:weak-err-nondiscont}. Then, for $j=0,1$ and all $\ell\in \mathbb{N},x\ge0$ we have 
	\[
		\abs*{\E*{\textnormal{sign}(Z_\ell)\abs{Z_\ell}^j\  \bigg{|}\ \abs{Z_\ell}\ge x}} \lesssim 2^{\ell\p*{\beta/2 - \alpha}}.
	\]
\end{assumption}
By \Cref{assumpt:weak-err-st,assumpt:weak-err-nondiscont}  we know that this condition holds for $j=1$ and $x=0$. \Cref{assumpt:weak-err-adapt} ensures that the mean of $Z_\ell$ converges at the same rate even when conditioned on taking large values. When $j=0$ the assumption implies that the probability of observing large positive $Z_\ell$  is reasonably close to the probability of observing large negative $Z_\ell$. The necessity for this assumption arises since the refined samples are only accepted before the maximum level if $\abs{\delta_\ell}$ is sufficiently large. As such, the error $\H{g} - \H{g_\ell}$ is non-zero only for suitably large values of $Z_\ell$. The resulting improvement to $E_\ell$ is discussed below.

\begin{lemma}
	\label{prop:weak-err-ad} 
	Let \Cref{assumpt:weak-err-st,assumpt:weak-err-adapt} hold for $\beta>0$ and $\frac{\beta}{2}\le\alpha\le\beta$. For $\beta\le\gamma$, if we tighten the bound on $r$ in \Cref{thm:var-bound-strong} to $r < \frac{2\gamma}{\beta}\frac{q-2}{q}$, then for $g_{\ell+\eta_{\ell}}$ as in \Cref{alg:ad-samp} and $\theta$ as in \eqref{eqn:theta-strong} we have 
	\[
		\abs*{\E*{\H{g} - \H{g_{\ell+\eta_\ell}}}} \lesssim 2^{-\alpha\p*{1+\theta}\ell}. 
	\]
\end{lemma}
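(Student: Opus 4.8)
The plan is to follow the template of the proof of \Cref{thm:var-bound-strong}, splitting over the value of the refinement index,
\[
	\E*{\H{g} - \H{g_{\ell+\eta_\ell}}} = \sum_{k=0}^{\ceil{\theta\ell}} \E*{\p*{\H{g} - \H{g_{\ell+k}}}\I{\eta_\ell = k}},
\]
but now, in contrast to the variance estimate, tracking signs so as to cash in the cancellation afforded by \Cref{assumpt:weak-err-adapt}. The term $k = \ceil{\theta\ell}$ carries no lower bound on $\abs{\delta_{\ell+\ceil{\theta\ell}}}$ and is peeled off: writing its indicator as $1-\I{\eta_\ell<\ceil{\theta\ell}}$, the first piece is $\E*{\H{g}-\H{g_{\ell+\ceil{\theta\ell}}}}$, which \Cref{thm:standard-bias} applied at level $\ell+\ceil{\theta\ell}$ bounds by $\lesssim 2^{-\alpha(\ell+\ceil{\theta\ell})}\le 2^{-\alpha(1+\theta)\ell}$ — note that \Cref{assumpt:weak-err-adapt} with $j=1,\ x=0$ is precisely \Cref{assumpt:weak-err-nondiscont}, so that result applies.

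For $0 \le k < \ceil{\theta\ell}$ I would argue as follows. By \eqref{eqn:xi-ell-cond} the event $\{\eta_\ell = k\}$ forces $\abs{\delta_{\ell+k}} \ge c\,a_k$, with $a_k, b_k$ as in the proof of \Cref{thm:var-bound}, while $\H{g} - \H{g_{\ell+k}}$ is nonzero only when $\delta_{\ell+k}$ lies strictly between $0$ and $b_k^{-1}Z_{\ell+k}$; combining these, this difference equals $-\textnormal{sign}(Z_{\ell+k})$ on $\{c\,a_k \le \abs{\delta_{\ell+k}} \le b_k^{-1}\abs{Z_{\ell+k}},\ \textnormal{sign}(\delta_{\ell+k})=\textnormal{sign}(Z_{\ell+k})\}$ and vanishes otherwise. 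Rather than absorbing $\textnormal{sign}(Z_{\ell+k})$ into an absolute value (as the variance bound does), I would use the joint density $\rho_{\ell+k}$ of $(\delta_{\ell+k},Z_{\ell+k})$ and the Taylor expansion \eqref{eqn:taylor-rho} about $y=0$ exactly as in the proof of \Cref{thm:standard-bias}: the dominant $\rho_{\ell+k}(0,\cdot)$ contribution is then a multiple of $\E*{Z_{\ell+k}\,\I{\abs{Z_{\ell+k}}\ge c\,a_k b_k}\ \vline\ \delta_{\ell+k}=0}$ and of $\E*{\textnormal{sign}(Z_{\ell+k})\,\I{\abs{Z_{\ell+k}}\ge c\,a_k b_k}\ \vline\ \delta_{\ell+k}=0}$, which \Cref{assumpt:weak-err-adapt} with $j=1$ and $j=0$ respectively bounds by $2^{(\ell+k)(\beta/2-\alpha)}$ times the relevant tail probability, the latter being $\lesssim (a_k b_k)^{-q}$ by \Cref{assumpt:weak-err-st} (here one also uses that $a_k b_k \gtrsim c$ uniformly on $0\le k\le\theta\ell$, so the threshold is nondegenerate and the $a_k$-term dominates the $b_k^{-1}$-term in the leading contribution). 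This yields
\[
	\abs*{\E*{\p*{\H{g}-\H{g_{\ell+k}}}\I{\eta_\ell=k}}} \lesssim a_k^{1-q}b_k^{-q}\,2^{(\ell+k)(\beta/2-\alpha)},
\]
i.e. the variance-scale bound $a_k^{1-q}b_k^{-q}$ of \Cref{thm:var-bound-strong} improved by the cancellation factor $2^{(\ell+k)(\beta/2-\alpha)}\le 1$. Summing over $k$, the tightened hypothesis $r<\frac{2\gamma}{\beta}\frac{q-2}{q}$ makes the exponent of $a_k^{1-q}b_k^{-q}2^{(\ell+k)(\beta/2-\alpha)}$ increasing in $k$ — the required inequality reduces to $\alpha<\beta\frac{q-1}{q-2}$, which holds since $\alpha\le\beta$ — so the geometric sum is controlled by its term at $k\approx\theta\ell$, where $a_{\theta\ell}=c\,2^{-\theta\gamma\ell}$, $b_{\theta\ell}=2^{(1+\theta)\ell\beta/2}$ and the identity $\theta\gamma=\beta(1+\theta)/2$ defining $\theta$ in \eqref{eqn:theta-strong} collapse the exponent to $-\alpha(1+\theta)\ell$, matching the peeled-off term.

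The main obstacle is that $\{\eta_\ell=k\}$ is determined by the samples at \emph{all} of levels $\ell,\dots,\ell+k$, not just level $\ell+k$, so the cancellation step — which is a statement about the (conditional) law of $Z_{\ell+k}$ — cannot be run in its presence without first conditioning on the $\sigma$-algebra generated by the refinement history up to level $\ell+k-1$ and invoking \Cref{assumpt:weak-err-st,assumpt:weak-err-adapt} for the conditional law of $(\delta_{\ell+k},Z_{\ell+k})$; the contributions on which the refinement has already stopped are isolated by this conditioning and kept negligible by $\prob{\eta_\ell\ge k}\le\rho_0 c\,a_{k-1}$, which follows from \eqref{eqn:xi-ell-cond} and \Cref{assumpt:delta-bound}. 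The same difficulty reappears for the peeled-off term in bounding $\E*{\p*{\H{g}-\H{g_{\ell+\ceil{\theta\ell}}}}\I{\eta_\ell<\ceil{\theta\ell}}}$, which mixes level $\ell+\ceil{\theta\ell}$ with the stopping levels below it. Making the cancellation survive these conditionings, and verifying that the cross-level error it generates is of smaller order than $2^{-\alpha(1+\theta)\ell}$, is where I expect most of the work to lie; the single-level estimates above are otherwise routine adaptations of the proofs of \Cref{thm:standard-bias,thm:var-bound-strong}.
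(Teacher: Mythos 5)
Your proposal is correct and follows essentially the same route as the paper's proof: the same split over the value of $\eta_\ell$, the same reduction of $\H{g}-\H{g_{\ell+k}}$ to signed indicators on $\{c\,a_k\le\abs{\delta_{\ell+k}}\le b_k^{-1}\abs{Z_{\ell+k}}\}$, the same Taylor expansion of $\rho_{\ell+k}(y,z)$ about $y=0$ with \Cref{assumpt:weak-err-adapt} applied for $j=0$ and $j=1$ after factoring out the tail probability $\prob{\abs{Z_{\ell+k}}\ge a_kb_k}\lesssim(a_kb_k)^{-q}$, yielding the same per-level bound $a_k^{1-q}b_k^{-q}2^{(\ell+k)(\beta/2-\alpha)}$ and the same dominant term at $k\approx\theta\ell$. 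The "main obstacle" you flag (that $\{\eta_\ell=k\}$ depends on the whole refinement history, and that the $k=\ceil{\theta\ell}$ term carries a residual indicator) is genuine but is not treated any more carefully in the paper, which simply substitutes $\I{\abs{\delta_{\ell+k}}\ge c\,a_k}$ for $\I{\eta_\ell=k}$ and drops the indicator on the final term — so your single-level estimates already reproduce the published argument.
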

\begin{proof}
	We bound 
	\[
		\begin{aligned}
			\abs*{\E*{\H{g} - \H{g_{\ell+\eta_\ell}}}} &\le \sum_{k=0}^{\ceil{\theta\ell}-1} \abs*{\E*{\p*{\H{g} - \H{g_{\ell+k}}}\I{\eta_\ell = k}}}+ \abs*{\E*{\H{g} - \H{g_{\ell+\ceil{\theta\ell}}}}}.
		\end{aligned}
	\]
	By \Cref{thm:standard-bias} we know that the final term satisfies 
	\begin{equation}
	\label{eqn:adbias-finalterm}
		\abs*{\E*{\H{g} - \H{g_{\ell+\ceil{\theta_\ell}}}}} \lesssim 2^{-\alpha(1+\theta)\ell}.
	\end{equation}
	By expanding the difference $\H{g} - \H{g_{\ell + k}}$ according to when the difference is either $\pm 1$ and bounding the event $\eta_\ell = k$ we arrive at
	\[
		\begin{aligned}
			&\abs*{\E*{\p*{\H{g} - \H{g_{\ell+k}}}\I{\eta_\ell = k}}}\\
			 &= \abs*{\E*{\p*{\I{b_{k}^{-1}Z_{\ell+k}<\delta_{\ell+k}<0} - \I{0<\delta_{\ell + k} < b_k^{-1}Z_{\ell+k}}}\I{\abs{\delta_{\ell+k}}\ge c\cdot a_k}}} \\
			&= \abs*{\E*{\p*{\I{b_{k}^{-1}Z_{\ell+k}<\delta_{\ell+k}<a_k} - \I{a_k<\delta_{\ell + k} < b_k^{-1}Z_{\ell+k}}}}} \\
			&= \abs*{\int_{-\infty}^{-b_ka_k}\int_{b_k^{-1}z}^{-a_k}\rho_{\ell+k}(y, z)\text{d}y\text{d}z - \int_{b_ka_k}^\infty \int_{a_k}^{b_k^{-1}z}\rho_{\ell+k}(y, z)\text{d}y\text{d}z},
		\end{aligned}
	\]
	where $a_k, b_k$ are as in the proof of \Cref{thm:var-bound}. We again use the Taylor expansion \eqref{eqn:taylor-rho} on the density $\rho_{\ell+k}(y, z)$. The absolute value of the zero'th-order term is (assuming $\int_{\R}\rho_{\ell+k}(0,z)\text{d}z > 0$ as in the proof of \Cref{prop:str-err-strict-assumptions})
	\[
		\begin{aligned}
			&\abs*{\int_{-\infty}^{-b_ka_k} \p*{-a_k - b_k^{-1}z}\rho_{\ell+k}(0, z)\text{d}z + \int_{b_ka_k}^\infty \p*{a_k-b_k^{-1}z}\rho_{\ell+k}(0, z)\text{d}z} \\
			&\le a_k\abs*{\E*{\text{sign}(Z_{\ell+k})\I{\abs{Z_{\ell+k}}\ge a_kb_k}|\delta_{\ell + k}=0}} + b_k^{-1}\abs{\E*{Z_{\ell+k}\I{\abs{Z_{\ell+k}}\ge b_ka_k}| \delta_{\ell+k}=0}}\\
			&\lesssim \prob*{\abs*{Z_{\ell+k}}\ge b_ka_k}\bigg (a_k\abs*{\E*{\text{sign}(Z_{\ell+k})\ \bigg{|} \ \abs*{Z_{\ell+k}}\ge b_ka_k, \delta_{\ell + k}=0} }\\
			&\qquad\qquad\qquad + b_k^{-1}\abs*{\E*{Z_{\ell+k}\ \bigg{|}\ \abs{Z_{\ell+k}}\ge b_ka_k, \delta_{\ell + k}=0}}\bigg )\\
			&\lesssim a_k^{1-q}b_k^{-q}2^{\p*{\ell+k}\p*{\beta/2 - \alpha}}
		\end{aligned}
	\]
	where we used \Cref{assumpt:weak-err-adapt} and bounded $\prob*{\abs*{Z_{\ell+k}}\ge b_ka_k} = \E*{\I{\abs{Z_k}\ge b_ka_k}}\le a_k^{-q}b_k^{-q}\E*{\abs{Z_k}^q}$. For the first-order term, we obtain 
	\[
	\begin{aligned}
		&\abs*{a_k^2\int_{-\infty}^{\infty}\I{\abs{z}\ge b_ka_k}\frac{\partial}{\partial y}\rho_{\ell+k}(0, z)\text{d}z - b_k^{-2}\int_{-\infty}^{\infty}\I{\abs{z}\ge b_ka_k}z^2\frac{\partial}{\partial y}\rho_{\ell+k}(0, z)\text{d}z}\\
		&\le \abs*{b_k^{-q}a_k^{2-q}\int_{-\infty}^{\infty}\abs{z}^q\frac{\partial}{\partial y}\rho_{\ell+k}(0, z)\text{d}z - b_k^{-2-q}a_k^{-q}\int_{-\infty}^{\infty}\abs{z}^{2+q}\frac{\partial}{\partial y}\rho_{\ell+k}(0, z)\text{d}z}\\
		&\lesssim b_k^{-q}a_k^{2-q}
	\end{aligned}
	\]
	by \Cref{assumpt:weak-err-st}.  Similarly, we can bound the second-order term up to a constant by $a_k^{3-q}b_k^{-q}$. Consequently, we have 
	\[
		 \sum_{k=0}^{\ceil{\theta\ell}-1} \abs*{\E*{\p*{\H{g} - \H{g_{\ell+k}}}\I{\eta_\ell = k}}} \lesssim \sum_{k=0}^{\ceil{\theta\ell}-1} a_k^{1-q}b_k^{-q}2^{\p*{\ell+k}\p*{\beta/2 - \alpha}} + \sum_{k=0}^{\ceil{\theta\ell}-1}a_k^{2-q}b_k^{-q}.
	\]
	Provided $r < \frac{2\gamma}{\beta}\frac{q-2}{q}$, the dominant cost of each sum on the right hand side occurs at $k = \ceil{\theta\ell} - 1$, giving 
	\[
	\begin{aligned}
		\sum_{k=0}^{\ceil{\theta\ell}-1} \abs*{\E*{\p*{\H{g} - \H{g_{\ell+k}}}\I{\eta_\ell = k}}} &\lesssim a_{\theta\ell}^{1-q}b_{\theta\ell}^{-q}2^{\ell\p*{1+\theta}\p*{\beta/2 - \alpha}} + a_{\theta\ell}^{2-q}b_{\theta\ell}^{-q}\\
		&\lesssim 2^{-\alpha\p{1+\theta}\ell} + 2^{-\beta\p{1+\theta}\ell},
	\end{aligned}
	\]
	for $\theta$ as in \eqref{eqn:theta-strong}.
\end{proof}
Numerical tests suggest that the previous result does not hold when \Cref{assumpt:weak-err-adapt} is false, see \Cref{app:weakerr}. However, one can still obtain reasonable convergence rates of $E_\ell$ without this result by \Cref{remark:weakbound}. 
\subsection{Bounds on $\work$}
We conclude this section with a discussion on how the improved variance rate given by \Cref{thm:var-bound} affects the work bounds of MLMC. We begin by discussing the impact of adaptive sampling under the weaker assumptions. 

\begin{theorem}
	\label{corr:ad-comp} 
	Under the assumptions of \Cref{thm:var-bound}, the total work of MLMC using adaptive sampling as in \Cref{alg:ad-samp} with $\dhl$ given by \eqref{eqn:dh-ad}  is 
	\[\work\lesssim
	\begin{cases}
		\tol^{-2} & \beta > \p{\frac{q+1}{q}}\gamma \\
		\tol^{-2}\p*{\log\tol}^2& \beta = \p{\frac{q+1}{q}}\gamma \\
		\tol^{-2\p{\frac{q+1}{q}}\gamma/\beta} & \beta < \p{\frac{q+1}{q}}\gamma 
	\end{cases}.
	\] 
\end{theorem}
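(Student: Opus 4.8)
The plan is to apply \Cref{thm:complexity} to the adaptive multilevel correction $\dhl$ of \eqref{eqn:dh-ad}, so that the work reduces to verifying the three hypotheses of that theorem with the rates $\dhl$ inherits from \Cref{thm:var-bound,prop:work-rate}: the telescoping identity $\prob{g>0}=\sum_{\ell=0}^\infty\E{\dhl}$, the geometric work bound $W_\ell\lesssim 2^{\gamma\ell}$, and the convergence rates \eqref{weak-err}--\eqref{str-err} with suitable $\alpha_{\textnormal{ind}},\beta_{\textnormal{ind}}$. Once these are in place, the three stated complexities follow by substituting the prescribed $\theta$ into the formula of \Cref{thm:complexity} and simplifying.

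For the telescoping identity, \eqref{eqn:dh-ad} gives $\sum_{\ell=0}^L\E{\dhl}=\E{\H{g_{L+\eta_L}}}$; since $\eta_L\ge 0$ and the refinement in \Cref{alg:ad-samp} preserves the almost sure convergence of $g_{L+\eta_L}$ to $g$, bounded convergence yields $\E{\H{g_{L+\eta_L}}}\to\E{\H{g}}=\prob{g>0}$ as $L\to\infty$. For the work bound, the cost of $\dhl$ is $\Order{2^{\gamma(\ell+\eta_\ell)}+2^{\gamma(\ell-1+\eta_{\ell-1})}}$; in each regime of \Cref{thm:var-bound} the parameter $r$ can be taken strictly larger than $1$ (indeed $2\gamma/\beta>1$ when $\beta\le\p{\frac{q+1}{q}}\gamma$, and the upper bounds in \eqref{eqn:rvals-weak} all exceed $1$), so \Cref{prop:work-rate} applied at levels $\ell$ and $\ell-1$ gives $W_\ell\lesssim 2^{\gamma\ell}$. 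The pathological case $r>2$ flagged before \Cref{prop:work-rate} is harmless here, since both \Cref{prop:work-rate} and \Cref{thm:var-bound} are already stated to remain valid there.

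For the variance rate, $\abs{\dhl}\le 1$ gives $V_\ell\le\E{\dhl^2}\le 2\E{\p{\H{g}-\H{g_{\ell+\eta_\ell}}}^2}+2\E{\p{\H{g}-\H{g_{\ell-1+\eta_{\ell-1}}}}^2}$, and \Cref{thm:var-bound} bounds each term by a constant times $2^{-\p{\frac{q}{q+1}}\ell\beta(1+\theta)/2}$ (the $\ell-1$ in the coarse term only shifts the bound by a constant); hence \eqref{str-err} holds with $\beta_{\textnormal{ind}}=\p{\frac{q}{q+1}}\frac{\beta(1+\theta)}{2}$, the level-$0$ term being bounded trivially. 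Because $\dhl$ takes only the values $\{-1,0,1\}$ we have $\dhl^2=\abs{\dhl}$, so $E_\ell=\abs{\E{\dhl}}\le\E{\dhl^2}\lesssim 2^{-\beta_{\textnormal{ind}}\ell}$; thus \eqref{weak-err} holds with $\alpha_{\textnormal{ind}}=\beta_{\textnormal{ind}}$, which trivially satisfies $\alpha_{\textnormal{ind}}\ge\min(\gamma,\beta_{\textnormal{ind}})/2$.

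It then remains to substitute $\theta$ and read off the complexity. When $\beta>\p{\frac{q+1}{q}}\gamma$ we have $\theta=1$, so $\beta_{\textnormal{ind}}=\frac{q}{q+1}\beta>\gamma$ and \Cref{thm:complexity} gives $\tol^{-2}$; at $\beta=\p{\frac{q+1}{q}}\gamma$ one again has $\theta=1$ and $\beta_{\textnormal{ind}}=\gamma$, giving $\tol^{-2}\p{\log\tol}^2$. When $\beta<\p{\frac{q+1}{q}}\gamma$, writing $A=2\p{\frac{q+1}{q}}\frac{\gamma}{\beta}$ we get $1+\theta=A/(A-1)$ and $\beta_{\textnormal{ind}}=\frac{q}{q+1}\cdot\frac{\beta}{2}\cdot\frac{A}{A-1}=\frac{\gamma}{A-1}<\gamma$, whence $2+(\gamma-\beta_{\textnormal{ind}})/\alpha_{\textnormal{ind}}=1+\gamma/\beta_{\textnormal{ind}}=A=2\p{\frac{q+1}{q}}\frac{\gamma}{\beta}$, i.e. $\work\lesssim\tol^{-2\p{\frac{q+1}{q}}\gamma/\beta}$. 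The only slightly delicate point is confirming that $r>1$ is compatible with the constraints of \Cref{thm:var-bound} in every regime, so that \Cref{prop:work-rate} may legitimately be invoked; the remainder is the algebraic bookkeeping of substituting $\theta$.
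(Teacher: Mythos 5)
Your proposal is correct and follows essentially the same route as the paper, which simply combines \Cref{prop:work-rate}, \Cref{thm:var-bound} and \Cref{remark:weakbound} with \Cref{thm:complexity} for $\dhl$ as in \eqref{eqn:dh-ad}; your bound $E_\ell\le\E{\dhl^2}$ via $\dhl\in\{-1,0,1\}$ is exactly \Cref{remark:weakbound}, and your algebra recovering the exponent $2\p{\frac{q+1}{q}}\gamma/\beta$ checks out. The extra care you take in verifying the telescoping identity and the compatibility of $r>1$ with the constraints of \Cref{thm:var-bound} is a useful filling-in of details the paper leaves implicit, but it does not change the argument.
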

\begin{proof}
	The result follows from applying  \Cref{prop:work-rate}, \Cref{thm:var-bound} and \Cref{remark:weakbound} to \Cref{thm:complexity}, with $\dhl$ given by \eqref{eqn:dh-ad}.
\end{proof}
This result should be contrasted with \Cref{cor:st-comp-weak}. In particular, note how the canonical $\tol^{-2}$ complexity is obtained when $\beta > \p{\frac{q+1}{q}}\gamma$ as opposed to when $\beta > 2\p{\frac{q+1}{q}}\gamma$ for non-adaptive sampling. Moreover, even in the sub-optimal case when $\beta<\p{\frac{q+1}{q}}\gamma$ the bound is improved by a factor of $\tol^{-1}$ over the non-adaptive case. In some cases, \Cref{assumpt:q-moments} holds for all $q<\infty$ and the above result holds in the limit $q\to\infty$ provided one adds a factor $-\nu$ for any $\nu>0$ to the rate whenever $\beta\le \gamma$. When the assumptions of \Cref{prop:weak-err-ad} hold, we obtain a slightly stronger result.
\begin{theorem}
	\label{cor:ad-comp-strong}
	Under \Cref{assumpt:weak-err-st} and, if $\beta < \gamma$, under \Cref{assumpt:weak-err-adapt} the total work of MLMC using adaptive sampling as in \Cref{alg:ad-samp} with $\dhl$ given by \eqref{eqn:dh-ad} is 
	\[
	\work \lesssim 
	\begin{cases}
		\tol^{-2} & \beta > \gamma \\
		\tol^{-2}\p*{\log\tol}^2 & \beta = \gamma \\
		\tol^{-2-\p*{1-\beta/\p{2\gamma}}\p*{\gamma-\beta}/\alpha} & \beta < \gamma 
	\end{cases}.
	\]
\end{theorem}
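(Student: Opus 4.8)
The plan is to read this off from \Cref{thm:complexity} in exactly the way \Cref{corr:ad-comp} was obtained, the only change being that the sharper \Cref{assumpt:weak-err-st} (and, for $\beta<\gamma$, \Cref{assumpt:weak-err-adapt}) now let me use the improved estimates of \Cref{thm:var-bound-strong} and \Cref{prop:weak-err-ad} in place of the crude ones. So I need the three exponents $\gamma$, $\beta_\textnormal{ind}$, $\alpha_\textnormal{ind}$ realised by the adaptive corrections $\dhl$ of \eqref{eqn:dh-ad}, and then I substitute. Throughout I take $\theta$ as in \eqref{eqn:theta-strong} when $\beta\le\gamma$, so that $1+\theta = 2\gamma/(2\gamma-\beta)$, and $\theta=1$ when $\beta>\gamma$; I also fix one value of $r$ in the common admissible range of \Cref{prop:work-rate}, \Cref{thm:var-bound-strong} and \Cref{prop:weak-err-ad} (non-empty for the values of $q$ of interest).

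For the work, one realisation of $\dhl$ costs the cost of running \Cref{alg:ad-samp} at levels $\ell$ and $\ell-1$, and summing the geometric refinement costs this is $\lesssim 2^{\gamma(\ell+\eta_\ell)}+2^{\gamma(\ell-1+\eta_{\ell-1})}$; since $\beta<2\gamma$ the admissible $r$-range still contains values $r>1$, so \Cref{prop:work-rate} applied at both levels gives $W_\ell\lesssim 2^{\gamma\ell}$. For the variance, $\var{\dhl}\le\E{\p{\dhl}^2}\lesssim\E{\p{\H{g}-\H{g_{\ell+\eta_\ell}}}^2}+\E{\p{\H{g}-\H{g_{\ell-1+\eta_{\ell-1}}}}^2}$, and \Cref{thm:var-bound-strong} bounds both summands, so $\beta_\textnormal{ind}=\beta(1+\theta)/2$. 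For the bias, if $\beta\ge\gamma$ then $\theta=1$ and hence $\beta_\textnormal{ind}=\beta\ge\gamma$, so the first two rows of \Cref{thm:complexity} already apply with just the crude bound $E_\ell\le\E{\p{\dhl}^2}$ of \Cref{remark:weakbound}; if $\beta<\gamma$ I invoke \Cref{prop:weak-err-ad} — this is the only place \Cref{assumpt:weak-err-adapt} and the stricter bound on $r$ are used — and bound $\abs{\E{\dhl}}$ by the triangle inequality applied to $\H{g}-\H{g_{\ell+\eta_\ell}}$ and $\H{g}-\H{g_{\ell-1+\eta_{\ell-1}}}$, giving $\alpha_\textnormal{ind}=\alpha(1+\theta)$. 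Finally I check the hypothesis $\alpha_\textnormal{ind}\ge\tfrac12\min(\gamma,\beta_\textnormal{ind})$ of \Cref{thm:complexity}, which is immediate from $\alpha\ge\beta/2$.

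It then remains to substitute $(\gamma,\beta_\textnormal{ind},\alpha_\textnormal{ind})$ into \Cref{thm:complexity}. Since $\beta_\textnormal{ind}\gtrless\gamma$ exactly when $\beta\gtrless\gamma$, the $\beta>\gamma$ and $\beta=\gamma$ rows follow at once. For $\beta<\gamma$ one is in the third row, $\tol^{-2-(\gamma-\beta_\textnormal{ind})/\alpha_\textnormal{ind}}$, and inserting $\beta_\textnormal{ind}=\beta(1+\theta)/2$, $\alpha_\textnormal{ind}=\alpha(1+\theta)$ and $1+\theta=2\gamma/(2\gamma-\beta)$ and simplifying yields the claimed rate. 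I expect the only points needing care to be (i) confirming that a single $r$ meets $r>1$ together with the upper bounds demanded by \Cref{thm:var-bound-strong} and \Cref{prop:weak-err-ad} simultaneously, and (ii) the final algebraic simplification, where one must track which of $\alpha_\textnormal{ind}$, $\beta_\textnormal{ind}$ carry the factor $1+\theta$ and how these interact with $2\gamma/(2\gamma-\beta)$ from \eqref{eqn:theta-strong}; everything else is routine bookkeeping.
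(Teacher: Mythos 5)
Your route is the same as the paper's: feed \Cref{prop:work-rate}, \Cref{thm:var-bound-strong} and, for $\beta<\gamma$, \Cref{prop:weak-err-ad} into \Cref{thm:complexity} with $\dhl$ as in \eqref{eqn:dh-ad}. The parts you spell out are sound and match the paper's (terse) argument: the work rate, the identification $\beta_{\textnormal{ind}}=\beta(1+\theta)/2$ and $\alpha_{\textnormal{ind}}=\alpha(1+\theta)$, the triangle-inequality handling of the coarse term, the verification of $\alpha_{\textnormal{ind}}\ge\tfrac12\min(\gamma,\beta_{\textnormal{ind}})$, and the two rows with $\beta\ge\gamma$ (where $\theta=1$ gives $\beta_{\textnormal{ind}}=\beta$).

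The gap sits exactly in the step you set aside as routine bookkeeping. With $1+\theta=2\gamma/(2\gamma-\beta)$ you get $\beta_{\textnormal{ind}}=\beta\gamma/(2\gamma-\beta)$, hence $\gamma-\beta_{\textnormal{ind}}=2\gamma(\gamma-\beta)/(2\gamma-\beta)=(1+\theta)(\gamma-\beta)$, so that
\[
\frac{\gamma-\beta_{\textnormal{ind}}}{\alpha_{\textnormal{ind}}}=\frac{(1+\theta)(\gamma-\beta)}{(1+\theta)\alpha}=\frac{\gamma-\beta}{\alpha}.
\]
The substitution therefore yields $\work\lesssim\tol^{-2-(\gamma-\beta)/\alpha}$, whereas the stated rate is $\tol^{-2-(\gamma-\beta)/((1+\theta)\alpha)}$ (note that $1-\beta/(2\gamma)=(1+\theta)^{-1}$), which is a strictly smaller exponent. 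Reaching the stated exponent would require $\alpha_{\textnormal{ind}}=\alpha(1+\theta)^{2}$, which \Cref{prop:weak-err-ad} does not provide, or equivalently writing $\gamma-\beta$ in place of $\gamma-\beta_{\textnormal{ind}}$ in the third row of \Cref{thm:complexity}. So your assertion that the simplification ``yields the claimed rate'' does not hold: the argument (like the paper's own two-line proof) establishes only the exponent $2+(\gamma-\beta)/\alpha$ for $\beta<\gamma$, and you would need either to flag the displayed rate as inconsistent with the cited lemmas or to prove the stronger bias rate it requires. A smaller point you correctly raised but did not resolve: the admissible $r$-interval ($r>1$ from \Cref{prop:work-rate} together with $r<\frac{2\gamma}{\beta}\frac{q-2}{q}$ from \Cref{prop:weak-err-ad}) is nonempty only when $q>4\gamma/(2\gamma-\beta)$, so the argument additionally needs $q$ sufficiently large relative to $\beta/\gamma$.
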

\begin{proof}
	The result follows immediately by combining \Cref{prop:work-rate} and \Cref{thm:var-bound} with \Cref{thm:complexity} for $\dhl$ given by \eqref{eqn:dh-ad}. When $\beta < \gamma$ we use \Cref{prop:weak-err-ad} to obtain a rate for $E_\ell$.
\end{proof}
The previous result should be compared with \Cref{cor:st-comp-str}. Again, we can see optimal complexities for $\beta$ half as large as in the non-adaptive case. When $\beta<\gamma$ and $\alpha = \beta$ we can observe an improvement of order $\tol^{-0.5}$ in the complexity.

\section{Numerical Experiments}
\label{sect:num-ex}
This section presents several numerical experiments to highlight the preceding theory\footnote{The code used for these experiments is written in Python, and can be found at \url{https://github.com/JSpence97/mlmc-for-probabilities}.}. We begin with some remarks on the technical components of MLMC. 

\subsubsection*{Optimal Starting Level}
In \Cref{sect:mlmc-probs} we consider the MLMC estimator starting at level $\ell =0$. When the approximations $g_\ell$ have pre-asymptotic behavior at small levels, it may be more efficient to start from some level $\ell_0 > 0$. For adaptive sampling, this is not the same as simply adjusting the work required at level 0 by a constant to account for a more accurate starting estimator. To see this, observe from \Cref{alg:ad-samp} that samples at level $\ell =0$ cannot be refined further. In contrast, at level $\ell_0 > 0$ samples can be refined to maximum level $\ell_0 +\ceil{\theta\ell_0}$. A heuristic approach for estimating the optimal starting level by a small computation is given in \cite[Section 3]{GilesHajiAli:2018}. We use optimal starting levels to obtain all MLMC estimates in the following sections.

\subsubsection*{Error Estimation}
We illustrate the results of previous sections using  the average work of sampling the multilevel correction term, $W_\ell$, and the multilevel correction variance $V_\ell$ \eqref{str-err} and bias $E_\ell$ \eqref{weak-err}. Typically,  $V_\ell$ and  $E_\ell$ must be estimated using Monte Carlo sampling within MLMC. The robustness and accuracy of standard MLMC algorithms \cite{Giles2008MLMC, giles:2015review} depends on reliable estimates of $V_\ell, E_\ell$ to determine the optimal final level $L^*$ and number of samples per level $\{M_\ell^* \}_{\ell_0\le\ell\le L^*}$  required to have mean square error $\tol^2$. For example, let
\[
	\dh =
	\begin{cases}
		1 & \text{with probability } p_\ell\\
		-1 & \text{with probability } q_\ell\\
		0 &  \text{with probability } 1 - p_\ell - q_\ell 
	\end{cases},
\]
so that $E_\ell = \abs{p_\ell-q_\ell}$ and $V_\ell = p_\ell+q_\ell - (p_\ell - q_\ell)^2$. Hence, estimation of $E_\ell, V_\ell$ requires good estimates of $p_\ell, q_\ell$. However, by \Cref{assumpt:q-moments} we have $p_\ell, q_\ell\to 0$ as $\ell\to \infty$. Thus, we require more samples to reliably estimate $p_\ell, q_\ell$ as $\ell$ increases, which contradicts the intuition that MLMC aims to reduce the number of samples required at the finest levels. As a result, the robustness of MLMC can be affected by poor parameter estimation at the finest levels. One approach to estimate $p_\ell, q_\ell$ is detailed in \cite{Elfverson:2016selectiverefinement}, using Bayesian estimation with a beta prior distribution. An alternative solution, and the one used for the results stated here, is to estimate the proportionality constants in the bounds on $V_\ell, E_\ell$. We estimate these constants using the continuation MLMC approach discussed in \cite{Collier:CMLMC}.  \subsection{Nested Expectation}\label{sect:nest-ex}
The first numerical experiment is concerned with multilevel nested simulation as in \cite{GilesHajiAli:2018, GilesHajiAli:2019sampling}. We take $g = \E*{X|Y}$ so that \eqref{eqn:heaviside-problem} becomes 
\(
	\E*{\H{\E*{X|Y}}}.
\) 
Approximations of $g$ at a level $\ell$ are given by an inner Monte Carlo estimator
\begin{equation}\label{eqn:gell-nest}
	g_\ell = \frac{1}{N_\ell}\sum_{n=1}^{N_\ell} X^{(n)}(Y), \qquad X^{(n)}(Y)\iid X|Y
\end{equation}
using $N_\ell = N_02^{\gamma\ell}$ samples. When refining from level $k$ to $k+1$ in \Cref{alg:ad-samp} we take the $N_k$ samples used to sample $g_k$ and add another $N_k\p*{2^\gamma-1}$ independent samples to form the sample of $g_{k+1}$. We assume $\sigma_\ell^2$ is given by the sample variance of the samples used to generate $g_\ell$,  other choices of $\sigma_\ell$ are discussed in \Cref{app:constsigma}. We do not rigorously prove that \Cref{assumpt:q-moments,assumpt:delta-bound} hold, but provide motivation for their validity with $\beta=\gamma$. The discussion is supported by numerical experiments at the end of this section.

Defining
\[
\begin{aligned}
	S_{N_\ell}(Y) &\defeq \sum_{n=1}^{N_\ell} \p*{X^{(n)}(Y) - \E*{X|Y}}\\
	V_{N_\ell}^2(Y) &\defeq \sum_{n=1}^{N_\ell}\p*{X^{(n)}(Y) - \E*{X|Y}}^2,
\end{aligned}
\]
we can express $Z_\ell$ \eqref{eqn:Z_ell} as
\begin{equation}
\label{eqn:Z-nest}
\abs{Z_\ell} = N_0^{-1/2}\abs*{\frac{S_{N_\ell}(Y)}{V_{N_\ell}(Y)}}\frac{1}{\sqrt{1-N_\ell^{-1}\abs*{S_{N_\ell}(Y)/V_{N_\ell}(Y)}^2}}.
\end{equation}
The ratio $\frac{S_{N_\ell}(Y)}{V_{N_\ell}(Y)}$ is known as a self normalised sum \cite{pena:selfnormalizedsums}. In particular, by the strong law of large numbers, $\abs{Z_\ell}/\abs*{\frac{S_{N_\ell}(Y)}{V_{N_\ell}(Y)}}\to N_0^{-1/2}$ almost surely as $\ell\to\infty$ \cite[Proposition 1]{Griffin:t-tightness}. Provided the self normalised sum is stochastically bounded uniformly in $Y$ \cite[Corollary 2.10]{Gine:asymptotic-t}, we have \cite[Theorem 2.5]{Gine:asymptotic-t} 
\[
\sup_\ell \E*{\abs*{\frac{S_{N_\ell}(Y)}{V_{N_\ell}(Y)}}^q} < \infty, \quad \text{ for all } q<\infty.
\]
We can then use the asymptotic equivalence of $\abs*{\frac{S_{N_\ell}(Y)}{V_{N_\ell}(Y)}}$ and $\abs{Z_\ell}$ and H\"olders inequality to show that for any $q<\infty$ there exists $N_0\ge 0$ such that  \Cref{assumpt:q-moments} holds with $g_\ell$ given by \eqref{eqn:gell-nest} for $\beta = \gamma$. Note that $N_0$ may depend on $q$. For example, Student's $t$-distribution can be written in the form \eqref{eqn:Z-nest} up to a constant factor, and has finite $q$-moments only for $q<N_\ell$. \Cref{assumpt:delta-bound} would follow by assuming a similar condition on the distribution of $\abs{g}/\sqrt{\var{X|Y}}$ as in \cite{GilesHajiAli:2018}. We leave a rigorous justification of this fact to future work.\\

For comparison with \cite{GilesHajiAli:2018} we consider the model problem used there, given by
\[
X = \frac{2}{100}\p*{Y^2 - Y_0^2} + \frac{7\sqrt{2}}{25}YY_1 - 0.0805
\]
for $Y, Y_0, Y_1 \iid \mathcal{N}(0,1)$. For this problem, one has $\E*{\H{\E*{X|Y}}} \approx 0.025$. In \cite{GilesHajiAli:2018} the use of additional measures such as antithetic sampling of $\dhl$ is considered to reduce the total cost of MLMC by a constant factor independent of the error bound $\tol$. Such approaches can easily be altered to suit the present setup. We emphasize that the key difference between \Cref{alg:ad-samp} and the adaptive scheme in \cite{GilesHajiAli:2018} for this setup is that here we do not re-sample all values of $X^{(n)}(Y)$ when refining to higher levels and the samples generated in \Cref{alg:ad-samp} are used to form our estimate of $g_{\ell+\eta_\ell}$, in contrast to \cite{GilesHajiAli:2018,GilesHajiAli:2019sampling}.

The MLMC estimator is computed using non-adaptive sampling with $\gamma = 1, 2$ and adaptive sampling as in \Cref{alg:ad-samp} with $\gamma =1$ and  $r = 1.95, \theta = 1$ to fulfill the assumptions of \Cref{prop:work-rate} and \Cref{thm:var-bound} in the limit $q\to \infty$. The confidence constant is taken to be $c=3/\sqrt{N_0}$, which aligns with the corresponding parameter in \cite{GilesHajiAli:2018}. For each method, we plot $W_\ell, V_\ell$ and $E_\ell$ versus $\ell$.  Results are shown in \Cref{fig:nestexpect}. The top left plot shows $W_\ell$ vs $\ell$. By construction, the work per level for the non-adaptive schemes is a deterministic term proportional to $2^{\gamma\ell}$. For the adaptive scheme and $\ell>2$, we observe $W_\ell\propto 2^{\ell}$, increased by a constant factor over the non-adaptive sampler with $\gamma=1$. This agrees with \Cref{prop:work-rate}, which states that adaptive sampling does not affect the rate at which $W_\ell$ increases. The variance $V_\ell$ per level is shown in the top right plot of \Cref{fig:nestexpect}. Following from \Cref{thm:standard-sampling-var} with $q\to\infty$, the non-adaptive samplers have variance decreasing at rate  $\beta/2 \approx \gamma/2$. Instead, the adaptive sampler matches the variance seen for the non-adaptive method with $\gamma = 2$, as predicted by \Cref{thm:var-bound}. Moreover, in the bottom left plot of \Cref{fig:nestexpect}, we see that the bias reduction rates guaranteed from \Cref{thm:standard-bias} and \Cref{prop:weak-err-ad} with $\alpha = \beta$.  In other words, the adaptive scheme exhibits the same variance and bias reduction rate as the non-adaptive method with $\gamma = 2$, but has expected work per level comparable to the non-adaptive method with $\gamma = 1$.

In the bottom right plot of \Cref{fig:nestexpect}, we display the total work of sampling $\mlmc$ multiplied by $\tol^2$ against the accuracy $\tol$, normalised according to the true value 0.025. The total work is taken as the number of inner samples generated from $X$ for a given $Y$. For each method, we run the algorithm from an estimated optimal starting level as in \cite{GilesHajiAli:2018}. The theoretical complexity rates given by \Cref{cor:st-comp-str} and \Cref{corr:ad-comp,cor:ad-comp-strong} for $q\to\infty$ are plotted as dashed and dotted lines, highlighting the applicability of the preceding theory to this example. At a normalised error of around $10^{-2.5}$, we observe a reduction in cost by a factor of around 7 for adaptive sampling, which is roughly the same as seen in \cite{GilesHajiAli:2018}.  

\begin{figure}[h]
	\centering
	\begin{tabular}{ccc}
		\ref{plot:nestmpgam1} $\gamma = 1$ & \ref{plot:nestmpgam2} $\gamma = 2$ & \ref{plot:nestmpad} Adaptive
	\end{tabular} \\
\begin{tabular}{cc}
	\ref*{plot:nest-log} $\tol^{-2}\p*{\log\tol}^2$ & \ref*{plot:nest-eps}  $\tol^{-2.5}$
\end{tabular}\\
	\begin{tikzpicture}
		\begin{axis}[
		xlabel = $\ell$,
		ylabel = $W_\ell$,
		grid = both,
		ymode = log,
		]
		\addplot+[mark = asterisk, mark size = 3pt, black,] 
		table[x = level, y expr = \thisrow{cost}/\thisrow{M}, col sep = space,]{results/data-nest-mp/ad-levels.csv};
		\label{plot:nestmpad}
		
		\addplot+[mark = diamond, mark size = 3pt, black] 
		table[x = level, y expr = 32*\thisrow{cost}/\thisrow{M}, col sep = space,]{results/data-nest-mp/st-gam1-levels.csv};
		\label{plot:nestmpgam1}
		
		\addplot+[mark = o, mark size = 3pt, black] 
		table[x = level, y expr = 32*\thisrow{cost}/\thisrow{M}, col sep = space,]{results/data-nest-mp/st-gam2-levels.csv};
		\label{plot:nestmpgam2}	
		\end{axis}
		
	\end{tikzpicture}
	\begin{tikzpicture}
	\begin{axis}[
	xlabel = $\ell$,
	ylabel = $V_\ell$,
	grid = both,
	ymode = log,
	]
	\addplot[mark = asterisk, mark size = 3pt] 
	table[x = level, y expr = \thisrow{V}, col sep = space,]{results/data-nest-mp/ad-levels.csv};

	\addplot[mark = diamond, mark size = 3pt] 
	table[x = level, y expr = \thisrow{V}, col sep = space,]{results/data-nest-mp/st-gam1-levels.csv};
	
	\addplot[mark = o, mark size = 4pt] 
	table[x = level, y expr = \thisrow{V}, col sep = space,]{results/data-nest-mp/st-gam2-levels.csv};
	
	\draw(3.2, 0.0002) node[fill = white,]{$\Order{2^{-\ell}}$};
	\addplot[domain = 2:6]{0.01*2^(-x)};
	
	\draw(4.7, 0.04) node[fill = white]{$\Order{2^{-\ell/2}}$};
	\addplot[domain = 2:6]{0.075*2^(-x/2)};

	\end{axis}
	\end{tikzpicture}
	\\
	\begin{tikzpicture}
	\begin{axis}[
	xlabel = $\ell$,
	ylabel = $E_\ell$,
	grid = both,
	ymode = log,
	]
	\addplot[mark = asterisk, mark size = 3pt] 
	table[x = level, y expr = abs(\thisrow{m}), col sep = space,]{results/data-nest-mp/ad-levels.csv};

	\addplot[mark = diamond, mark size = 3pt] 
	table[x = level, y expr = abs(\thisrow{m}), col sep = space,]{results/data-nest-mp/st-gam1-levels.csv};

	\addplot[mark = o, mark size = 3pt] 
	table[x = level, y expr = abs(\thisrow{m}), col sep = space,]{results/data-nest-mp/st-gam2-levels.csv};
	
	\draw(3.2, 0.000004) node[fill = white,]{$\Order{2^{-2\ell}}$};
	\addplot[domain = 2:6]{0.008*2^(-2*x)};
	
	\draw(4.7, 0.04) node[fill = white]{$\Order{2^{-\ell}}$};
	\addplot[domain = 2:6]{0.2*2^(-x)};
	\end{axis}
	\end{tikzpicture}
	\begin{tikzpicture}
	\begin{axis}[
	xlabel = Normalised $\tol$,
	ylabel = $\textnormal{Total Work}\times\tol^2$,
	grid = both,
	ymode = log,
	xmode = log,
	legend pos = north east,
	legend columns = 2,
	ymax = 20000,
	]
	\addplot[style = dashed, domain = 0.003:0.1,]{35*ln(x)*ln(x)};
	\label{plot:nest-log}
	
	\addplot[style = densely dotted, domain = 0.003:0.1,]{500*x^(-0.5)};
	\label{plot:nest-eps}

	\addplot[mark = asterisk, mark size = 3pt] 
	table[x expr = \thisrow{tol}/0.025, y expr = \thisrow{cost}*\thisrow{tol}*\thisrow{tol}, col sep = space,]{results/data-nest-mp/ad-mlmc.csv};

	\addplot[mark = diamond, mark size = 3pt] 
	table[x expr = \thisrow{tol}/0.025, y expr = 32*\thisrow{cost}*\thisrow{tol}*\thisrow{tol}, col sep = space,]{results/data-nest-mp/st-gam1-mlmc.csv};

	\addplot[mark = o, mark size = 3pt] 
	table[x expr = \thisrow{tol}/0.025, y expr = 32*\thisrow{cost}*\thisrow{tol}*\thisrow{tol}, col sep = space,]{results/data-nest-mp/st-gam2-mlmc.csv};
	\end{axis}
	\end{tikzpicture}
	\caption{Results for the nested simulation model problem: Expected work per level $W_\ell$ (top left) taken as expected number of required inner samples from $X|Y$, multilevel correction variance $V_\ell$ (top right) and bias $E_\ell$ (bottom left) versus $\ell$. The total work of MLMC times $\tol^2$ versus $\tol$, normalised by the true value of the solution (bottom right). Results are  given for non-adaptive schemes with $\gamma =1, 2$ and the adaptive scheme with $r=1.95$. }
	\label{fig:nestexpect}
\end{figure}
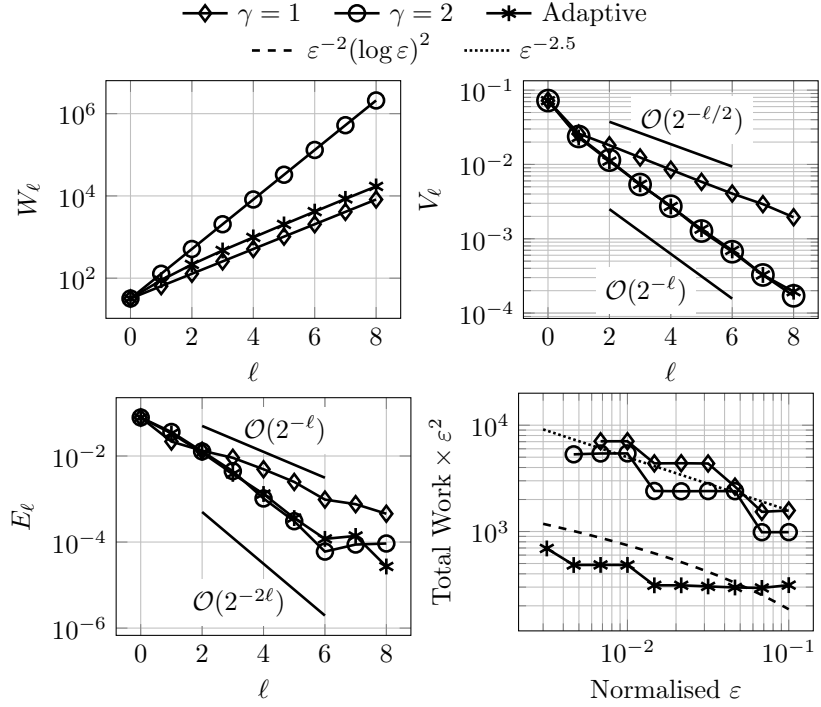  \subsection{Stochastic Differential Equations}
We now consider a setup where $g$ is determined by $d$ stock prices modeled by the geometric Brownian motions
\begin{equation}\label{eqn:sdes}
\text{d}S^{\p{i}}\p{t} = a_iS^{(i)}(t)\text{d}t + b_{i}S^{(i)}(t)\text{d}W^{\p{i}}\p{t}, \quad 1\le i\le d,
\end{equation}
where the one-dimensional Wiener processes take the form
\[
W^{(i)}(t) = \rho W_{\text{com}}(t) + \sqrt{1-\rho^2}W_{\text{ind}}^{(i)}(t),
\]
for a correlation coefficient $\rho\in [0,1]$ and independent  Wiener processes $W_{\text{com}}(t)$ and $  \{W_{\text{ind}}^{(i)}(t) \}_{i=1}^{d}$. Here, $W_\text{com}(t)$ models common market noise shared by all of the stocks whereas $W_\text{ind}^{(i)}(t)$ represents idiosyncratic noise of stock $i$ only. Specifically, we set
\begin{equation}\label{eqn:sde-prob}
g = \frac{1}{d}\sum_{i=1}^{d} S^{(i)}(1) - K,
\end{equation}
so that $\prob{g>0}$ reflects the non-discounted price of a so-called digital option, a financial derivative which pays a unit price at time 1 if the mean value of the stocks exceeds $K$, and nothing otherwise. We assume $0.05\le a_i\le0.15, 0.01\le b_i\le0.4$ and $0.9\le S^{(i)}(0)\le 1.1$ are constant. Unless otherwise stated, we uniformly sample each of these parameters before the MLMC computation. \\

The approximate samples, $g_\ell$, are computed using either Euler-Maruyama or Milstein discretisation of the underlying SDEs with step size $h_\ell \lesssim 2^{-\gamma\ell}$. When adaptively refining samples of $g_\ell$ we use the Brownian Bridge construction to refine the sampled Wiener paths conditioned on their existing points \cite[Section 1.8]{Kloeden:1999}. Specifically, given $W_{nh_\ell}$ and $W_{(n+1)h_\ell}$ we can sample the Wiener process at time $\p*{n+1/2}h_\ell$ using
\[
	W_{\p*{n+1/2}h_\ell} \overset{\text{d}}{=} \frac{W_{nh_\ell} + W_{(n+1)h_\ell}}{2} + \sqrt{\frac{h_\ell}{4}}\zeta_n, \quad \text{for } \zeta_n \iid \mathcal{N}(0,1).
\]
This procedure can be used recursively to refine from step-size $h_k$ to $h_{k+1} = 2^{-\gamma}h_k$ within adaptive MLMC. It follows from the strong convergence results of each method that \Cref{assumpt:q-moments} holds for all $q<\infty$ using deterministic, constant $\sigma_\ell$ and $\beta = \gamma$ for Euler-Maruyama \cite[Theorem 10.2.2]{Kloeden:1999} and $\beta = 2\gamma$ for the Milstein scheme \cite[Theorem 10.3.5]{Kloeden:1999}. That \Cref{assumpt:delta-bound} holds for constant $\sigma_\ell =\sigma$ can be shown for Euler-Maruyama using \cite[Theorem 2.3]{Gobet:2008} to bound the difference in the densities of $g_\ell$ and $g$. The result then follows since $g$ has a bounded density \cite[Theorem 10.9.11]{kuo:stoch-integration}. Moreover, from the weak convergence results in \cite{Kloeden:1999}, we know that \Cref{assumpt:weak-err-nondiscont} holds for both SDE schemes with $\alpha = \beta$. Bounding the variance of $S^{(i)}(1)$ for all instances of $a_i, b_i, S^{(i)}(0)$ we see that $\var{g} \propto d^{-1}$. Consequently, we choose $\sigma_\ell = \sigma = d^{-1/2}$.  \\

We first consider \eqref{eqn:sde-prob} for a single stock, $d=1$. In \eqref{eqn:sdes}, we take $a_1 = 0.05, b_1 = 0.4$ and $K$ is chosen such that $\E*{\H{g}} = 0.025$. The terms $W_\ell, V_\ell, C_\ell$ of the MLMC estimator is shown in \Cref{fig:sde-1d-levels} for Euler and Milstein approximation of $g$, using non-adaptive and adaptive simulation. For non-adaptive sampling we consider the cases $h_\ell = 2^{-\gamma\ell}$ for $\gamma = 1,2$. The adaptive samplers take $\gamma = \theta = c = 1$. For the Euler-Maruyama scheme we take $r=1.95$. Since $\beta = 2\gamma$ for the Milstein scheme, \Cref{thm:var-bound} allows us to take larger values of $r$ and we set $r = 10$ here.  $W_\ell$ is taken as the expected number of SDE steps required from the fine and coarse estimator at level $\ell$. By construction, the work for both non-adaptive samplers is proportional to $2^{\gamma\ell}$. The adaptive schemes have $W_\ell \lesssim 2^{\ell}$ following \Cref{prop:work-rate}. Note that the expected work per sample is slightly lower at each level for adaptive sampling using the Milstein scheme, as the larger value $r = 10$ requires fewer refinements to be made. For the Euler-Maruyama samplers we see $V_\ell \lesssim 2^{-\gamma\ell/2}$ for the non-adaptive and $V_\ell\lesssim 2^{-\ell}$ for the adaptive sampler, as expected for $\beta=\gamma$. These bounds are all squared when using the Milstein scheme since $\beta = 2\gamma$ in this case. Moreover, we observe $E_\ell \lesssim 2^{-\gamma\ell}$ for the non-adaptive sampler for both SDE schemes, with $E_\ell \lesssim 2^{-2\ell}$ for the adaptive samplers. This provides evidence that the stronger results following from \Cref{assumpt:weak-err-st,assumpt:weak-err-nondiscont,assumpt:weak-err-adapt} hold for the Euler-Maruyama scheme. For the Milstein scheme, the observed rates of $E_\ell$ follow immediately from the equivalent rates on $V_\ell$ and \Cref{remark:weakbound}.
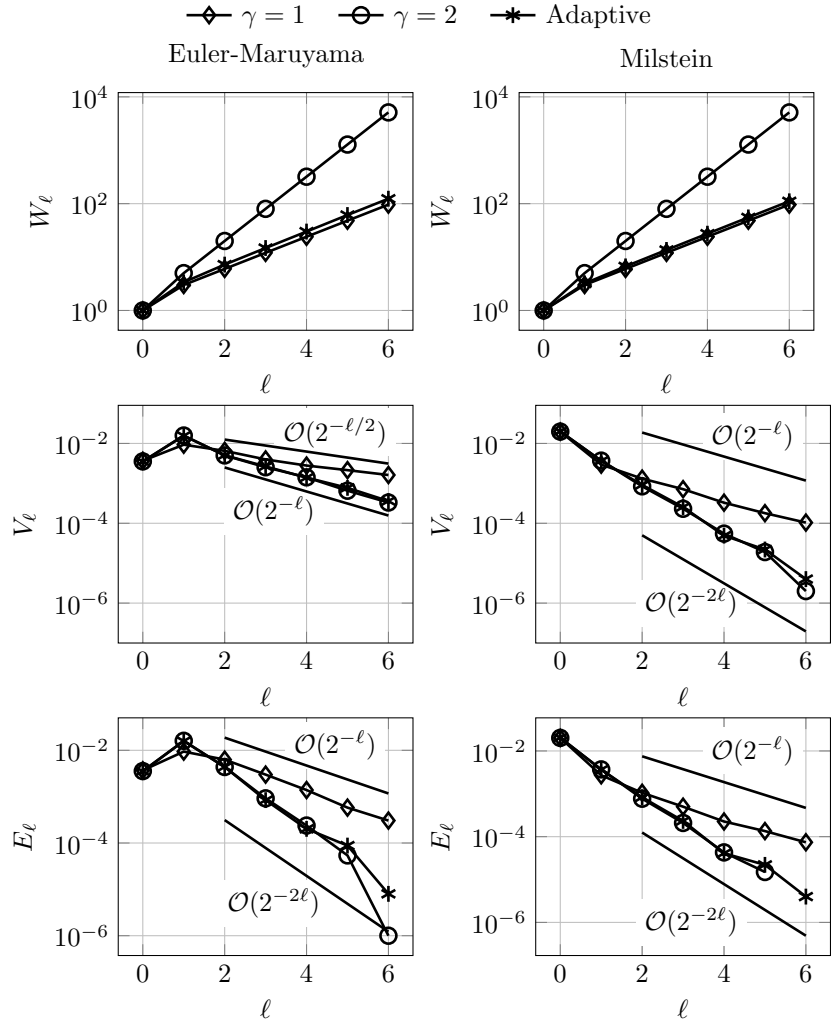
\begin{figure}
	\centering
	\begin{center}
	\begin{tabular}{ccc}
		\ref*{plot:gam1-euler} $\gamma = 1$ & \ref*{plot:gam2-euler} $\gamma = 2$ & \ref*{plot:ad-euler} Adaptive
	\end{tabular}\\
	\begin{tikzpicture}
		\begin{axis}[
		xlabel = $\ell$,
		ylabel = $W_\ell$,
		grid = both,
		ymode = log,
		title = Euler-Maruyama
		]
		\addplot+[mark = asterisk, mark size = 3pt, black,] 
		table[x = level, y expr = \thisrow{cost}/\thisrow{M}, col sep = space,]{results/data-sum-strike/ad-euler-levels.csv};
		\label{plot:ad-euler}
		
		\addplot+[mark = diamond, mark size = 3pt, black] 
		table[x = level, y expr = \thisrow{cost}/\thisrow{M}, col sep = space,]{results/data-sum-strike/st-gam1-euler-levels.csv};
		\label{plot:gam1-euler}
		
		\addplot+[mark = o, mark size = 3pt, black] 
		table[x = level, y expr = \thisrow{cost}/\thisrow{M}, col sep = space,]{results/data-sum-strike/st-gam2-euler-levels.csv};
		\label{plot:gam2-euler}	
		\end{axis}
	\end{tikzpicture}
	\begin{tikzpicture}
	\begin{axis}[
	xlabel = $\ell$,
	ylabel = $W_\ell$,
	grid = both,
	ymode = log,
	title = Milstein
	]
	\addplot+[mark = asterisk, mark size = 3pt, black,] 
	table[x = level, y expr = \thisrow{cost}/\thisrow{M}, col sep = space,]{results/data-sum-strike/ad-milstein-levels.csv};
	\label{plot:ad-mil}
	
	\addplot+[mark = diamond, mark size = 3pt, black] 
	table[x = level, y expr = \thisrow{cost}/\thisrow{M}, col sep = space,]{results/data-sum-strike/st-gam1-milstein-levels.csv};
	\label{plot:gam1-mil}
	
	\addplot+[mark = o, mark size = 3pt, black] 
	table[x = level, y expr = \thisrow{cost}/\thisrow{M}, col sep = space,]{results/data-sum-strike/st-gam2-milstein-levels.csv};
	\label{plot:gam2-mil}	
	\end{axis}
	\end{tikzpicture}\\
	\begin{tikzpicture}
	\begin{axis}[
	xlabel = $\ell$,
	ylabel = $V_\ell$,
	grid = both,
	ymode = log,
	ymax = 0.09,
	ymin = 0.0000001
	]
	\draw(3.2, 0.0002) node[fill = white,]{$\Order{2^{-\ell}}$};
	\addplot[domain = 2:6]{0.01*2^(-x)};
	
	\draw(4.7, 0.019) node[fill = white]{$\Order{2^{-\ell/2}}$};
	\addplot[domain = 2:6]{0.025*2^(-x/2)};
	\addplot+[mark = asterisk, mark size = 3pt, black,] 
	table[x = level, y expr = \thisrow{V}, col sep = space,]{results/data-sum-strike/ad-euler-levels.csv};

	\addplot+[mark = diamond, mark size = 3pt, black] 
	table[x = level, y expr = \thisrow{V}, col sep = space,]{results/data-sum-strike/st-gam1-euler-levels.csv};

	\addplot+[mark = o, mark size = 3pt, black] 
	table[x = level, y expr = \thisrow{V}, col sep = space,]{results/data-sum-strike/st-gam2-euler-levels.csv};

	\end{axis}
	\end{tikzpicture}
	\begin{tikzpicture}
	\begin{axis}[
	xlabel = $\ell$,
	ylabel = $V_\ell$,
	grid = both,
	ymode = log,
	ymax = 0.09,
	ymin = 0.0000001
	]
	\addplot+[mark = asterisk, mark size = 3pt, black,] 
	table[x = level, y expr = \thisrow{V}, col sep = space,]{results/data-sum-strike/ad-milstein-levels.csv};

	\addplot+[mark = diamond, mark size = 3pt, black] 
	table[x = level, y expr = \thisrow{V}, col sep = space,]{results/data-sum-strike/st-gam1-milstein-levels.csv};

	\addplot+[mark = o, mark size = 3pt, black] 
	table[x = level, y expr = \thisrow{V}, col sep = space,]{results/data-sum-strike/st-gam2-milstein-levels.csv};
	
	\draw(3.2, 0.000001) node[fill = white,]{$\Order{2^{-2\ell}}$};
	\addplot[domain = 2:6]{0.0008*2^(-2*x)};
	
	\draw(4.7, 0.015) node[fill = white]{$\Order{2^{-\ell}}$};
	\addplot[domain = 2:6]{0.075*2^(-x)};
	\end{axis}
	\end{tikzpicture}
	\begin{tikzpicture}
	\begin{axis}[
	xlabel = $\ell$,
	ylabel = $E_\ell$,
	grid = both,
	ymode = log,
	]
	\addplot[mark = asterisk, mark size = 3pt] 
	table[x = level, y expr = abs(\thisrow{m}), col sep = space,]{results/data-sum-strike/ad-euler-levels.csv};

	\addplot[mark = diamond, mark size = 3pt] 
	table[x = level, y expr = abs(\thisrow{m}), col sep = space,]{results/data-sum-strike/st-gam1-euler-levels.csv};

	\addplot[mark = o, mark size = 3pt] 
	table[x = level, y expr = abs(\thisrow{m}), col sep = space,]{results/data-sum-strike/st-gam2-euler-levels.csv};
	
	\draw(3.2, 0.000005) node[fill = white,]{$\Order{2^{-2\ell}}$};
	\addplot[domain = 2:6]{0.005*2^(-2*x)};
	
	\draw(4.7, 0.013) node[fill = white]{$\Order{2^{-\ell}}$};
	\addplot[domain = 2:6]{0.075*2^(-x)};
	\end{axis}
	\end{tikzpicture}
	\begin{tikzpicture}
	\begin{axis}[
	xlabel = $\ell$,
	ylabel = $E_\ell$,
	grid = both,
	ymode = log,
	]
	\addplot[mark = asterisk, mark size = 3pt] 
	table[x = level, y expr = abs(\thisrow{m}), col sep = space,]{results/data-sum-strike/ad-milstein-levels.csv};

	\addplot[mark = diamond, mark size = 3pt] 
	table[x = level, y expr = abs(\thisrow{m}), col sep = space,]{results/data-sum-strike/st-gam1-milstein-levels.csv};

	\addplot[mark = o, mark size = 3pt] 
	table[x = level, y expr = abs(\thisrow{m}), col sep = space,]{results/data-sum-strike/st-gam2-milstein-levels.csv};
	
	\draw(3.2, 0.000001) node[fill = white,]{$\Order{2^{-2\ell}}$};
	\addplot[domain = 2:6]{0.002*2^(-2*x)};
	
	\draw(4.7, 0.01) node[fill = white]{$\Order{2^{-\ell}}$};
	\addplot[domain = 2:6]{0.03*2^(-x)};
	\end{axis}
	\end{tikzpicture}
\end{center}
\caption{$W_\ell$ (top), $V_\ell$ (middle) and $E_\ell$ (bottom) vs $\ell$ for the one dimensional SDE problem using Euler-Maruyama (left) and Milstein (right) simulation of the underlying SDE. We consider non-adaptive samplers with $\gamma=1, 2$ and adaptive sampling with $r=1.95$ for the Euler scheme and $r=10$ for Milstein simulation.}
\label{fig:sde-1d-levels}
\end{figure}
 
For the  non-adaptive schemes with $\gamma = 2$ and the adaptive samplers, we compute $\mlmc$ for various error tolerances $\tol$. In \Cref{fig:sde-1d-mlmc}, we plot the total work (taken as overall number of SDE time-steps) times $\tol^2$ versus $\tol$, normalized by the true solution. For the non-adaptive, Euler-Maruyama sampler, we observe a rate close to $\tol^{-2.5}$ as predicted by \Cref{cor:st-comp-weak}. This is reduced to $\tol^{-2}\p*{\log\tol}^2$ using adaptive sampling with the Euler-Maruyama scheme as in \Cref{corr:ad-comp} for $q\to \infty$. Note that we observe the same rate without adaptive sampling when using the Milstein scheme, by \Cref{cor:st-comp-weak} since $\beta = 2\gamma$. The cost is slightly than for the adaptive Euler-Maruyama sampler, since the variance rate $V_\ell\lesssim 2^{-\gamma\ell}$ is observed without refining the samples beyond level $\ell$ at all.  However, we obtain the best results by combining the Milstein scheme with adaptive MLMC. In this case we observe complexity very close to $\tol^{-2}$ as in \Cref{corr:ad-comp}.\\

\begin{figure}[h]
	\centering 
	\begin{center}
	\begin{minipage}{0.4\linewidth}
		\begin{center}
		Euler-Maruyama\\
		\begin{tabular}{cc}
			\ref*{plot:gam2-e-mlmc} $\gamma = 2$ & \ref*{plot:ad-e-mlmc} $r = 1.95$
		\end{tabular}
		\end{center}
	\end{minipage}
	\begin{minipage}{0.4\linewidth}
		\begin{center}
		Milstein \\
		\begin{tabular}{ccc}
			\ref*{plot:gam2-m-mlmc} $\gamma = 2$ &  \ref*{plot:ad-m-mlmc} $r = 10$
		\end{tabular}
		\end{center}
	\end{minipage}
	\end{center}
	\begin{tikzpicture}[trim axis left, trim axis right]
	\begin{axis}[
	xlabel = Normalised $\tol$,
	ylabel = $\textnormal{Total Work}\times\tol^2$,
	grid = both,
	ymode = log,
	xmode = log,
	legend pos = north east,
	legend columns = 2,
	ymax = 700,
	width = \linewidth,
	height = 0.4\linewidth,
	]
	\addplot[style = dashed, domain = 0.0001:0.1,]{0.2*ln(x)*ln(x)};
	\addlegendentry{$\tol^{-2}\p*{\log\tol}^2$}
	
	\addplot[style = densely dotted, domain = 0.0001:0.1,]{5*x^(-0.5)};
	\addlegendentry{$\tol^{-2.5}$}
	
	\addplot+[mark = asterisk, mark size = 3pt,black] 
	table[x expr = \thisrow{tol}/0.025, y expr = \thisrow{cost}*\thisrow{tol}*\thisrow{tol}, col sep = space,]{results/data-sum-strike/ad-euler-mlmc.csv};
	\label{plot:ad-e-mlmc}

	\addplot+[mark = o, mark size = 3pt,black] 
	table[x expr = \thisrow{tol}/0.025, y expr = \thisrow{cost}*\thisrow{tol}*\thisrow{tol}, col sep = space,]{results/data-sum-strike/st-gam2-euler-mlmc.csv};
	\label{plot:gam2-e-mlmc}

	\addplot+[style = solid, mark = square, mark size = 3pt,black] 
	table[x expr = \thisrow{tol}/0.025, y expr = \thisrow{cost}*\thisrow{tol}*\thisrow{tol}, col sep = space,]{results/data-sum-strike/ad-milstein-mlmc.csv};
	\label{plot:ad-m-mlmc}
	
	\addplot[ style = solid, mark size = 3pt,black,mark = *,] 
	table[x expr = \thisrow{tol}/0.025, y expr = \thisrow{cost}*\thisrow{tol}*\thisrow{tol}, col sep = space,]{results/data-sum-strike/st-gam2-milstein-mlmc.csv};
	\label{plot:gam2-m-mlmc}
	\end{axis}
	\end{tikzpicture}
	\caption{The total work times $\tol^2$  of MLMC for the one-dimensional SDE problem versus $\tol$, normalised by the true value of the solution. Results are shown for both Euler-Maruyama and Milstein simulation of the underlying SDE. For each method we show results for non-adaptive sampling with $\gamma = 2$ and adaptive sampling with $r = 1.95$ for the Euler scheme and $r = 10$ for Milstein simulation.  }
	\label{fig:sde-1d-mlmc}
\end{figure}
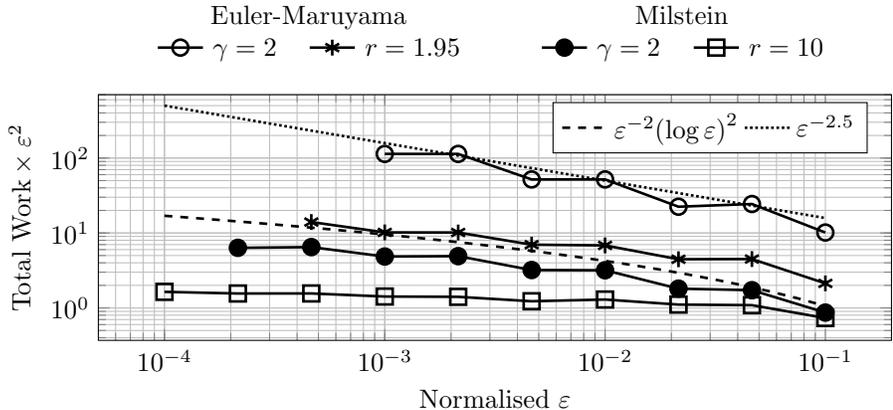
 While the Milstein scheme offers better performance for the one-dimensional problem, this method becomes unfeasible in large dimensions, requiring approximation of double It\^{o} integrals at each step. On the other hand, we still observe a significant improvement by combining adaptive sampling with the Euler-Maruyama scheme, as seen in the one-dimensional problem. To illustrate how this performance translates to higher dimensional problems we consider the case $d=10$, with correlation coefficient $\rho = 0.2$. Euler-Maruyama simulation of $g$ is used in non-adaptive MLMC for $\gamma = 1,2$ and adaptive sampling with $\gamma = 1$ and $r=1.95$. The parameter $K$ is again tuned so that $\E*{\H{g}}\approx 0.025$. The values of $W_\ell, V_\ell, E_\ell$ for each method are plotted against $\ell$ in \Cref{fig:d10-results}. We observe a slight increase to each term.  However, the rates of each parameter are all equivalent to those seen before, and the complexity of MLMC is unaffected by the increased dimensionality. To emphasize this point, \Cref{fig:d10-results} also displays the total work $(\times \tol^2)$ against $\tol$. In particular, we again observe $\tol^{-2}\p*{\log\tol}^2$ complexity for the adaptive sampler, as opposed to $\tol^{-2.5}$ for the non-adaptive samplers.
\begin{figure}[h]
	\centering 
	\begin{tabular}{ccc}
		\ref*{plot:d10-gam1} $\gamma = 1$ & \ref*{plot:d10-gam2} $\gamma = 2$ & \ref*{plot:d10-ad} Adaptive
	\end{tabular}\\
	\begin{tabular}{cc}
		\ref*{plot:d10-log} $\tol^{-2}\p*{\log\tol}^2$ & \ref*{plot:d10-eps}  $\tol^{-2.5}$
	\end{tabular}\\
	\begin{tikzpicture}
	\begin{axis}[
	xlabel = $\ell$,
	ylabel = $W_\ell$,
	grid = both,
	ymode = log,
	]
	\addplot+[mark = asterisk, mark size = 3pt, black,] 
	table[x = level, y expr = \thisrow{cost}/\thisrow{M}, col sep = space,]{results/data-sum-strike/ad-d10-euler-levels.csv};
	\label{plot:d10-ad}

	\addplot+[mark = diamond, mark size = 3pt, black] 
	table[x = level, y expr = \thisrow{cost}/\thisrow{M}, col sep = space,]{results/data-sum-strike/st-gam1-d10-euler-levels.csv};
	\label{plot:d10-gam1}

	\addplot+[mark = o, mark size = 3pt, black] 
	table[x = level, y expr = \thisrow{cost}/\thisrow{M}, col sep = space,]{results/data-sum-strike/st-gam2-d10-euler-levels.csv};
	\label{plot:d10-gam2}

	\end{axis}
	\end{tikzpicture}
	\begin{tikzpicture}
	\begin{axis}[
	xlabel = $\ell$,
	ylabel = $V_\ell$,
	grid = both,
	ymode = log,
	]
	\addplot+[mark = asterisk, mark size = 3pt, black,] 
	table[x = level, y expr = \thisrow{V}, col sep = space,]{results/data-sum-strike/ad-d10-euler-levels.csv};

	\addplot+[mark = diamond, mark size = 3pt, black] 
	table[x = level, y expr = \thisrow{V}, col sep = space,]{results/data-sum-strike/st-gam1-d10-euler-levels.csv};

	\addplot+[mark = o, mark size = 3pt, black] 
	table[x = level, y expr = \thisrow{V}, col sep = space,]{results/data-sum-strike/st-gam2-d10-euler-levels.csv};
	
	\draw(3.2, 0.0003) node[fill = white,]{$\Order{2^{-\ell}}$};
	\addplot[domain = 2:6]{0.01*2^(-x)};
	
	\draw(4.7, 0.012) node[fill = white]{$\Order{2^{-\ell/2}}$};
	\addplot[domain = 2:6]{0.03*2^(-x/2)};
	\end{axis}
	\end{tikzpicture}\\
	\begin{tikzpicture}
	\begin{axis}[
	xlabel = $\ell$,
	ylabel = $E_\ell$,
	grid = both,
	ymode = log,
	]
	\addplot[mark = asterisk, mark size = 3pt] 
	table[x = level, y expr = abs(\thisrow{m}), col sep = space,]{results/data-sum-strike/ad-d10-euler-levels.csv};

	\addplot[mark = diamond, mark size = 3pt] 
	table[x = level, y expr = abs(\thisrow{m}), col sep = space,]{results/data-sum-strike/st-gam1-d10-euler-levels.csv};

	\addplot[mark = o, mark size = 3pt] 
	table[x = level, y expr = abs(\thisrow{m}), col sep = space,]{results/data-sum-strike/st-gam2-d10-euler-levels.csv};
	
	\draw(3.2, 0.000005) node[fill = white,]{$\Order{2^{-2\ell}}$};
	\addplot[domain = 2:6]{0.005*2^(-2*x)};
	
	\draw(4.7, 0.013) node[fill = white]{$\Order{2^{-\ell}}$};
	\addplot[domain = 2:6]{0.075*2^(-x)};
	\end{axis}
	\end{tikzpicture}
	\begin{tikzpicture}
	\begin{axis}[
	xlabel = Normalised $\tol$,
	ylabel = $\textnormal{Total Work}\times\tol^2$,
	grid = both,
	ymode = log,
	xmode = log,
	legend pos = north east,
	legend columns = 2,
	ymax = 200,
	]
	\addplot[style = dashed, domain = 0.001:0.1,]{0.5*ln(x)*ln(x)};
	\label{plot:d10-log}
	
	\addplot[style = densely dotted, domain = 0.001:0.1,]{5*x^(-0.5)};
	\label{plot:d10-eps}
	
	\addplot+[mark = asterisk, mark size = 3pt,black] 
	table[x expr = \thisrow{tol}/0.025, y expr = \thisrow{cost}*\thisrow{tol}*\thisrow{tol}, col sep = space,]{results/data-sum-strike/ad-d10-euler-mlmc.csv};

	\addplot+[mark = diamond, mark size = 3pt,black] 
	table[x expr = \thisrow{tol}/0.025, y expr = \thisrow{cost}*\thisrow{tol}*\thisrow{tol}, col sep = space,]{results/data-sum-strike/st-gam1-d10-euler-mlmc.csv};
	
	\addplot+[mark = o, mark size = 3pt,black] 
	table[x expr = \thisrow{tol}/0.025, y expr = \thisrow{cost}*\thisrow{tol}*\thisrow{tol}, col sep = space,]{results/data-sum-strike/st-gam2-d10-euler-mlmc.csv};

	\end{axis}
	\end{tikzpicture}
	\caption{Results for 10-dimensional digital option: Expected work per level $W_\ell$ (top left) taken as expected number of time-steps required to sample $g_\ell$ and $g_{\ell-1}$,  $V_\ell$(top right) and $E_\ell$ (bottom left) versus $\ell$. The total work of MLMC times $\tol^2$ against $\tol$, normalised by the true value of the solution is shown bottom right. Results are  given for non-adaptive schemes with $\gamma =1, 2$ and the adaptive scheme with $r=1.95$. }
	\label{fig:d10-results}
\end{figure}
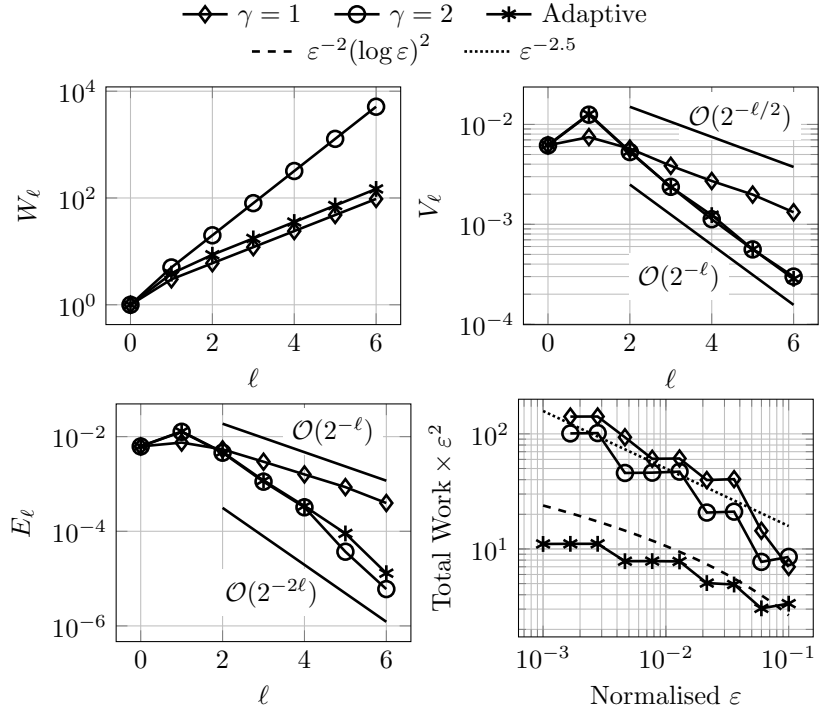

  \section{Conclusion}
We presented an efficient, general, MLMC framework for computing \eqref{eqn:heaviside-problem}. The inherent discontinuity in the problem leads to high complexities for standard MLMC methods. We are able to improve the performance of MLMC using adaptive sampling based on the methods for nested simulation in \cite{GilesHajiAli:2018}. The approach used is applicable to a wide class of problems and is often able to recover the canonical $\tol^{-2}$ MLMC complexity. The theory is supported by numerical experiments for nested simulation and SDEs. It is straightforward to extend the methods considered here to compute expectations of discontinuous functionals other than $\I{G\in\Omega}$ or $\H{g}$. For example, in barrier option pricing, the payoff can be written as a product of a smooth/Lipschitz function with an indicator function. We will consider applications to other financial derivatives and risk measures in future work.

The next step in this research is  to investigate the use of Multilevel Quasi-Monte Carlo methods to reduce the cost even further, potentially to an $\Order{\tol^{-1}}$ operation. 

\subsubsection*{Acknowledgments}
We wish to acknowledge the helpful input and feedback received from Michael B. Giles throughout the development of this paper.

A-L. Haji-Ali was supported by a Sabbatical Grant from the Royal Society of Edinburgh. 

J. Spence was supported by EPSRC grant EP/S023291/1.

\bibliographystyle{abbrv}
\small

\appendix
\normalsize
\section{Discussion of \Cref{assumpt:weak-err-adapt}}\label{app:weakerr}
This appendix discusses the necessity of \Cref{assumpt:weak-err-adapt} to observe better convergence rates for $E_\ell$ due to adaptive sampling. Evidence is given in the form of a numerical experiment when \Cref{assumpt:weak-err-adapt} is false. In particular, we consider \eqref{eqn:heaviside-problem} where $g \sim \mathcal{N}(-\mu, 1)$ for $\mu >0$ chosen such that $\prob{g>0} = 0.025$. Approximations $g_\ell$ are artificially sampled through 
\[
	g_\ell = g + 2^{-\ell\gamma/2}\p*{2^{-\ell\gamma/2} + \zeta^2 - 1}, \quad \zeta\sim\mathcal{N}(0,1). 
\]
We assume an artificial cost of $2^{\gamma\ell}$ in sampling $g_\ell$. It follows that \Cref{assumpt:q-moments,assumpt:delta-bound} hold for $\beta = \gamma$, $\sigma_\ell \equiv \sigma$ constant and any $q<\infty$.  From \eqref{eqn:Z_ell},
\[
Z_\ell = \frac{2^{-\ell\gamma/2} + \zeta^2 - 1}{\sigma},
\]
and so \Cref{assumpt:weak-err-nondiscont} holds for $\beta=\gamma$.  However, \Cref{assumpt:weak-err-adapt} is false since $\prob{Z_\ell>0} \to 1$ as $\ell\to\infty$. Thus, the hypothesis of \Cref{prop:weak-err-ad} is false.\\

\Cref{fig:weakerr-results} plots $E_\ell$ as in \eqref{weak-err} for non-adaptive sampling with $\gamma=1,2$ and for adaptive sampling with $r=1.95, \gamma=\theta=c=1$ and $\sigma_\ell = \sigma = \sqrt{3}$. We use the same sample of $\zeta$ for the fine and coarse levels in each MLMC sample, and when adaptively refining samples. For all methods we see $E_\ell \propto 2^{-\gamma\ell}$. Since $\beta=\gamma$ this agrees with \Cref{thm:standard-bias} for the non-adaptive samplers. However, \Cref{prop:weak-err-ad} concludes that $E_\ell \lesssim 2^{-2\ell}$ for the adaptive sampler in this setup, in contrast to the $\Order{2^{-\ell}}$ convergence seen here.

\begin{figure}[h]
	\centering 
	\begin{tabular}{ccc}
		\ref*{plot:d10-gam1} $\gamma = 1$ & \ref*{plot:d10-gam2} $\gamma = 2$ & \ref*{plot:d10-ad} Adaptive
	\end{tabular}\\
	\begin{tikzpicture}[trim axis left]
	\begin{axis}[
	xlabel = $\ell$,
	ylabel = $E_\ell$,
	grid = both,
	ymode = log,
	]
	\addplot[mark = asterisk, mark size = 3pt] 
	table[x = level, y expr = abs(\thisrow{m}), col sep = space,]{results/data-basic/ad-levels.csv};

	\addplot[mark = diamond, mark size = 3pt] 
	table[x = level, y expr = abs(\thisrow{m}), col sep = space,]{results/data-basic/st-gam1-levels.csv};

	\addplot[mark = o, mark size = 3pt] 
	table[x = level, y expr = abs(\thisrow{m}), col sep = space,]{results/data-basic/st-gam2-levels.csv};
	
	\draw(3.2, 0.00003) node[fill = white,]{$\Order{2^{-2\ell}}$};
	\addplot[domain = 2:6]{0.04*2^(-2*x)};
	
	\draw(4.7, 0.15) node[fill = white]{$\Order{2^{-\ell}}$};
	\addplot[domain = 2:6]{0.75*2^(-x)};
	\end{axis}
	\end{tikzpicture}
	\caption{$E_\ell$ versus $\ell$ for the artificial problem used as evidence of worse weak error rates for adaptive sampling when \Cref{assumpt:weak-err-adapt} is false. }
	\label{fig:weakerr-results}
\end{figure}
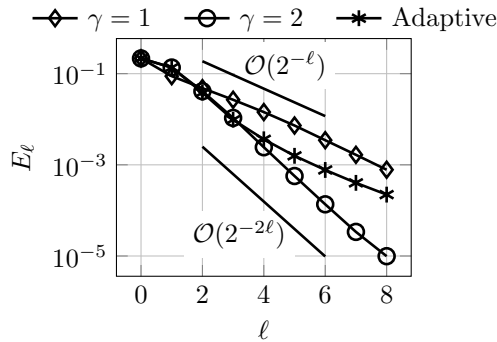

  \section{Different values of $\sigma_\ell$}\label{app:constsigma}
In \Cref{sect:nest-ex} we assumed $\sigma_\ell^2$ was the sample conditional variance of $X$ given $Y$. However, the other examples considered all use constant values $\sigma_\ell\equiv \sigma$. In this appendix we discuss other choices of $\sigma_\ell$ for the nested simulation problem and the impact on the work of MLMC. One option is to take $\sigma_\ell^2 = \var{X|Y}$. However, this information is likely unavailable for all practical applications. Thus, we consider instead the approximation $\sigma_\ell \equiv \sigma$, for some constant $\sigma>0$. For \Cref{assumpt:q-moments} to hold for $\sigma_\ell\equiv\sigma$, we now require bounded moments of $g-g_\ell$, opposed to the self normalized process appearing in \eqref{eqn:Z-nest}, which is a more restrictive condition. \\

We present results using adaptive sampling for the model problem presented in \Cref{sect:nest-ex}. Specifically, we estimate the total work of MLMC with several error tolerances $\tol$ and optimal starting levels as in \Cref{sect:nest-ex} except constant $\sigma_\ell = \sigma$. The total work required with fixed $\sigma$ divided by the work when using the sample standard deviation is shown in \Cref{fig:const-sig}. The solid markers show the value $\sigma^2 = \E{\var{X|Y}}$. When $\sigma \to 0$, the term $\abs{\delta_\ell} $ in \Cref{alg:ad-samp} tends to $\infty$ and we instead use deterministic sampling with $N_\ell = N_02^{\ell}$ inner samples per level in the limiting case. Conversely, when $\sigma \to \infty$, $\abs{\delta_\ell}\to 0$ and the adaptive algorithm reverts to deterministic sampling with $N_\ell = N_02^{2\ell}$ inner samples per level. This leads to expensive pre-asymptotic regimes for large and small $\sigma$ and we observe worse performance as $\tol$ decreases.  That the work is typically lower for large $\sigma$ opposed to small $\sigma$ is consistent with results showing MLMC is more effective when the approximations are refined by a factor of around $7$ per level in this application ($W_\ell \propto 7^\ell$) \cite{Giles2008MLMC}. The only value of $\sigma$ for which we consistently observe equal performance using constant $\sigma_\ell$, opposed to the sample variance, is $\sigma^2 = \E{\var{X|Y}}$. MLMC actually has slightly lower cost for constant $\sigma_\ell$ in this instance, likely due to statistical errors in the sample variance impacting the refinement of certain samples, whereas fixing $\sigma^2 = \E{\var{X|Y}}$ refines samples enough on average to observe the benefits of adaptive sampling. To draw further conclusions, we require more rigorous justification of \Cref{assumpt:q-moments,assumpt:delta-bound} for this problem. 
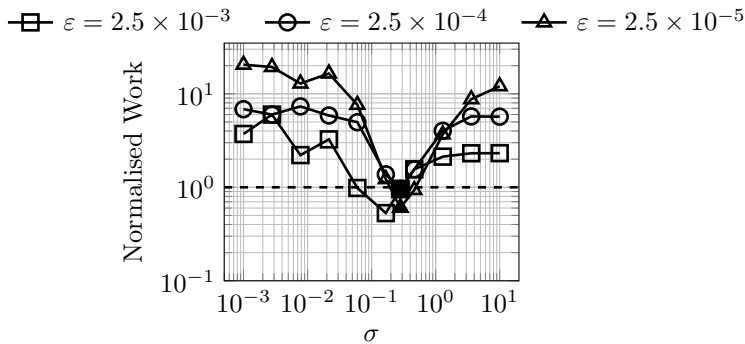
\begin{figure}[h]
	\centering 
	\begin{tabular}{ccc}
		\ref*{plot:weaksmall} $\tol = 2.5\times10^{-3}$ & \ref*{plot:weakmid} $\tol = 2.5\times10^{-4}$ & \ref*{plot:weakbig} $\tol = 2.5\times10^{-5}$  
	\end{tabular}

	\begin{tikzpicture}[trim axis left]
	\begin{axis}[
	xlabel = $\sigma$,
	ylabel = Normalised Work,
	grid = both,
	xmode = log,
	ymode = log,
	ymin = 0.1,
	xmin = 0.0005,
	xmax = 20
	]
	\addplot[mark = square, mark size = 3pt] 
	table[x = sigma, y = worksmall, col sep = comma,]{results/data-nest-mp/const-sig.csv};
	\label{plot:weaksmall}
	
	\addplot[mark = square*, mark size = 3pt]coordinates{(0.281424945589406, 0.958857085609794)};
	
	\addplot[mark = o, mark size = 3pt] 
	table[x = sigma, y = workmid, col sep = comma,]{results/data-nest-mp/const-sig.csv};
	\label{plot:weakmid}
	
	\addplot[mark = *, mark size = 3pt]coordinates{(0.281424945589406, 0.958857085609794)};
	
	\addplot[mark = triangle, mark size = 3pt] 
	table[x = sigma, y = workbig, col sep = comma,]{results/data-nest-mp/const-sig.csv};
	\label{plot:weakbig}
	
	\addplot[mark = triangle*, mark size = 3pt]coordinates{(0.281424945589406, 0.604130247633887)};
	
	\addplot[domain=0.0005:20, dashed]{1};
	\end{axis}
	\end{tikzpicture}
	
	\caption{The work required for the model problem in \Cref{sect:nest-ex} using adaptive MLMC with constant $\sigma_\ell = \sigma$, normalised by the work when $\sigma_\ell$ is the sample variance. The solid markers show $\sigma = \sqrt{\E*{\var{X|Y}}}\approx 0.28$.}
	\label{fig:const-sig}
\end{figure}

  \end{document}